\def\Enn{{\mathbb N}}
\def\Que{{\mathbb Q}}
\def\Zee{{\mathbb Z}}
\def\fmod#1 #2{#1\ ({\rm mod}\ #2)}
\def\Enni{{\Enn_{\infty}}}
\def\had{\odot}
\begin{document}

\markboth{\'Emilie Charlier, Narad Rampersad, and Jeffrey Shallit}{Enumeration and Decidable Properties of Automatic Sequences}

\title{ENUMERATION AND DECIDABLE PROPERTIES OF AUTOMATIC SEQUENCES}

\author{\'EMILIE CHARLIER}
\address{School of Computer Science,
University of Waterloo,
Waterloo, ON  N2L 3G1,
Canada\\
\email{\tt emilie.charlier@ulb.ac.be}}

\author{NARAD RAMPERSAD}
\address{Department of Math/Stats,
University of Winnipeg,
Winnipeg, MB, R3B 2E9,
Canada\\
\email{\tt narad.rampersad@gmail.com}}

\author{JEFFREY SHALLIT}
\address{School of Computer Science,
University of Waterloo,
Waterloo, ON  N2L 3G1,
Canada\\
\email{\tt shallit@cs.uwaterloo.ca}}

\maketitle

\newtheorem{openproblem}[theorem]{Open Problem}

\begin{abstract}
We show that various aspects of $k$-automatic sequences --- such as having an 
unbordered factor of length $n$ --- are both
decidable and effectively enumerable.  
As a consequence it follows that
many related sequences are either $k$-automatic or $k$-regular.
These include many sequences previously studied in the literature,
such as the recurrence function, the appearance function, and the
repetitivity index.   
We also give some new characterizations of
the class of $k$-regular sequences.  Many results extend to
other sequences defined in terms of Pisot numeration
systems.
\end{abstract}

\section{Introduction}

Let ${\bf x} = (a(n))_{n \geq 0}$ be an infinite sequence over a finite
alphabet $\Delta$.  We write ${\bf x}[i] = a(i)$, and we let
${\bf x}[i..i+n-1]$ denote the factor of length $n$ beginning at position $i$.
    
An infinite
sequence $\bf x$ is
said to be {\it $k$-automatic} if it is computable by a finite automaton
taking as input the base-$k$ representation of $n$, and
having $a(n)$ as the output associated with the last state encountered
\cite{Allouche&Shallit:2003b}.

For example, in Figure~\ref{fig1}, we see an automaton generating the
Thue-Morse sequence ${\bf t} = t_0 t_1 t_2 \cdots = {\tt 011010011001} \cdots$.
The input is $n$, expressed in base $2$, and the output is the number contained in
the state last reached.

\begin{figure}[H]
\leavevmode
\def\epsfsize#1#2{1.0#1}
\centerline{\epsfbox{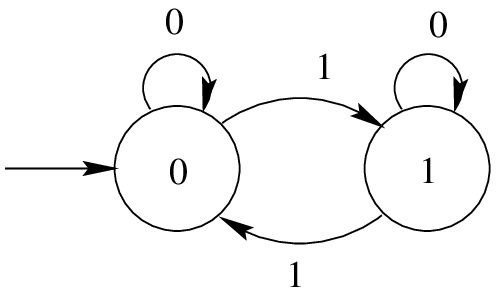}}
\protect\label{fig1}
\caption{A finite automaton generating a sequence}
\end{figure}

     Honkala \cite{Honkala:1986} showed that, given an automaton,
it is decidable if the sequence it generates is ultimately periodic.
Later, Leroux \cite{Leroux:2005} gave a polynomial-time algorithm for the problem.

Recently, Allouche, Rampersad, and Shallit
\cite{Allouche&Rampersad&Shallit:2009} found a different proof of
Honkala's result using a more general technique.  They showed that their
technique suffices to show that the following
properties (and many more) are decidable for $k$-automatic sequences $\bf x$:

\begin{itemize}
\item[(a)] Given a rational number $r > 1$, whether $\bf x$ is $r$-power-free;
\item[(b)] Given a rational number $r > 1$, whether $\bf x$ contains infinitely
many occurrences of $r$-powers;
\item[(c)] Given a rational number $r > 1$, whether $\bf x$ contains infinitely
many distinct $r$-powers;
\item[(d)] Given a length $l$, whether $\bf x$ avoids palindromes of length
$\geq l$.
\end{itemize}

Related
results have recently been given by
Halava, Harju, K\"arki, and Rigo 
\cite{Halava&Harju&Karki&Rigo:2010}.

In this paper we show that many additional 
properties of automatic
sequences are decidable using the same general technique.
More significantly,
we also show that related enumeration questions on automatic
sequences (such as counting
the number of distinct factors of length $n$) can be solved using
a similar technique, in an entirely effective manner.
As a consequence, we recover or improve
results due to Moss\'e \cite{Mosse:1996a};
Allouche, Baake, Cassaigne, and Damanik
\cite{Allouche&Baake&Cassaigne&Damanik:2003};
Currie and Saari \cite{Currie&Saari:2009};
Garel \cite{Garel:1997}; Fagnot \cite{Fagnot:1997a};
and Brown, Rampersad, Shallit, and Vasiga
\cite{Brown&Rampersad&Shallit&Vasiga:2006}.

Our main results about decidability are given in Section~\ref{logicsec}, and
our main results about enumeration are given in Section~\ref{applications}.

Throughout this paper, $k$ denotes a fixed integer $\geq 2$, the symbol
$\Enn$ denotes the non-negative integers
$\lbrace 0,1,2, \ldots \rbrace$, and the symbol
$\Enni$ denotes the ``extended'' non-negative integers
$\Enn \ \cup \ \lbrace \infty \rbrace$.

\section{Connections with logic and new decidability results}
\label{logicsec}

After the publication of \cite{Allouche&Rampersad&Shallit:2009}, the
third author noticed that the technique used there was, at its core,
very similar to previous techniques developed by B\"uchi, Bruy\`ere,
Michaux, Villemaire, and others, involving formal logic; see, e.g.,
\cite{Bruyere&Hansel&Michaux&Villemaire:1994}.  This was later
independently observed by the first author, as well as by V\'eronique
Bruy\`ere.  As it turns out, the properties (a)--(d) above are
decidable because they are expressible as predicates in the first-order
structure $\langle \Enn, +, V_k \rangle$, where $V_k (n)$ is the
largest power of $k$ dividing $n$.

We briefly recall the technique discussed in
\cite{Allouche&Rampersad&Shallit:2009} in the context of a particular
example.  Suppose we want to decide if an automatic sequence $\bf x$
is squarefree (contains no nonempty square factor).
Given an automaton $M$ generating
a $k$-automatic sequence $\bf x$, we create, via a series of transformations,
a new automaton $M'$ that accepts the base-$k$ representations of integers
corresponding to the squares in $\bf x$.  For example, $M'$ could accept
those integers corresponding to the starting position of each square, or those
integers corresponding to the lengths of the squares.  The operations
we can use in constructing $M'$ include digit-by-digit addition or 
subtraction (with carry, if necessary), comparison, and lookup of the
corresponding term in $\bf x$ (which comes from simulation of $M$).
Nondeterminism can be used to implement ``$\exists$'', and ``$\forall$''
can be implemented by nondeterminism combined with suitable negations.

Ultimately, then,
deciding if $\bf x$ is squarefree corresponds to verifying that
$L(M') = \emptyset$ for the $M'$ we construct.
Deciding whether $\bf x$ contains only finitely
many square occurrences corresponds to verifying that $L(M')$ is
finite.  Both can easily be done by the standard methods for automata, using
depth-first or breadth-first search on the underlying state diagram of the
automaton.

In this paper, we always assume that numbers are encoded in base $k$ using
the digits in $\Sigma_k = \lbrace 0, 1, \ldots, k-1 \rbrace$,
The {\it canonical encoding}
of $n$ is the one with no leading zeroes 
and is denoted $(n)_k$.    Similarly, if $w = a_1 \cdots a_n
\in \Sigma_k^*$, then
by $[w]_k$ we mean $\sum_{1 \leq i \leq n} a_i k^{n-i}$, the integer
that $w$ represents.
Often we will deal with reversed representations, where the least
significant digit appears first.
For example, in the reversed representation,
$13$ is represented in base $2$ by the word $1011$.  

Sometimes
we will need to encode pairs, triples, or $r$-tuples of
integers.  We handle these
by first padding the reversed representation of
the smaller integer with trailing zeroes,
and then coding the $r$-tuple as a word over $\Sigma_k^r$.
For example, the pair $(20,13)$ could be represented in base-$2$ as
$$ [0,1][0,0][1,1][0,1][1,0] ,$$
where the first components spell out $00101$ and the second components
spell out $10110$. Of course,
there are other possible representations, such as
$$ [0,1][0,0][1,1][0,1][1,0][0,0],$$
which correspond to non-canonical
representations having trailing zeroes.  In general, we permit these.

Thus, the main idea of \cite{Allouche&Rampersad&Shallit:2009} can be
restated as follows:

\begin{theorem}
If we can express a property of a $k$-automatic sequence $\bf x$ using quantifiers,
logical operations, integer variables, the operations
of addition, subtraction, indexing into $\bf x$, and comparison of integers or elements
of $\bf x$, then this property is decidable.
\label{logic}
\end{theorem}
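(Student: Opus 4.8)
The plan is to reduce the informal "programming language" of Theorem~\ref{logic} --- quantifiers, boolean connectives, $+$, $-$, indexing into $\bf x$, and comparisons --- to the first-order theory of the structure $\langle \Enn, +, V_k \rangle$, which is known to be decidable (this is the B\"uchi--Bruy\`ere--Michaux--Villemaire framework). The key observation is that a set $S \subseteq \Enn^r$ is definable in $\langle \Enn, +, V_k \rangle$ if and only if the language of base-$k$ representations of the $r$-tuples in $S$ (padded with trailing zeroes, coded over $\Sigma_k^r$ as described above) is regular, i.e.\ recognized by a finite automaton. So the theorem amounts to checking that every ingredient of our formula preserves automaton-recognizability of the corresponding tuple-language, and that emptiness (and finiteness) of such a language is decidable.

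First I would handle the atomic pieces. Addition and subtraction: the graph $\{(x,y,z) : x+y=z\}$ is recognized by a small automaton performing digit-by-digit addition with carry on the reversed, $\Sigma_k^3$-coded input, and subtraction is the same relation read with the roles of the arguments permuted; comparison $x<y$ (and $x=y$, $x\le y$) is likewise recognized by a small automaton scanning digits. Indexing into $\bf x$ is where the hypothesis that $\bf x$ is $k$-automatic enters: by definition there is a DFA $M$ with output that, on reading $(n)_k$, ends in a state labeled $a(n)$; running $M$ on the reversed, trailing-zero-padded representation still works after the standard reversal/leading-zero-insensitivity adjustment, so for each letter $c \in \Delta$ the set $\{n : {\bf x}[n]=c\}$ is automaton-recognizable, and more generally "${\bf x}[i]$ equals ${\bf x}[j]$'' or "${\bf x}[i]=c$'' can be tested by a product construction that simulates $M$ on the relevant component tracks of the tuple. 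Thus every atomic formula of our language translates to a regular tuple-language.

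Next I would close under the logical operations by structural induction on the formula. Boolean combinations correspond to the boolean operations on languages: intersection and union via product automata, negation via complementation (possible since we can determinize; one must be slightly careful about the padding convention, but a tuple-language arising this way is closed under adding/removing trailing all-zero symbols, so complementation relative to the padded-representation language is clean). Existential quantification $\exists x$ corresponds to projecting away the track for $x$ from the alphabet $\Sigma_k^r$ down to $\Sigma_k^{r-1}$, which yields a nondeterministic automaton, hence still a regular language; universal quantification is $\neg\exists\neg$. At the end, a closed formula (no free variables) corresponds to a regular subset of $\{\varepsilon\} \cup \Sigma_k^{0*}$-style one-letter coding, so the sentence is true iff the constructed automaton accepts, which is decidable; and for the "finitely many'' versions one instead keeps one free variable and tests finiteness of the accepted language by looking for a reachable-and-co-reachable cycle, again decidable by depth-first search.

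I expect the main obstacle to be bookkeeping rather than mathematics: making the representation conventions (reversed digits, trailing-zero padding, coding $r$-tuples over $\Sigma_k^r$, tolerating non-canonical representations) uniform across all the constructions so that the product, projection, and complementation steps compose without edge cases --- in particular ensuring that each intermediate tuple-language is invariant under appending trailing zero-symbols, so that projecting one track does not desynchronize the others and so that complementation is taken within the right universe. Once the invariant "every formula of the language denotes an automaton-recognizable, trailing-zero-closed tuple-language, effectively computable from the formula'' is set up, the induction is routine and the decidability of emptiness/finiteness of regular languages finishes the proof.
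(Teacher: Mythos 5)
Your proposal is correct and follows essentially the same route as the paper: translate each atomic predicate (addition/subtraction with carry, comparison, indexing via simulation of the generating automaton) into an automaton on padded $\Sigma_k^r$-coded tuples, close under boolean operations and quantifiers by product, complementation, and projection, and finish by deciding emptiness or finiteness of the resulting regular language. The paper presents this only as a sketch tied to the $\langle \Enn, +, V_k\rangle$ framework of B\"uchi--Bruy\`ere--Michaux--Villemaire, and your write-up is a faithful (and somewhat more careful, regarding the trailing-zero invariant) elaboration of that same argument.
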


We illustrate the idea with the following new result.  A word $w$ is
{\it bordered} if it begins and ends with the same word $x$ with
$0 < |x| \leq |w|/2$.  (An example in English is {\tt ingoing}, which begins
and ends with {\tt ing}.)  
Otherwise it is \textit{unbordered}.  

\begin{theorem}
Let ${\bf x} = a(0) a(1) a(2) \cdots$ be a $k$-automatic sequence.
Then the associated infinite sequence
${\bf b} = b(0) b(1) b(2) \cdots$ defined by
$$
b(n) = \begin{cases}
	1, & \text{if $\bf x$ has an unbordered factor
	of length $n$;} \\
	0, & \text{otherwise;}
	\end{cases}
$$
is $k$-automatic.
\end{theorem}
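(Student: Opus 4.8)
The plan is to realize the sequence $\bf b$ as an automatic sequence by expressing the predicate ``$\bf x$ has an unbordered factor of length $n$'' in the logical framework of Theorem~\ref{logic}, and then invoking the machinery that turns such a predicate into an automaton reading $(n)_k$. The key observation is that ``unbordered'' is itself a first-order property over $\langle \Enn, +, V_k \rangle$ with indexing into $\bf x$: a factor ${\bf x}[i..i+n-1]$ is bordered precisely when there exists a border length $\ell$ with $1 \le \ell \le n/2$ such that ${\bf x}[i..i+\ell-1] = {\bf x}[i+n-\ell..i+n-1]$, i.e. such that ${\bf x}[i+t] = {\bf x}[i+n-\ell+t]$ for all $t$ with $0 \le t < \ell$. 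So the predicate ``${\bf x}[i..i+n-1]$ is unbordered'' is
$$
\forall \ell\, \bigl( (1 \le \ell \ \wedge\ 2\ell \le n) \implies \exists t\, (t < \ell \ \wedge\ {\bf x}[i+t] \ne {\bf x}[i+n-\ell+t]) \bigr),
$$
and then ``$\bf x$ has an unbordered factor of length $n$'' is obtained by prefixing $\exists i$. All the atomic pieces here --- the inequalities $1\le\ell$, $2\ell \le n$, $t < \ell$; the additions $i+t$, $i+n-\ell+t$, $i+n-\ell$; the indexing ${\bf x}[\cdot]$; and the equality/inequality test on letters of $\bf x$ --- are exactly the operations permitted in Theorem~\ref{logic}.

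From here I would follow the construction sketched in Section~\ref{logicsec}: build, by the standard closure operations on finite automata, an automaton that reads the reversed base-$k$ representation of the tuple of free variables and accepts exactly when the matrix of the formula holds. Starting from the automaton $M$ for $\bf x$, one first builds an automaton for the 3-variable predicate ${\bf x}[p] \ne {\bf x}[q]$ (by a product construction simulating $M$ on both $p$ and $q$), composes with adders realizing the substitutions $p = i+t$ and $q = i+n-\ell+t$ (introducing the auxiliary variables and handling trailing-zero padding as described), intersects with the automata for the linear inequalities, then applies projection (for $\exists t$), complementation and projection (for $\forall \ell$, written as $\neg\exists\ell\,\neg$), and finally projection for $\exists i$. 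The result is an automaton $N$ over $\Sigma_k$ (reversed representation) with $L(N) = \{\, w : [\,w\,]_k = n \text{ for some } n \text{ such that } \bf x \text{ has an unbordered factor of length } n \,\}$, a regular language closed under trailing zeroes.

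The last step is to pass from the regular language $L(N)$ to the automatic sequence $\bf b$. Since $b(n) \in \{0,1\}$ is just the characteristic function of the set $S = \{\,n : \bf x \text{ has an unbordered factor of length } n\,\}$, and the set of base-$k$ representations (with the usual padding conventions) of elements of $S$ is regular by the previous paragraph, $\bf b$ is $k$-automatic by the standard equivalence between $k$-automatic sequences and sequences whose fibers have regular base-$k$ representation languages (the characterization underlying the definition recalled in the Introduction); concretely, $N$ with outputs $1$ on accepting states and $0$ elsewhere, after determinization and a routine adjustment so that it reads $(n)_k$ and is insensitive to leading zeroes, is an automaton generating $\bf b$.

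I do not expect any genuine obstacle: the content is entirely in checking that the border condition is expressible with the allowed operations, which it is, and the rest is a mechanical application of Theorem~\ref{logic} together with the bookkeeping about representations of tuples (padding with trailing zeroes, allowing non-canonical representations) already set up in this section. The only mildly delicate point is the handling of the universal quantifier on $\ell$ --- it must be implemented via complementation, so one should make sure the intermediate automaton is determinized before complementing --- but this is exactly the ``$\forall$ via nondeterminism plus negation'' step flagged in the discussion preceding Theorem~\ref{logic}.
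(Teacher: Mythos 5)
Your proposal is correct and follows essentially the same route as the paper: the same three-quantifier formula $\exists i\,\forall \ell\,\exists t$ for the unbordered-factor predicate, implemented by the same sequence of automaton constructions (nondeterministic guessing for $\exists$, determinization plus complementation for $\forall$). The only difference is cosmetic (variable names, and your slightly more explicit final step passing from the regular language of representations to the automatic sequence $\bf b$), which the paper leaves implicit.
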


\begin{proof}
The sequence $\bf x$ has an unbordered factor of length $n$

\smallskip

iff

\smallskip

\noindent $\exists j \geq 0$ such that the factor of length $n$
beginning at position $j$ of $\bf x$ is unbordered

\smallskip

iff

\smallskip

\noindent
there exists an integer $j \geq 0$ such that for all possible lengths $l$
with $1 \leq l \leq n/2$, there is an integer $i$ with $0 \leq i < l$
such that the $i$'th
letter in the supposed border of length $l$ beginning and ending the 
factor of length $n$ beginning at position $j$ of $\bf x$
actually differs in the $i$'th position

\smallskip

iff

\smallskip

\noindent there exists an integer $j \geq 0$ such that for all integers
$l$ with $1 \leq l \leq n/2$
there exists an integer $i$ with $0 \leq i < l$ such
that $a(j+i) \not= a(j+n-l+i)$.

To carry out this test, we first create an NFA that given
the encoding of $(j,l,n)$ guesses the base-$k$
representation of $i$, digit-by-digit, checks that $i < l$,
computes $j+i$ and $j+n-l+i$ on the fly, and 
checks that $a(j+i) \not= a(j+n-l+i)$.  If such an $i$ is found,
it accepts.  We then convert this to a DFA, and interchange accepting
and nonaccepting states.  This DFA $M_1$ accepts
$(j,l,n)$ such that there is no $i$, $0 \leq i < l$ such that
$a(j+i) = a(j+n-l+i)$.   We then use $M_1$ as a subroutine to 
build an NFA $M_2$ that on input $(j,n)$
guesses $l$, checks that $1 \leq l \leq n/2$, and
calls $M_1$ on the result.  We convert this to a DFA and interchange
accepting and nonaccepting states to get $M_3$.  Finally, this
$M_3$ is used as a subroutine to build an NFA $M_4$ that
on input $n$ guesses $j$ and calls $M_3$.  

The set of such integers $n$ then forms a $k$-automatic sequence.
\end{proof}

\begin{example}
\normalfont
Consider the problem of determining for which lengths the Thue-Morse
sequence has an unbordered factor.  Currie and Saari \cite{Currie&Saari:2009}
proved that
if $n \not\equiv \fmod{1} {6}$, then
there is an unbordered factor of length $n$.  (Also see
\cite{Saari:2008}, Lemma 4.10 and Problem 4.1.)  However, this is not
a necessary condition, as 
$${\bf t}[39..69] = 
{\tt 0011010010110100110010110100101},$$
which is an unbordered factor of length $31$.
They
left it as an open problem to give a complete characterization
of the lengths for which $\bf t$ has an unbordered factor.
Our method shows the characteristic sequence of such lengths is
$2$-automatic.

Further, we conjecture that there is an unbordered
factor of length $n$ in $\bf t$ if and only if the base-$2$ expansion of
$n$ (starting with the most significant digit) is not of the
form $1 (01^*0)^* 1 0^* 1$.  

In principle this could be verified, purely mechanically, by our
method, but we have not yet done so.
\end{example}

We now turn to deciding if a given automatic sequence $\bf x$ has
infinite critical exponent (e.g.,~\cite{Krieger&Shallit:2007}).
If a word $w$ can be written in the form $x^n x'$, where $n \geq 1$
is an integer and $x'$ is a prefix of $x$, then we say it is a fractional
power with \textit{exponent} $|w|/|x|$.  For example, ${\tt ingoing}$ has
exponent $7/4$.  The largest such exponent is called \textit{the} exponent of
the word.  The \textit{critical exponent} of $\bf x$ is the supremum, over all
finite factors $f$ of $\bf x$, of the exponent of $f$.

\begin{theorem}
The following question is decidable:  given a $k$-automatic sequence,
does it contain powers of arbitrarily large exponent?
\end{theorem}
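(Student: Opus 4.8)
The plan is to reduce the question to a single decidable instance by means of a pumping argument. Observe first that ${\bf x}$ contains powers of arbitrarily large exponent precisely when its critical exponent is infinite, i.e.\ when $\sup\{\, n/p : {\bf x}\text{ has a factor of length }n\text{ with period }p,\ 1\le p<n\,\}=\infty$.

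First I would build, exactly as in the proofs above, a finite automaton $\mathcal{A}$ accepting the base-$k$ encodings of those triples $(j,n,p)$ with $1\le p<n$ for which the factor ${\bf x}[j..j+n-1]$ has period $p$; that is, for which $a(j+i)=a(j+i+p)$ holds for all $i$ with $0\le i<n-p$. This is legitimate by Theorem~\ref{logic}, the condition being a first-order combination of additions, comparisons, and lookups into ${\bf x}$ with a single bounded universal quantifier on $i$. Then ${\bf x}$ contains powers of arbitrarily large exponent iff $\sup\{\, n/p : (j,n,p)\in L(\mathcal{A})\,\}=\infty$.

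The key step is the following pumping lemma: there is an effectively computable integer $B$ --- for concreteness one may take $B=k^{q+1}$, where $q$ is the number of states of $\mathcal{A}$ --- such that ${\bf x}$ has factors of arbitrarily large exponent if and only if ${\bf x}$ has at least one factor of exponent exceeding $B$. One implication is immediate. For the other, suppose $(j,n,p)$ is accepted by $\mathcal{A}$ with $n/p>B$. Then $n$ has at least $q+1$ more base-$k$ digits than $p$, so any accepting run of $\mathcal{A}$ contains a cycle lying entirely inside the block of positions in which the component reading $p$ has already been exhausted and is being padded with zeros. Pumping that cycle $t$ times, for each $t\ge 1$, produces accepted triples $(j^{(t)},n^{(t)},p)$ in which $p$ is literally unchanged --- the inserted symbols all carry a $0$ in the $p$-component and sit above the most significant digit of $p$ --- while the representation of $n^{(t)}$ grows longer with $t$, so that $n^{(t)}\to\infty$. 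Thus ${\bf x}$ has factors of period $p$ and unboundedly large length, hence of unboundedly large exponent.

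Finally, for the fixed integer $B$ so obtained, the statement ``${\bf x}$ has a factor of exponent exceeding $B$'' has the form allowed by Theorem~\ref{logic}: it asserts the existence of $j\ge 0$ and $p\ge 1$ with $a(j+i)=a(j+i+p)$ for all $i$ with $0\le i\le (B-1)p$, where $(B-1)p$ abbreviates the sum of $B-1$ copies of $p$ and is therefore a permissible term. Hence this statement is decidable, and by the pumping lemma it is equivalent to ``${\bf x}$ contains powers of arbitrarily large exponent'', which is thus decidable too. The part I expect to demand the most care is the pumping lemma --- in particular, verifying that the repeated cycle can always be placed in the zero-padded portion of the $p$-component, so that pumping holds $p$ fixed while forcing the length $n$, and with it the exponent, to increase without bound.
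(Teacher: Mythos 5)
Your proof is correct, and it shares its first half with the paper's: both reduce the question to a property of the regular language of base-$k$ encodings of the (length, period) data of repetitions in ${\bf x}$ (the paper uses pairs $(n,j)$ recording that ${\bf x}[i+t]={\bf x}[i+j+t]$ for $0\le t<n$ and some $i$; you use triples $(j,n,p)$). Where you diverge is the concluding step. The paper observes that unboundedness of the ratio is equivalent to the regular language containing, for every $i$, words ending in $\overbrace{[*,0]\cdots[*,0]}^{i}\,[b,0]$ with $b\neq 0$, and appeals to the decidability of that structural property of a regular language. You instead prove a pumping lemma: with $q$ the number of states, any accepted triple with $n/p>k^{q+1}$ admits a cycle inside the window of positions lying above the most significant digit of $p$ but not above that of $n$ --- this is the one place requiring care, since a cycle sitting above the top digit of $n$ as well (in a region where only $j$ is still significant) would leave $n$ unchanged; the digit-count gap of $q+1$ guarantees a placement in the correct window --- and pumping there holds $p$ fixed while driving $n$ to infinity. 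Hence unboundedness is equivalent to the existence of a single factor of exponent exceeding $B:=k^{q+1}$, a fixed first-order predicate decidable by Theorem~\ref{logic}. The two endgames are really the same cycle argument in different clothing, since the paper's ``easily decide'' step is itself established by locating a pumpable cycle in the zero-padded tail; but yours makes that argument explicit and yields a concrete computable threshold $B$, at the price of a final predicate (with $(B-1)p$ written as a $(B-1)$-fold sum) whose size is exponential in the automaton, whereas the paper's test is a direct graph search on the automaton.
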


\begin{proof}
$\bf x$ has powers of arbitrarily high exponent

\smallskip

iff

\smallskip

\noindent the set of pairs 
$$ S := \lbrace (n,j) \ :  \ \text{ 
$\exists i \geq 0$ such that for all $t$ with $0 \leq t < n$ we have
${\bf x}[i+t] = {\bf x}[i+j+t]$ } \rbrace $$
contains pairs $(n,j)$ with $n/j$ arbitrarily large

\smallskip

iff

\noindent for all $i \geq 0$ $S$ contains a pair $(n,j)$ with
$n > j \cdot 2^i$

\smallskip

iff

\smallskip

\noindent $L$, the set of base-$k$ encodings of pairs in $S$, contains,
for each $i$, 
words ending in 
$$\overbrace{[*,0][*,0]\cdots [*,0]}^i [b,0]$$ 
for some $b \not = 0$, where $*$ means any digit.  

But we can easily decide if a regular language contains words ending
in arbitrarily long words of this form. 
\end{proof}

In a similar fashion we can show

\begin{theorem}
The following question is decidable:  given a $k$-automatic sequence
$\bf x$, does $\bf x$ contain arbitrarily large unbordered factors?
\end{theorem}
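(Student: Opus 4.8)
The plan is to express the property ``$\bf x$ contains arbitrarily large unbordered factors'' in a form amenable to the automaton-theoretic technique of Theorem~\ref{logic}, mirroring exactly the structure of the preceding proof about arbitrarily large exponents. First I would recall, from the proof of the unbordered-factor theorem above, that the predicate ``the factor of length $n$ beginning at position $j$ of $\bf x$ is unbordered'' is expressible using quantifiers, addition, subtraction, indexing into $\bf x$, and comparison: it is
$$\forall l \ (1 \leq l \leq n/2) \implies \exists i \ (0 \leq i < l) \ \wedge\ a(j+i) \neq a(j+n-l+i).$$
Hence, by the construction already carried out (the automata $M_1, M_2, M_3$), there is a DFA $M_3$ that on input the base-$k$ encoding of the pair $(j,n)$ accepts precisely when the length-$n$ factor at position $j$ is unbordered.

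Next I would project onto the second coordinate: build an NFA $M'$ that on input $n$ guesses $j$ digit-by-digit (with trailing zeros handled as in the padding convention described in the excerpt) and runs $M_3$ on $(j,n)$, accepting iff $M_3$ accepts. Determinizing gives a DFA accepting $L := \{\, (n)_k : \bf x \text{ has an unbordered factor of length } n \,\}$ — this is just the automaton underlying the sequence $\bf b$ from the earlier theorem. Now ``$\bf x$ contains arbitrarily large unbordered factors'' holds iff $L$ is infinite. Since $L$ is regular, finiteness of $L$ is decidable by the standard method: $L$ is infinite iff the DFA for $L$ has a reachable, co-reachable cycle (equivalently, it accepts some word of length between $s$ and $2s$, where $s$ is the number of states), and this can be checked by a depth-first or breadth-first search on the state diagram. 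Therefore the question is decidable.

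The only subtlety — and the place I would be most careful — is the handling of non-canonical encodings with trailing zeros. Because the padding convention of the excerpt allows an integer to be represented by several words (differing by trailing zeros in the reversed representation), the regular language $L$ as constructed may contain such padded variants; one must ensure that $L$ is nonempty for infinitely many \emph{distinct} integers $n$, not merely that $L$ is an infinite set of words. This is routine: either restrict $M'$ to accept only canonical encodings by a standard product construction ruling out trailing zeros, or observe that since every integer has a canonical encoding and the language is closed under appending $[0,\dots,0]$, infiniteness of the word language is equivalent to infiniteness of the integer set. With that checked, the argument is complete. I do not expect any genuine obstacle here — the heavy lifting was already done in the unbordered-factor theorem, and this result is a straightforward ``infinitely many'' variant, exactly analogous to how deciding finitely many square occurrences followed from deciding squarefreeness in the discussion after Theorem~\ref{logic}.
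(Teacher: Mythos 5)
Your proposal is correct and is essentially the argument the paper intends: the paper gives no details beyond ``in a similar fashion,'' and the natural route is exactly yours --- reuse the automaton constructed in the unbordered-factor theorem to obtain a regular language of lengths $n$ admitting an unbordered factor, and decide infiniteness of that language by the standard cycle test. One caveat on your handling of padding: of the two fixes you offer, only the first (restricting to canonical encodings) is sound; the second is false, since a language closed under appending padding symbols is infinite as soon as it is nonempty, even if it encodes only a single integer, so infiniteness of the word language does not imply infiniteness of the integer set. With the canonical-encoding restriction the argument is complete.
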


Now we turn to questions of recurrence.

An infinite word ${\bf a} = (a(n))_{n \geq 0}$ is said to be 
{\it recurrent} if every factor that occurs at least once
in $\bf a$ occurs infinitely often.
Equivalently,  
a word is recurrent
if and only if for each occurrence of a factor
of $\bf a$, there exists a later occurrence of that factor in $\bf a$.
Equivalently,
for every $n \geq 0$, $r \geq 1$,
there exists $m > n$ such that $a(n+j) = a(m+j)$ for $0 \leq j < r$.

Similarly, an infinite word ${\bf a} = (a(n))_{n \geq 0}$ is said to be 
{\it uniformly recurrent} if every factor that occurs at least once
in $\bf a$ occurs infinitely often, with bounded gaps between consecutive
occurrences.
Equivalently, 
a word 
${\bf a} = (a(n))_{n \geq 0}$ is uniformly recurrent iff
for every $r \geq 1$  there exists $t > 0$ such that for every $n \geq 0$
there exists $m \geq 0$ with $n < m < n+t$ such that $a(n+i) = a(m+i)$
for $0 \leq i < r$.

Thus we recover the following recent result
of Nicolas and Pritykin \cite{Nicolas&Pritykin:2009}:

\begin{theorem}
It is decidable if a $k$-automatic sequence is recurrent or uniformly
recurrent.
\end{theorem}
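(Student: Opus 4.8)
The plan is to reduce this to Theorem~\ref{logic}: for each of the two properties I would exhibit a predicate built only from the permitted ingredients (integer quantifiers, Boolean connectives, $+$, comparison of integers, and indexing into ${\bf x}$ with equality of letters), after which decidability is immediate. Reading off the ``equivalently'' formulations recorded just above the statement, I would write recurrence of ${\bf x} = (a(n))_{n \ge 0}$ as
$$\forall n\ \forall r\ \bigl(r \ge 1 \rightarrow \exists m\ \bigl(m > n \wedge \forall j\ (j < r \rightarrow a(n+j) = a(m+j))\bigr)\bigr),$$
and uniform recurrence as
$$\forall r\ \bigl(r \ge 1 \rightarrow \exists t\ \bigl(t > 0 \wedge \forall n\ \exists m\ \bigl(n < m \wedge m < n+t \wedge \forall i\ (i < r \rightarrow a(n+i) = a(m+i))\bigr)\bigr)\bigr).$$
Both fall squarely under Theorem~\ref{logic}, so both are decidable; since these formulas have no free variables, the decision procedure ends by checking whether the automaton produced by the construction accepts anything at all.

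Spelled out, for recurrence I would build a chain of automata mirroring the quantifier prefix from the inside out. Start from an NFA that, on input encoding $(n,m,r,j)$, forms $n+j$ and $m+j$ digit by digit, runs $M$ on each to recover $a(n+j)$ and $a(m+j)$, and checks $j < r$ together with equality of the two letters; determinize and complement to handle $\forall j$; add an existential layer (an NFA guessing $m$ and checking $m > n$) for $\exists m$; then determinize and complement once more to absorb $\forall n\,\forall r$. The uniform-recurrence predicate is handled identically, with one extra existential block (for $t$) and one extra universal block (for $n$) inserted in the middle. In each case the final automaton has no free variables, and ``${\bf x}$ is (uniformly) recurrent'' holds iff it is nonempty, which is decidable by the usual reachability test; moreover the whole pipeline is effective in a presentation of $M$, so this is the effective version of the Nicolas--Pritykin result.

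The work is almost entirely bookkeeping, and the only genuine difficulty is quantifier alternation: each universal quantifier costs a subset construction followed by a complementation, so uniform recurrence incurs several rounds of exponential blow-up — irrelevant for decidability, but the reason a direct implementation is expensive. I would also be careful with the encoding conventions fixed earlier in the paper (reversed representations, padded with trailing zeroes, and non-canonical representations permitted), so that the arithmetic and lookup automata accept \emph{all} admissible encodings of a tuple and so that each complementation is taken relative to the language of well-formed encodings rather than $\Sigma_k^*$ (equivalently, re-conjoining well-formedness after every complement). Once those details are in place the argument is routine.
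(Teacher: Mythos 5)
Your proposal is correct and follows exactly the paper's route: the paper derives the theorem from the same first-order reformulations of recurrence and uniform recurrence stated just before it, appealing to Theorem~\ref{logic}, and your explicit automaton pipeline is just the standard construction the paper describes earlier (e.g., in the unbordered-factor proof). Nothing further is needed.
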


We now turn to questions of factors shared by two $k$-automatic sequences.
Fagnot \cite{Fagnot:1997a} showed that it is decidable whether two
such sequences ${\bf x} = a(0) a(1) \cdots$ and
${\bf y} = b(0) b(1) \cdots$ have exactly the same set of factors.
This is also decidable by our methods, as follows:

The sequences ${\bf x} = a(0) a(1) \cdots$ and
and ${\bf y} = b(0) b(1) \cdots$ have the same set of factors

\smallskip

iff

\smallskip

\noindent for all $i \geq 0, n \geq 1$ there exists $j \geq 0$ such that
${\bf x}[i..i+n-1] = {\bf y}[j..j+n-1]$

\smallskip

iff

\smallskip

\noindent for all $i \geq 0,  n \geq 1$ there exists $j \geq 0$
such that for all $t, 0 \leq t < n$  we have
$a(i+t) = b(j+t)$.

In a similar fashion, the question of whether the set of factors of
one $k$-automatic word form a subset of the set of factors of 
another $k$-automatic word is decidable.

\section{Enumeration}

We now turn to questions of enumeration.  A typical example of the kind of question we are
interested in is, given an automatic sequence $(a(n))_{n \geq 0}$, how many distinct factors
are there of length $n$?  Our goal in the remainder of this paper
is to show that these kinds of questions often have
a useful answer in terms of \textit{$k$-regular sequences}.
A sequence $(a(n))_{n \geq 0}$ is {\it $k$-regular} if the module generated by its $k$-kernel, which is the
set of all subsequences of the form 
$$ \lbrace (a(k^e n + c))_{n \geq 0} \ : \ e \geq 0, \ 0 \leq c < k^e \rbrace,$$
is finitely generated 
\cite{Allouche&Shallit:1992,Allouche&Shallit:2003a,Allouche&Shallit:2003b,Berstel&Reutenauer:2011}.
The $k$-regular
sequences play the same role for integer-valued sequences as the $k$-automatic
sequences play for sequences over a finite alphabet.  Classical examples of
$k$-regular sequences include polynomials in $n$, 
and $s_k (n)$, the sum of the base-$k$ digits of $n$.

Not only does this interpretation give an explicit
and efficient algorithm for computing the values of the sequence in question,
it also gives a way to compute many related quantities that, up to now, have
received extended treatments in the literature using a wide variety of techniques.
Our work therefore extends and unifies many results in the literature.

In order to make our results really precise, we need several sections of 
preliminary definitions and results.  This is what follows in
Sections~\ref{a1}--\ref{b1}.  We resume 
the exposition of our results in Section~\ref{applications}.

\section{$k$-regular sequences}
\label{a1}

Cobham \cite{Cobham:1972} showed that a sequence $(s(n))_{n \geq 0}$
is $k$-automatic iff its \textit{$k$-kernel}
is finite.
Generalizing this notion,
Allouche and Shallit \cite{Allouche&Shallit:1992,Allouche&Shallit:2003a}
introduced the notion of $k$-regular sequence over a ring $R$.  
A sequence is \textit{$k$-regular} if the module generated by its
$k$-kernel is finitely generated.
In particular, Allouche and Shallit were
interested in the cases of where the underlying ring is $\Zee$ or
$\Que$.
However, as noted in the recent book of Berstel and Reutenauer
\cite{Berstel&Reutenauer:2011}, it makes more sense to define the
$k$-regular sequences over a semiring instead of a ring.  The advantage
is greater generality, but at the cost of giving up part of the
characterization in terms of the $k$-kernel.

\begin{example}
\normalfont
To illustrate this, consider the sequence $s_2(n)$ defined to be the sum
of the bits in the base-$2$ representation of $n$.  For example,
$s_2 (27) = 4$.  Then $s_2(n)$ is $2$-regular over $\Zee$, as its 
$2$-kernel $K$ generates a module $M$ that 
is generated by the sequence $s_2(n)$ itself and the 
constant sequence $1$.   Indeed, we have
\begin{eqnarray*}
K &=& \lbrace (s_2(2^e n + a))_{n \geq 0} \ : \ e \geq 0, \ 0 \leq a < 2^e \rbrace \\
&=& \lbrace (s_2 (n) + s_2 (a))_{n \geq 0} \ : \ a \geq 0 \rbrace  \\
&=& \lbrace (s_2(n) + c)_{n \geq 0} \ : \ c \geq 0 \rbrace ,
\end{eqnarray*}
so that every sequence in the $2$-kernel is a $\Zee$-linear combination
of $(s_2(n))_{n \geq 0}$ and the constant sequence $1$.  Indeed, it is
even true that every sequence in $K$ is an $\Enn$-linear combination of
$(s_2(n))_{n \geq 0}$ and the constant sequence $1$.

In \cite{Allouche&Shallit:1992}, the authors show that every sequence in
the $k$-kernel $K$ of a $k$-regular sequence over $\Zee$ 
is generated by some finite subset of $K$.
For example, for $(s_2(n))_{n \geq 0}$, the $2$-kernel $K$ is generated by
$(s_2(n))_{n \geq 0}$ and $(s_2(2n+1))_{n \geq 0}$.
However, in this example,
there is no finite subset $K' \subseteq K$ such that every sequence
in $K$ can be written as an $\Enn$-linear combination of the sequences
in $S$.  For every
sequence in $K$ is of the form $s_2(n) + c$ with $c \geq 0$.  If we
take some finite subset $K' \subseteq K$, then the sequences in $K'$ of
the form $s_2 (n) + c$ all satisfy $c < C$ for some finite $C$.  We
then cannot get $s_2(n) + C+1$ as an $\Enn$-linear combination of the
sequences in $K'$ (as any such combination would have at least two
copies of $s_2 (n)$).
\label{s2}
\end{example}

This means that to define $(\Enn, k)$-regular sequences, we have to
give up one characterization in terms of the kernel, given in
\cite{Allouche&Shallit:1992}.

\section{$(R,k)$-regular sequences}

In this section we give a rigorous definition of $(R,k)$-regular
sequences and show that there are a number of alternative
characterizations that are equivalent.

First, we give some definitions.

Let $\Sigma_k$ denote the alphabet
$\lbrace 0, 1, \ldots, k-1 \rbrace$.
Let $C_k = \lbrace \epsilon \rbrace \ \cup \ 
(\Sigma_k - \lbrace 0 \rbrace)\Sigma_k^*$ denote the set of
canonical base-$k$ expansions, that is, those with no leading zero.
Let $R$ be a semiring.
A {\it formal series} 
is a map 
$h:\Sigma^* \rightarrow R$.  For historical reasons, 
$h(w)$ is often written as $(h,w)$ and $h$ itself is expressed as
the formal sum $\sum_{w \in \Sigma^*} (h,w) w$.
A formal series $h$ taking values in a semiring $R$
is said to be {\it $R$-recognizable} if 
$(h, w) = u \mu(w) v$ for all $w \in \Sigma^*$, where 
$\mu$ is a morphism
from $\Sigma^*$ to the set of $n \times n$ matrices, $u$ is a $1 \times n$
matrix (or row vector), and $v$ is an $n \times 1$ matrix (or column vector),
all with entries in $R$.
The triple $(u, \mu, v)$ is called a 
\textit{linear representation} of $h$.

The reader is directed to \cite{Kuich&Salomaa:1986,Salomaa&Soittola:1978} and
especially \cite{Berstel&Reutenauer:2011} for more information about
recognizable series.

We recall the following standard result about recognizable
series (\cite{Berstel&Reutenauer:2011}, Ex.\ 2.1.3, p.\ 42):

\begin{lemma}
Let $R$ be a semiring,
and let $f:\Sigma_k^* \rightarrow R$ be an $R$-recognizable series.
Then the series $g:\Sigma_k^* \rightarrow R$ defined by
$(g,w) = (f,w^R)$ is also $R$-recognizable.
\label{rev}
\end{lemma}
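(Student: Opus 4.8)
The plan is to produce, from a linear representation $(u,\mu,v)$ of $f$, a linear representation $(u',\mu',v')$ of the ``reversal'' series $g$ defined by $(g,w)=(f,w^R)$. First I would recall that if $w=a_1a_2\cdots a_n$ then
$$(f,w)=u\,\mu(a_1)\mu(a_2)\cdots\mu(a_n)\,v,$$
so that
$$(g,w)=(f,w^R)=u\,\mu(a_n)\cdots\mu(a_2)\mu(a_1)\,v.$$
The natural guess is to transpose everything: set $u'=v^{T}$, $v'=u^{T}$, and define $\mu'(a)=\mu(a)^{T}$ for each letter $a\in\Sigma_k$, extended to a morphism on $\Sigma_k^*$. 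I would then verify directly that this works: since $(AB)^{T}=B^{T}A^{T}$ over a commutative semiring (and we may as well assume $R$ commutative here, as in the applications; otherwise one uses the opposite semiring $R^{\mathrm{op}}$ and the same computation), we get
$$u'\mu'(a_1)\mu'(a_2)\cdots\mu'(a_n)v' = v^{T}\mu(a_1)^{T}\cdots\mu(a_n)^{T}u^{T} = \bigl(u\,\mu(a_n)\cdots\mu(a_1)\,v\bigr)^{T}.$$
Since this last quantity is a $1\times 1$ matrix, it equals its own transpose, so it equals $u\,\mu(a_n)\cdots\mu(a_1)\,v=(f,w^R)=(g,w)$, as desired. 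One also checks the empty-word case: $u'\mu'(\epsilon)v'=v^{T}I\,u^{T}=(u\,v)^{T}=(f,\epsilon)=(g,\epsilon)$.

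I would also confirm that $\mu'$ is genuinely a morphism $\Sigma_k^*\to R^{n\times n}$: it suffices to define $\mu'$ on letters by $\mu'(a)=\mu(a)^{T}$ and extend multiplicatively, and then $\mu'(uv)=\mu'(u)\mu'(v)$ holds by construction, so nothing needs to be proved beyond consistency of the definition; the identity $\mu'(\epsilon)=I$ comes for free.

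The only genuine subtlety --- and the step I would flag as the main obstacle --- is the non-commutativity of a general semiring $R$, since $(AB)^{T}=B^{T}A^{T}$ can fail when the entries do not commute. The clean fix is to observe that the transposed matrices multiply in the opposite order precisely when one works over the opposite semiring $R^{\mathrm{op}}$ (same underlying set, reversed multiplication), and $R$-recognizability over $R$ is equivalent to $R^{\mathrm{op}}$-recognizability after transposition; alternatively, for the semirings actually used in this paper ($\Enn$, $\Zee$, $\Que$, and their ilk), $R$ is commutative and the computation above goes through verbatim. Either way the size of the representation is unchanged, so the construction is effective, which is all we need for the applications. Hence $g$ is $R$-recognizable, completing the proof.
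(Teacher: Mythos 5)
Your transpose construction is the standard proof of this fact; the paper itself does not prove the lemma at all, but simply cites it as a known exercise from Berstel and Reutenauer, so your argument correctly supplies what the paper omits. The computation $u'\mu'(w)v' = \bigl(u\,\mu(w^R)\,v\bigr)^{T} = (f,w^R)$ is right, and you are also right to flag non-commutativity as the one real subtlety. The only quibble is with your proposed rescue for non-commutative $R$: transposing a representation hands you a representation over $R^{\mathrm{op}}$, so that route shows $g$ is $R^{\mathrm{op}}$-recognizable rather than $R$-recognizable, and it does not by itself establish the lemma as literally stated for an arbitrary semiring. Since every semiring actually used in the paper ($\Enn$, $\Enni$, $\Zee$, $\Que$) is commutative, this does not affect any application, but it would be cleaner to either state the commutativity hypothesis or drop the $R^{\mathrm{op}}$ remark as a claimed fix.
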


Next, we prove a somewhat technical lemma that essentially
says that we can 
disregard leading $0$'s in the representation of a word.

\begin{lemma}
Let $R$ be a semiring, and
let $f:\Sigma_k^* \rightarrow R$ be an $R$-recognizable series.
Then there exists another $R$-recognizable series $g$ such that
$(g, 0^i w) = (f,w)$ for all $i \geq 0$ and all $w \in C_k$.
Furthermore, there exists a linear representation
$(u', \mu', v')$ for $g$ satisfying $u'\mu'(0) = u'$.
\label{lz}
\end{lemma}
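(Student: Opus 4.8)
The plan is to build a linear representation for $g$ directly from one for $f$ by adjoining a single new state whose only job is to swallow leading zeros. Fix a linear representation $(u,\mu,v)$ of $f$ of dimension $n$, so that $(f,w)=u\mu(w)v$ for every $w\in\Sigma_k^*$. First I would record the (easy) combinatorial fact that every word $w'\in\Sigma_k^*$ factors uniquely as $w'=0^i w$ with $i\ge 0$ and $w\in C_k$ — this is immediate from $C_k=\{\epsilon\}\cup(\Sigma_k\setminus\{0\})\Sigma_k^*$ — so the prescription $(g,0^iw)=(f,w)$ actually pins down $g$ on all of $\Sigma_k^*$, and the only thing left to prove is that this $g$ is $R$-recognizable, which I will do by exhibiting a suitable triple $(u',\mu',v')$.

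Next I would describe the representation. Index the new state space by $\{0,1,\dots,n\}$, where state $0$ is the new ``leading-zero'' state and $1,\dots,n$ are the old states. Take $u'=(1_R,0,\dots,0)$ and let $v'$ be the column vector whose $0$-entry is $uv$ and whose remaining entries are $v_1,\dots,v_n$; the new coordinate of the final vector is chosen precisely to record the value $(f,\epsilon)$, which is what must be output after reading a word of the form $0^i$. For the transition morphism set $\mu'(0)=\left(\begin{smallmatrix}1_R & 0\\ 0 & \mu(0)\end{smallmatrix}\right)$ and, for each letter $a\ne 0$, $\mu'(a)=\left(\begin{smallmatrix}0 & u\mu(a)\\ 0 & \mu(a)\end{smallmatrix}\right)$, where the top-right block is the row vector $u\mu(a)$ and the bottom-left block is a zero column. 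The intuition is that while in state $0$ we stay there on input $0$, while the first nonzero digit $a$ transfers us into the old automaton in exactly the configuration $u\mu(a)$ it would have reached from a fresh start on $a$; once inside the old block (states $1,\dots,n$) the zero bottom-left blocks prevent us from ever returning to state $0$.

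Then the two claims become short verifications. The ``Furthermore'' clause is immediate: $u'\mu'(0)=u'$ because a row vector supported on coordinate $0$ is fixed by a matrix whose $(0,0)$-entry is $1_R$ and whose rest of row $0$ is zero; hence $u'\mu'(0)^i=u'$ for all $i$. For the main identity I would compute $u'\mu'(0^iw)v'=u'\mu'(w)v'$ and split cases on $w$. If $w=\epsilon$ this is $u'v'=1_R\cdot(uv)=(f,\epsilon)$. If $w=a_1a_2\cdots a_m$ with $a_1\ne 0$, then $u'\mu'(a_1)=(0,\,u\mu(a_1))$, and a one-line induction using the block shape shows that multiplying further by $\mu'(a_2),\dots,\mu'(a_m)$ keeps the $0$-coordinate equal to $0$ and acts on the remaining coordinates exactly as $\mu(a_2),\dots,\mu(a_m)$, so $u'\mu'(w)=(0,\,u\mu(w))$; pairing with $v'$ then kills the $uv$ entry and leaves $u\mu(w)v=(f,w)$, as required.

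I do not anticipate a genuine obstacle here: the construction is essentially forced and the verification is mechanical. The two points needing a little care are the boundary case $w=\epsilon$, handled exactly by placing $uv$ in the new coordinate of $v'$, and the requirement that everything remain over an arbitrary semiring — that is, that $u'$, $\mu'$, and $v'$ are built only from $0_R$, $1_R$, and entries and products of $u$, $\mu(a)$, $v$, with no subtraction ever used — so that the argument is valid for semirings and not merely rings.
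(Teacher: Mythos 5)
Your proof is correct, and the construction is sound over an arbitrary semiring: the verification that $u'\mu'(0)=u'$, that $u'\mu'(a_1)=(0,\,u\mu(a_1))$ for $a_1\neq 0$, and that the zero left column of each $\mu'(a)$ keeps the extra coordinate at $0$ thereafter, all go through with only $0_R$, $1_R$, sums and products. The underlying idea is the same as the paper's --- adjoin a ``waiting'' component that is fixed by $\mu'(0)$ and that re-enters the original representation at the first nonzero digit --- but your realization is leaner. The paper doubles the dimension to $2n$: it takes $u'=[\,0\cdots 0\ \ u\,]$, puts $I_n$ in the lower-right block of $\mu'(0)$ so the second copy idles through leading zeros, and uses $\mu'(a)=\left(\begin{smallmatrix}\mu(a)&\mathbf{0}\\ \mu(a)&\mathbf{0}\end{smallmatrix}\right)$ for $a\neq 0$ to hand the computation off to the first copy; it then proves two block-matrix identities for $\mu'(0^i)$ and $\mu'(0^iw)$ by induction on $|w|$. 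You instead add a single absorbing coordinate, store $(f,\epsilon)=uv$ in the corresponding entry of $v'$ to handle the boundary case $w=\epsilon$, and inject $u\mu(a)$ as a row block of $\mu'(a)$. Your version gives rank $n+1$ instead of $2n$ and a shorter induction; the paper's version is more uniform (it never needs to precompute $u\mu(a)$ or $uv$ and keeps $u',v'$ built purely from $u,v$), which makes the symmetric ``trailing zeros'' statement of Lemma~\ref{tz} follow immediately by reversal. Both are valid; nothing in your argument needs repair.
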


\begin{proof}
Suppose $(u, \mu, v)$ is a rank-$n$ linear representation of $f$.
Let $I_n$ denote the $n \times n$ identity matrix.
Define $u', \mu', v'$ as follows:
\begin{eqnarray*}
u' &=& [ \ \overbrace{0 \ 0 \cdots \ 0}^n  \quad u ] \\
\mu'(a) &=& \begin{cases}
	\left[ \begin{array}{cc}
	\mu(0) & {\bf 0} \\
	{\bf 0} & I_n 
	\end{array} \right], & \text{ if $a = 0$;}  \\
	\vphantom{a} & \\
	\left[ \begin{array}{cc}
	\mu(a) & {\bf 0} \\
	\mu(a) & {\bf 0} 
	\end{array} \right], & \text{ if $a \not= 0$;} \\
	\end{cases} \\
v' &=& [ v \quad v ]^T, \\
\end{eqnarray*}
and set $g = (u', \mu', v')$.

To see that this works, we will first prove the following two
facts:
\begin{equation}
\mu'(0^i) = \left[ \begin{array}{cc}
	\mu(0^i) & {\bf 0} \\
	{\bf 0} & I_n
	\end{array}\right]
\label{eq1}
\end{equation}
for $i \geq 0$ and
\begin{equation}
\mu'(0^i w) = \left[ \begin{array}{cc}
	\mu(0^i w) & {\bf 0} \\
	\mu(w) & {\bf 0} 
	\end{array}
	\right]
\label{eq2}
\end{equation}
for $i \geq 0$ and $w \in (\Sigma_k - \lbrace 0 \rbrace)\Sigma_k^*$.
The claim (\ref{eq1}) is a trivial induction, and is omitted.
Let's prove (\ref{eq2}) by induction on $|w|$.  The base case
is $|w|=1$.
In that case $w = a$, where $a \in \Sigma_k - \lbrace 0 \rbrace$.  From
the definition we have
$$ \mu'(w) = \mu'(a) =  \left[ \begin{array}{cc}
	\mu(a) & {\bf 0} \\
	\mu(a) & {\bf 0}
	\end{array} \right]
$$ 
so, using (\ref{eq1}), we get
$$ \mu'(0^i w) = \mu'(0^i) \mu'(w) = 
\left[ \begin{array}{cc}
	\mu(0^i)\mu(a) & {\bf 0} \\
	\mu(a) & {\bf 0} \\
	\end{array} \right]
= \left[ \begin{array}{cc}
	\mu(0^i a) & {\bf 0} \\
	\mu(a) & {\bf 0} \\
	\end{array} \right],
$$
as desired.
For the induction step, assume the result (\ref{eq2}) holds for all
$w'$ with $0 < |w'| < |w|$; we prove it for $w$.
Write $w = ax$ with $a \in \Sigma_k - \lbrace 0 \rbrace$.
There are two cases:  (i) $x = 0^j$ for some $j \geq 1$,
and (ii) $x = 0^j y$, where $j \geq 0$ and $y \in C_k$.  
In case (i) we have, by induction, that
$$ \mu'(x) = \left[ \begin{array}{cc}
	\mu(0^j) & {\bf 0} \\
	{\bf 0} & I_n \\
	\end{array} \right],$$
and hence
$$ \mu'(w) = \mu'(a) \mu'(x) =
\left[ \begin{array}{cc}
	\mu(a) \mu(0^j) & {\bf 0} \\
	\mu(a) \mu(0^j) & {\bf 0} \\
	\end{array} \right]
= \left[ \begin{array}{cc}
	\mu(ax) & {\bf 0} \\
	\mu(ax) & {\bf 0} \\
	\end{array} \right] 
= \left[ \begin{array}{cc}
	\mu(w) & {\bf 0} \\
	\mu(w) & {\bf 0} 
	\end{array} \right],
$$ 
as desired.
In case (ii) we have, by induction, that
$$ \mu'(x) =  \mu'(0^j y) = \left[ \begin{array}{cc}
	\mu(0^j y) & {\bf 0} \\
	\mu(y) & {\bf 0}
	\end{array}
	\right],$$
and again we have
$$ \mu'(w) = \mu'(a) \mu'(x) =
\left[ \begin{array}{cc}
	\mu(a) \mu(0^j y) & {\bf 0} \\
	\mu(a) \mu(0^j y ) & {\bf 0} \\
	\end{array} \right]
	= \left[ \begin{array}{cc}
	\mu(ax) & {\bf 0} \\
	\mu(ax) & {\bf 0} \\
	\end{array} \right]
	= 
 \left[ \begin{array}{cc}
	\mu(w) & {\bf 0} \\
	\mu(w) & {\bf 0} 
	\end{array} \right],
$$
as desired.

Therefore
$$ \mu'(0^i w) = \mu'(0^i) \mu'(w) = 
 \left[ \begin{array}{cc}
	 \mu(0^i) & {\bf 0} \\
	{\bf 0} & I_n
	\end{array}\right]
\
 \left[ \begin{array}{cc}
	\mu(w) & {\bf 0} \\
	\mu(w) & {\bf 0} 
	\end{array} \right]
= \left[ \begin{array}{cc}
	\mu(0^i w) & {\bf 0} \\
	\mu(w) & {\bf 0} 
	\end{array}\right],
$$
which completes the induction.

Now that we know that (\ref{eq1}) and (\ref{eq2}) hold, we have, if
$w = 0^i$ for some $i \geq 0$, that
\begin{eqnarray*}
(g,w) &=& u' \mu'(w) v' \\
&=& [ \ \overbrace{0 \ 0 \ \cdots 0}^n \quad u]
\left[ \begin{array}{cc}
	\mu(0^i) & {\bf 0} \\
	{\bf 0} & I_n
\end{array}\right] 
\left[ \begin{array}{c}
	v \\
	\quad \\
	v 
	\end{array}
	\right]\\
&=& uv   \\
&=& (f, \epsilon),
\end{eqnarray*}
as desired.

If $w = 0^i z$ with $i \geq 0$ and $z \in (\Sigma_k - \lbrace 0 \rbrace)
\Sigma_k^*$, then
\begin{eqnarray*}
(g,w) &=& u' \mu' (w) v'  \\
&=& [ \ \overbrace{0 \ 0 \ \cdots 0}^n \quad u] 
\left[ \begin{array}{cc}
	\mu(0^i z) & {\bf 0} \\
	\mu(z) & {\bf 0}
	\end{array}
	\right] 
	\left[ \begin{array}{c}
		v \\
		\quad \\
		v 
		\end{array}
		\right]\\
&=& u \mu(z) v \\
&=& (f, z),
\end{eqnarray*}
as desired.

Finally, note that 
$u' \mu'(0) = [ \ \overbrace{0 \ 0 \ \cdots 0}^n \quad u]
\left[ \begin{array}{cc}
	\mu(0) & {\bf 0} \\
	{\bf 0} & I_n
\end{array} \right] 
= [ \ \overbrace{0 \ 0 \ \cdots 0}^n \quad u] = u'$.
\end{proof}

Combining the previous two lemmas, we get

\begin{lemma}
Let $R$ be a semiring, and let
$f:\Sigma_k^* \rightarrow R$ be an $R$-recognizable series.
Then there exists another $R$-recognizable series $g$ such that
$(g, w 0^i) = (f,w)$ for all $i \geq 0$ and all $w \in C_k^R$.
Furthermore, there exists a linear representation
$(u', \mu', v')$ for $g$ satisfying $\mu'(0) v' = v'$.
\label{tz}
\end{lemma}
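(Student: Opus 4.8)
The plan is to obtain $g$ from $f$ by sandwiching Lemma~\ref{lz} between two applications of Lemma~\ref{rev}: reverse, strip leading zeros, reverse again. First I would apply Lemma~\ref{rev} to $f$ to get an $R$-recognizable series $f_1$ with $(f_1,w) = (f,w^R)$ for all $w$. Next I would apply Lemma~\ref{lz} to $f_1$, obtaining an $R$-recognizable series $g_1$ together with a linear representation $(u_1,\mu_1,v_1)$ such that $(g_1, 0^i w) = (f_1, w)$ for all $i \geq 0$ and all $w \in C_k$, and such that $u_1 \mu_1(0) = u_1$. Finally I would apply Lemma~\ref{rev} once more, this time to $g_1$, to produce the desired series $g$ with $(g,w) = (g_1, w^R)$.

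It then remains to verify the two asserted properties. For the functional equation, if $w \in C_k^R$ then $w^R \in C_k$, so for every $i \geq 0$ we have
\[
(g, w0^i) = (g_1, (w0^i)^R) = (g_1, 0^i w^R) = (f_1, w^R) = (f, (w^R)^R) = (f,w),
\]
as required; note that trailing-zero padding $w0^i$ becomes exactly the leading-zero padding $0^i w^R$ under reversal, which is precisely the shape Lemma~\ref{lz} handles. For the linear representation, I would track the transpose construction underlying Lemma~\ref{rev}: it sends a representation $(u,\mu,v)$ of a series $h$ to the representation $(v^T,\, a \mapsto \mu(a)^T,\, u^T)$ of the reversed series, using that the scalar $(h,w^R) = u\,\mu(a_n)\cdots\mu(a_1)\,v$ equals its own transpose $v^T \mu(a_1)^T\cdots\mu(a_n)^T u^T$ for $w = a_1\cdots a_n$. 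Applying this to $(u_1,\mu_1,v_1)$ yields a representation $(u',\mu',v')$ of $g$ with $u' = v_1^T$, $\mu'(a) = \mu_1(a)^T$, and $v' = u_1^T$, whence
\[
\mu'(0) v' = \mu_1(0)^T u_1^T = (u_1 \mu_1(0))^T = u_1^T = v'.
\]

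There is essentially no hard step here; the proof is pure bookkeeping. The only points that need care are (i) checking that ``$w \in C_k^R$'' is exactly the hypothesis making $w^R$ land in $C_k$, so that the conclusion of Lemma~\ref{lz} is applicable, and (ii) observing that the condition $u_1\mu_1(0) = u_1$ supplied by Lemma~\ref{lz} is the mirror image, under the transpose construction of Lemma~\ref{rev}, of the condition $\mu'(0)v' = v'$ we want in the end.
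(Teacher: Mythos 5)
Your proof is correct and is exactly what the paper intends: the paper gives no proof of this lemma beyond the phrase ``Combining the previous two lemmas, we get,'' and your reverse--strip-leading-zeros--reverse argument, together with tracking the condition $u_1\mu_1(0)=u_1$ through the transpose to obtain $\mu'(0)v'=v'$, supplies precisely the omitted bookkeeping.
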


We are now ready to state our equivalence theorem.  This result can be
viewed as an expanded version of \cite{Berstel&Reutenauer:2011},
Prop.\ 1.1, p.\ 84.

\begin{theorem}
Let $(f(n))_{n \geq 0}$ be a sequence taking values in a semiring $R$.
The following are equivalent.
\begin{itemize}
\item[(a)] There exist finitely many sequences  $(f_1 (n))_{n \geq 0}, \ldots,
(f_r(n))_{n \geq 0}$
such that 
\begin{itemize}
\item[(i)] 
$(f(n))_{n \geq 0}$ is an $R$-linear
combination of the $f_i$; and
\item[(ii)] 
for each $i$ and $a$ with $1 \leq i \leq r,$ and $0
\leq a < k$, the subsequence $(f_i (kn+a))_{n \geq 0}$ is an $R$-linear
combination of the $(f_i (n))_{n \geq 0}$.
\end{itemize}

\item[(b)] There exist finitely many sequences 
$(f_1 (n))_{n \geq 0}, \ldots, (f_r (n))_{n \geq 0}$ and
$k$ matrices $B_0, B_1, \ldots, B_{k-1}$ with entries in $R$ such
that if $$V(n) = \left( \begin{array}{c} f_1 (n) \\ \vdots \\ f_r(n)
\end{array} \right),$$ then $V(kn+a) = B_a V(n)$ for $0 \leq a < k$.
and there exists a vector $z \in R^{1 \times r}$
such that $f(n) = z \cdot V(n)$.

\item[(c)] There exist a matrix-valued morphism $\mu: \Sigma_k^* \rightarrow
 R^{r \times r} $ and vectors $u,v$ with entries in $R$,
such that $\mu(0) v = v$ and $f(n) = u\, \mu(w^R)\, v$ for all $w\in\Sigma_k^*$ with $[w]_k=n$. 

\item[(d)] There exist a matrix-valued morphism
$\rho: \Sigma_k^* \rightarrow R^{s \times s}$
and vectors $u', v'$ with entries in $R$, such that $u' =
u' \rho(0)$ and $f(n) = u' \rho(w) v'$ for all $w\in\Sigma_k^*$ with $[w]_k = n$.

\item[(e)] There is an $R$-recognizable series $d$ such that $(d,w) =
f([w]_k)$ for all $w \in C_k$.

\item[(f)] The mapping $(h,w) := f([w]_k)$ defines an $R$-recognizable
series. 

\item[(g)] The mapping $(h',w) := f([w^R]_k)$ defines an $R$-recognizable series. 

\item[(h)] There is an $R$-recognizable series $p$ such that $(p,w) =
f([w^R]_k)$  for all $w \in C_k^R$.

\end{itemize}
\label{equiv}
\end{theorem}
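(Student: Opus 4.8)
The plan is to prove the cycle of equivalences by establishing a convenient set of implications rather than proving each pair directly. I would first observe that (c), (d), (f), (g), (h) are all variants on ``$f$ is computed by a linear representation reading either canonical or arbitrary, forward or reversed, base-$k$ expansions,'' so the bulk of the work is translating between these. The skeleton I would follow is:
\begin{itemize}
\item[(1)] (a) $\Leftrightarrow$ (b): purely linear-algebraic repackaging of the ``$k$-kernel module'' into matrix form.
\item[(2)] (b) $\Rightarrow$ (c): if $V(kn+a) = B_a V(n)$, set $\mu(a) = B_a$, extend multiplicatively; then reading $w^R$ (least significant digit first, or equivalently reading $w$ most-significant-first into the $B_a$'s) reconstructs $V([w]_k)$ from $V(0)$, so take $v = V(0)$, $u = z$. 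The condition $\mu(0)v = v$ is exactly $B_0 V(0) = V(0)$, i.e. $V(0) = V(k\cdot 0 + 0)$, which holds.
\item[(3)] (c) $\Rightarrow$ (f) and (c) $\Rightarrow$ (d): from (c), the map $(h,w) = f([w]_k)$ is recognizable because $w \mapsto \mu(w^R)$ is a recognizable matrix series by Lemma~\ref{rev}, and pre/post-composing with the fixed vectors $u,v$ keeps recognizability; (d) is obtained by reversing the representation again via Lemma~\ref{rev} and then applying Lemma~\ref{lz} to arrange $u' = u'\rho(0)$.
\item[(4)] (f) $\Leftrightarrow$ (g) and (e) $\Leftrightarrow$ (h): immediate from Lemma~\ref{rev}.
\item[(5)] (f) $\Rightarrow$ (e) and (g) $\Rightarrow$ (h): restricting a recognizable series to the regular language $C_k$ (resp.\ $C_k^R$) stays recognizable, and on $C_k$ the value is already $f([w]_k)$.
\item[(6)] (e) $\Rightarrow$ (a): this is the step that recovers the kernel description and is where the leading-zero lemmas earn their keep.
\end{itemize}

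**Carrying out the delicate implication (e) $\Rightarrow$ (a).** Given a recognizable $d$ with $(d,w) = f([w]_k)$ for $w \in C_k$, I would apply Lemma~\ref{lz} to get an equivalent representation $(u',\mu',v')$ with $u'\mu'(0) = u'$, so that now $(d, 0^i w) = u'\mu'(w)v'$ agrees with $f([w]_k)$ for \emph{all} $w \in C_k$ and all $i \ge 0$; in particular $u'\mu'(w)v' = f([w]_k)$ for every $w \in \Sigma_k^*$ (canonical or not), since prepending zeros does not change $[w]_k$. Now define, for each state index $\ell$ of the $s\times s$ representation, the sequence $f_\ell(n) := (u'\mu'(w))_\ell$ where $w$ is any base-$k$ word with $[w]_k = n$ — well-defined by the previous sentence, and $f(n) = \sum_\ell f_\ell(n) v'_\ell$, giving (a)(i). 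For (a)(ii): a word for $kn+a$ is $wa$ where $w$ represents $n$ (reading most-significant-first), so $u'\mu'(wa) = (u'\mu'(w))\mu'(a)$, which expresses $f_\ell(kn+a) = \sum_m f_m(n)\,\mu'(a)_{m\ell}$, an $R$-linear combination of the $f_m(n)$. That closes the loop back to (a).

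**Where the difficulty lies.** The linear-algebra steps (1)–(5) are routine bookkeeping; the subtle point is the semiring setting, where we \emph{cannot} subtract, so we are not free to pass to a minimal representation or to ``solve'' for sequences, and Example~\ref{s2} warns that the kernel itself need not be finitely generated as a module even when $R$-linear combinations of finitely many fixed sequences suffice. The entire job of Lemmas~\ref{rev}, \ref{lz}, and \ref{tz} is to let us move the ``absorb leading/trailing zeros'' normalization through the representation purely by enlarging the state space and padding with zero blocks — no subtraction needed — so that statements about $C_k$ can be upgraded to statements about all of $\Sigma_k^*$. I expect the main obstacle to be stating (e) $\Rightarrow$ (a) carefully enough that the sequences $f_\ell$ are genuinely well-defined on $\Enn$ (not just on $C_k$), which is precisely what the $u'\mu'(0) = u'$ condition from Lemma~\ref{lz} buys us; everything else is reversal (Lemma~\ref{rev}) plus the observation that recognizable series are closed under restriction to a regular language and under multiplication by fixed boundary vectors.
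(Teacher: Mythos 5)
Your route differs from the paper's, which proves the single cycle (a) $\Rightarrow$ (b) $\Rightarrow$ (c) $\Rightarrow$ (d) $\Rightarrow$ (e) $\Rightarrow$ (f) $\Rightarrow$ (g) $\Rightarrow$ (h) $\Rightarrow$ (a). Your individual steps are sound: the shortcut (c) $\Rightarrow$ (f), obtained by applying Lemma~\ref{rev} to the series $w\mapsto u\,\mu(w)\,v$, is legitimate, and your (e) $\Rightarrow$ (a) is a clean variant of the paper's (h) $\Rightarrow$ (a). By defining $f_\ell(n)$ as a component of the row vector $u'\mu'(w)$ for an \emph{arbitrary} $w$ representing $n$ (well-defined precisely because $u'\mu'(0)=u'$ from Lemma~\ref{lz}), you absorb the paper's separate case $(a,n)=(0,0)$ into the well-definedness argument, since $\epsilon\cdot 0$ is a valid non-canonical representation of $0$; the paper instead works with the column vectors $\gamma'((n)_k^R)\,d'$ built from the canonical reversed representation and must handle $(0)_k^R=\epsilon\neq 0\cdot\epsilon$ by hand.

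There is, however, one structural defect: in your implication graph, condition (d) appears only as the target of (c) $\Rightarrow$ (d) and is never shown to imply anything else. You have therefore proved that the other seven conditions imply (d), but not the converse, so the eight-fold equivalence is not closed as written. The repair is a single line --- (d) $\Rightarrow$ (e) is immediate by taking $d=(u',\rho,v')$, exactly as the paper does --- but the skeleton omits it. A smaller remark: (f) $\Rightarrow$ (e) needs no restriction to $C_k$ at all, since (e) only constrains values on $C_k$; taking $d=h$ already works, so you can avoid invoking closure under Hadamard product with a characteristic series.
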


\begin{proof}
(a) $\implies$ (b):    Since each $f_i(kn+a)$ is an $R$-linear combination of
the $f_i$, we can express this as the matrix product
$V(kn+a) = B_a V(n)$. Since $f(n)$ is an $R$-linear combination of
the $f_i(n)$, we can express this as $f(n) = z \cdot V(n)$ for a
suitable vector $z$.

\bigskip

(b) $\implies$ (c):    In fact we can take $v = V(0)$, $\mu(a) = B_a$
for $0 \leq a < k$,
and $u = z$.  Let us prove
by induction on $n$ that $V(n) = \mu((n)_k^R) V(0)$.    The base case 
is $n = 0$.  Then $(n)_k = \epsilon$, so $\mu((n)_k^R) = I$, the identity
matrix, and $V(0)= I \cdot V(0)$.  

Now assume the result is true for all $n' < n$, and we prove it for $n$.
Write $n = kn' + a$ for $0 \leq a < k$.  Then by induction
$V(n') = \mu((n')_k^R) V(0)$.   Then $V(n) = V(kn'+a) = \mu(a) V(n') =
\mu(a) \mu((n')_k^R)  V(0) = \mu((n)_k^R) V(0)$.  

We have $f(n) = z V(n)$.  Furthermore, from $V(kn+a) = \mu(a) V(n)$
with $k = 0, n = 0, a = 0$, we get $v = \mu(0) v$.

Finally, if $w\in\Sigma_k^*$ is such that $[w]_k=n$, then $w^R=(n)_k^R0^i$ for some $i\ge0$.
Because $v = \mu(0)^i v$, we have $u \mu(w^R)v=u\mu((n)_k^R)\mu(0)^iv=u\mu((n)_k^R)v=f(n)$.

\bigskip

(c) $\implies$ (d):  Let $\rho(i) := \mu(i)^T$, $u' := v^T$, and 
$v' := u^T$.    Then
from (c) we get $f(n) = u' \rho((n)_k) v'$.    Furthermore, from $v = 
\mu(0) v$
we get $v^T = v^T \mu(0)^T$, and so 
\begin{equation}
u' = u' \rho(0).
\label{star}
\end{equation}
Let $w$ be any
word such that $[w]_k = n$.  Then we can write $w = 0^i (n)_k$ for some
$i \geq 0$.  Then $u' = u' \rho(0)^i$ from (\ref{star}), and hence
$f(n) = u' \rho(w) v'$, as desired.

\bigskip

(d) $\implies$ (e): We can take $d = (u', \rho, v')$.  

\bigskip

(e) $\implies$ (f):  
Let $d$ be an $R$-recognizable series such that $(d,w) =
f([w]_k)$ for all words $w \in C_k$.  Now apply Lemma~\ref{lz};
we obtain a new $R$-recognizable series $h$ with
$(h,0^i x) = (d,x)$ for all $i\geq 0$ and all $x \in C_k$.  
Let $w \in \Sigma_k^*$.  Then we can write $w = 0^j x$, where
$x \in C_k$.  Then $(h,w) = (h,0^j x) = (d,x) = f([x]_k) = f([0^j x]_k) = f([w]_k)$.  

\bigskip

(f) $\implies$ (g):
Using Lemma~\ref{rev}, if $h'$ is the series defined by
$(h', w) = (h, w^R)$, then $h'$ is also $R$-recognizable.
We have $(h',w) = (h, w^R) = f([w^R]_k)$.  

\bigskip

(g) $\implies$ (h): Trivial.

\bigskip

(h) $\implies$ (a): By Lemma~\ref{tz} there exists an $R$-recognizable series
$p'$ with linear representation $(c', \gamma', d')$
such that $(p', w0^i) = (p,w)$ for all $i \geq 0$ and
all $w \in C_k^R$.  Furthermore, $\gamma'(0) d' = d'$.

Define
the sequences $(f_i(n))_{n \geq 0}$ as follows:
$$ \left[ \begin{array}{c} f_1 (n) \\
f_2(n) \\
\vdots \\
f_t(n)
\end{array} \right] = \gamma'( (n)_k^R ) \cdot d'.$$
Then
\begin{equation}
\left[ \begin{array}{c} f_1 (kn+a) \\
f_2(kn+a) \\
\vdots \\
f_t(kn+a)
\end{array} \right] = \gamma'( (kn+a)_k^R ) \cdot d' .
\label{an00}
\end{equation}

If $(a,n) \not= (0,0)$ then 
\begin{eqnarray*}
\gamma'( (kn+a)_k^R ) \cdot d'  &=& \gamma'( a \cdot (n)_k^R ) \cdot d' \\
&=& \gamma'(a) \gamma'((n)_k^R) \cdot d' \\
&=& \gamma'(a) \left[ \begin{array}{c} f_1 (n) \\
f_2(n) \\
\vdots \\
f_t(n)
\end{array} \right],
\end{eqnarray*}
which expresses each $f_i(kn+a)$ as a linear combination of 
$f_1 (n), f_2 (n), \ldots, f_t (n)$.

If $(a,n) = (0,0)$,
then from (\ref{an00}) we get 
$$ 
\left[ \begin{array}{c} f_1 (kn+a) \\
f_2(kn+a) \\
\vdots \\
f_t(kn+a)
\end{array} \right] = \gamma'( (0)_k^R ) \cdot d' =
\gamma'(\epsilon) \cdot d' = d' = \gamma'(0) \cdot d' = 
\gamma'(a) \left[\begin{array}{c} f_1 (n) \\
f_2(n) \\
\vdots \\
f_t(n)
\end{array} \right].
$$

Furthermore,
\[f(n) = (p,(n)_k^R) \\
= c' \cdot \gamma'((n)_k^R) \cdot d' \\
= c' \cdot  \left[ \begin{array}{c} f_1 (n) \\
f_2(n) \\
\vdots \\
f_t(n)
\end{array} \right],\]
which expresses $f(n)$ as a linear combination of
$f_1 (n), f_2 (n), \ldots, f_t (n)$.

\end{proof}

We can now formally define $k$-regular sequences over a semiring.

\begin{definition}
\normalfont
Suppose $R$ is a semiring, and $f:\Enn \rightarrow R$ is a 
sequence with values in $R$.  If any of the conditions
(a)--(h) in Theorem~\ref{equiv} hold, then we say that
$f$ is $(R, k)$-regular.
\end{definition}

\begin{corollary}
If $f$ is a sequence such that
$f(n)$ is an $R$-linear combination of some finite subset of its
$k$-kernel, then it is $(R,k)$-regular.
\label{ker}
\end{corollary}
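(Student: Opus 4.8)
The plan is to check condition (a) of Theorem~\ref{equiv}. I take the hypothesis to mean that the $k$-kernel of $f$ is contained in the $R$-module generated by some finite subset $K'=\{g_1,\dots,g_r\}$ of that kernel; in particular $(f(n))_{n\ge0}$ lies in this module, being the kernel element with $e=0$, $c=0$. I would take $f_1,\dots,f_r$ to be $g_1,\dots,g_r$. Then part (i) of~(a) — that $(f(n))_{n\ge0}$ is an $R$-linear combination of the $f_i$ — is immediate.

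For part (ii) of~(a) I would use the one structural fact that makes everything go: the $k$-kernel of $f$ is closed under the operations $g\mapsto (g(kn+a))_{n\ge0}$. Concretely, if $g_i(n)=f(k^{e}n+c)$ with $0\le c<k^{e}$, then
\[
g_i(kn+a)\;=\;f\!\left(k^{e}(kn+a)+c\right)\;=\;f\!\left(k^{e+1}n+(k^{e}a+c)\right),
\]
and $0\le k^{e}a+c\le k^{e}(k-1)+(k^{e}-1)=k^{e+1}-1$, so $(g_i(kn+a))_{n\ge0}$ is exactly the $k$-kernel element of $f$ indexed by $(e+1,\ k^{e}a+c)$. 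By the hypothesis it therefore lies in the $R$-module generated by $K'=\{f_1,\dots,f_r\}$, which is precisely what part~(ii) requires. Thus condition~(a) holds, and by Theorem~\ref{equiv}, $f$ is $(R,k)$-regular.

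I do not expect a genuine obstacle here: the entire content is the observation that the $k$-kernel is stable under the maps $g\mapsto(g(kn+a))_{n\ge0}$, so once a finite subset of the kernel spans the whole kernel, the stability demanded by condition~(a) is automatic; the only thing to be careful about is the bookkeeping $0\le k^{e}a+c<k^{e+1}$, which certifies that the shifted sequence is still a bona fide kernel element. The reason this corollary is worth isolating, despite its short proof, is that over a general semiring $(R,k)$-regularity is a priori weaker than having a finitely generated $k$-kernel module (compare Example~\ref{s2}), so Corollary~\ref{ker} records the one implication between the two notions that survives the passage from rings to semirings.
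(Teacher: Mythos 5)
Your proof is correct and follows exactly the route the paper intends: the paper's own proof is the one-line ``Follows from Theorem~\ref{equiv}~(a),'' and you have simply filled in the details, including the right reading of the (somewhat loosely worded) hypothesis as ``the whole $k$-kernel lies in the $R$-module generated by a finite subset of itself'' and the bookkeeping $0\le k^{e}a+c<k^{e+1}$ showing the kernel is stable under $n\mapsto kn+a$.
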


\begin{proof}
Follows from Theorem~\ref{equiv} (a).
\end{proof}

However, unlike the case of $(\Zee,k)$- or $(\Que, k)$-regular
sequences, the converse to Corollary~\ref{ker} does not hold,
as we have seen above in Example~\ref{s2}.

\section{$\Enn$-recognizable series}
\label{enn}

In this section, our semiring is $R = \Enn$, the non-negative integers.
We prove a characterization of $\Enn$-recognizable series in terms
of automata and transducers
(Theorem~\ref{one}, below).

We recall the notion of nondeterministic finite automaton (NFA):  it is
a 5-tuple $M= (Q,\Sigma, \delta, q_0, F)$, where $Q$ is a set of states,
$\Sigma$ is a finite alphabet, $q_0$ is the initial state, and $F \subseteq Q$
is the set of final states, and $\delta:Q \times \Sigma \rightarrow 2^Q$
is the transition function, extended to $Q \times \Sigma^*$ in the
obvious way.  A \textit{path labeled $w = a_1 \cdots a_n$}
in an NFA is a sequence of states $(p_0, p_1, \ldots, p_n)$
such that $p_{i+1} \in \delta(p_i, a_{i+1})$ for $0 \leq i < n$.
It is an \textit{accepting path} if $p_0 = q_0$ and $p_n \in F$.

We will also be concerned with {\it nondeterministic
uniform finite-state transducers}. 
Such a transducer produces an output of the same length for every
input symbol.  Formally, such a transducer $T = (Q, \Sigma, \Delta,
E, q_0, F)$, where $E \subseteq Q \times \Sigma \times \Delta^l \times Q$
is the set of permissible transitions.  A transition $(q_i, a, y, q_j)$
means that if the transition is in state $q_i$ then on input $a$ it has
the option (nondeterministically) of outputting $y$ and entering state $q_j$.
The output of $T$ on input $w$ is the set of all words formed by
concatenating the outputs on a path labeled $w$
from $q_0$ to some state of $F$.

Finally, given words $w = a_1 a_2 \cdots a_n$ and
$x = b_1 b_2 \cdots b_n $ of the same length, but defined
over possibly different alphabets (say, $\Sigma$ and $\Delta$, respectively),
we define the word $w \times x$ to be the word 
$z = [a_1, b_1] [a_2, b_2] \cdots [a_n, b_n]$ over the alphabet
$\Sigma \times \Delta$ .   In this case, we define the projection maps
$\pi_1(z) = w$ and $\pi_2(z) = x$.

\begin{theorem}
Let $f: \Sigma^* \rightarrow \Enn$ be a formal series with $(f,\epsilon) = 0$. Then the following are equivalent.

\begin{itemize}

\item[(a)] $f$ is $\Enn$-recognizable.

\item[(b)] There exists an NFA $M = (Q, \Sigma, \delta, q_0, F)$ such that
for all $w \in \Sigma^*$, there are exactly $(f,w)$ paths labeled $w$
from $q_0$ to a state of $F$.

\item[(c)] There is an alphabet $\Delta$ and a regular language
$L \subseteq (\Sigma \times \Delta)^*$ such that 
$$(f,w) = \left| \lbrace z \in L \ : \ \pi_1(z) = w \rbrace \right|$$
for all words $w$.

\item[(d)] There is an alphabet $\Delta$ and a
 (nondeterministic) $1$-uniform
finite-state
transducer $T:\Sigma^* \rightarrow \Delta^*$ such that $(f,w) = |T(w)|$ 
for all words $w$.

\end{itemize}
\label{one}
\end{theorem}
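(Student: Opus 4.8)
The plan is to establish the cyclic chain (a) $\implies$ (b) $\implies$ (c) $\implies$ (d) $\implies$ (a), with the hypothesis $(f,\epsilon)=0$ used throughout only to reconcile the empty-word base case (it forces $q_0\notin F$ in each automaton we build and $uv=0$ in each linear representation). For (a) $\implies$ (b): from a rank-$n$ linear representation $(u,\mu,v)$ of $f$ over $\Enn$, build an NFA on the state set $\{1,\dots,n\}$ together with a fresh initial state $s$ and a fresh final state $t$, these being the only initial and final states. On each letter $a$, put $(\mu(a))_{ij}$ parallel edges from $i$ to $j$, $(u\mu(a))_j$ edges from $s$ to $j$, $(\mu(a)v)_i$ edges from $i$ to $t$, and $(u\mu(a)v)$ edges from $s$ to $t$. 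Since $t$ is a sink, an accepting path on a nonempty word $a_1\cdots a_m$ with $m\ge 2$ must take one edge out of $s$, then $m-2$ edges inside $\{1,\dots,n\}$, then one edge into $t$; summing edge multiplicities over all such paths telescopes to $u\mu(a_1)\cdots\mu(a_m)v=(f,a_1\cdots a_m)$, the single-letter case is covered by the direct $s$-to-$t$ edges, and no accepting path reads $\epsilon$ because $s\neq t$.

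For (b) $\implies$ (c), take $\Delta=Q$ and let $L$ be the set of words $[a_1,p_1][a_2,p_2]\cdots[a_m,p_m]$ for which $(q_0,p_1,\dots,p_m)$ is an accepting path of $M$ labeled $a_1\cdots a_m$; this $L$ is regular (it is $M$ read over $\Sigma\times Q$, recording the running state in the second track), and the words $z\in L$ with $\pi_1(z)=w$ biject with the accepting paths labeled $w$, so $|\{z\in L:\pi_1(z)=w\}|=(f,w)$. For (c) $\implies$ (d), fix a \emph{deterministic} finite automaton $A=(Q,\Sigma\times\Delta,\delta,q_0,F)$ for $L$, and let $T$ be the $1$-uniform transducer that on input letter $a$ guesses some $b\in\Delta$, outputs $b$, and moves as $A$ would on $[a,b]$ (accepting in $F$); then $x\in T(w)$ iff $w\times x\in L$, and since $A$ is deterministic distinct such $x$ arise from distinct words of $L$, hence $|T(w)|=|\{z\in L:\pi_1(z)=w\}|=(f,w)$.

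For (d) $\implies$ (a), note that $G:=\{w\times x : x\in T(w)\}$ is a regular language over $\Sigma\times\Delta$, obtained by reading the input and output tracks of $T$ in parallel. Determinize its automaton to $A'=(Q',\Sigma\times\Delta,\delta',q_0',F')$, and define a rank-$|Q'|$ representation of $f$ over $\Enn$ by $(M_a)_{p,q}=|\{b\in\Delta:\delta'(p,[a,b])=q\}|$ for $a\in\Sigma$, taking $u$ to be the indicator row vector of $q_0'$ and $v$ the indicator column vector of $F'$. Then $uM_{a_1}\cdots M_{a_m}v$ counts the accepting runs of $A'$ compatible with $a_1\cdots a_m$, i.e.\ the words $z\in G$ with $\pi_1(z)=a_1\cdots a_m$; since $A'$ is deterministic this equals the number of distinct outputs $x\in T(a_1\cdots a_m)$, namely $(f,a_1\cdots a_m)$, and $uv=[q_0'\in F']=0$.

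The step I expect to demand the most care is the interface between counting runs or paths of an automaton and counting the \emph{set} $T(w)$ of distinct outputs --- that is, (c) $\Longleftrightarrow$ (d) and its reuse inside (d) $\implies$ (a). A nondeterministic transducer can emit the same output along several computations, so one must first pass to a \emph{deterministic} automaton for the product-alphabet language; determinism is precisely what puts the objects being counted (words of $L$, runs of $A'$) in bijection with the distinct outputs of $T$, and without it the count would be off. All the remaining steps are routine bookkeeping with non-negative integer matrices and regular languages.
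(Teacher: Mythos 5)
Your cycle (a) $\Rightarrow$ (b) $\Rightarrow$ (c) $\Rightarrow$ (d) $\Rightarrow$ (a) is the same one the paper follows, and the steps (b) $\Rightarrow$ (c), (c) $\Rightarrow$ (d), and (d) $\Rightarrow$ (a) are sound. The one genuine gap is in (a) $\Rightarrow$ (b): you place ``$(\mu(a))_{ij}$ parallel edges from $i$ to $j$,'' but statement (b) counts paths of an NFA as defined earlier in the paper, namely sequences of states $(p_0,\dots,p_m)$ with $p_{i+1}\in\delta(p_i,a_{i+1})$ for a transition function $\delta:Q\times\Sigma\rightarrow 2^Q$. In that model there is either one $a$-transition from $i$ to $j$ or none; a labeled multigraph with edge multiplicities is not an NFA, and if you collapse the parallel edges the number of paths labeled $a_1\cdots a_m$ becomes the count obtained from the $\{0,1\}$-truncations of the matrices $\mu(a_i)$ rather than $u\mu(a_1)\cdots\mu(a_m)v$. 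This is precisely the point the paper identifies as the crux of this implication (``this is not sensible if $a$ is a single symbol, as there is either one directed edge\dots or none''), and its remedy is to blow each state $r$ up into $m$ copies $[r,1],\dots,[r,m]$, where $m$ is the largest entry of any $\mu(a)$, and to let $[i,j]$ go on letter $a$ to the first $\mu(a)_{i,r}$ copies of each $r$, so that multiplicity is realized by fan-out to distinct states. Your source/sink normalization of $u$ and $v$ is essentially the paper's rank-$(t+2)$ normalization (quoted from Sakarovitch) and combines with that blow-up without trouble, so the gap is easily repaired, but as written your construction does not yield an NFA witnessing (b).

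On the other side of the ledger, your (d) $\Rightarrow$ (a) is more careful than the paper's. You pass to the regular language $G=\{w\times x : x\in T(w)\}$ and determinize before reading off a linear representation, so that accepting runs are in bijection with distinct outputs. The paper instead defines $\mu(a)_{i,j}$ directly as the number of $b$ with $(q_i,a,b,q_j)\in E$ in the given nondeterministic transducer; that counts accepting computations, which can exceed $|T(w)|$ when two different state paths emit the same output word. Your determinization step is exactly what closes that loophole, and it is the right way to handle the path-versus-output distinction you flag at the end.
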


\begin{proof}
\noindent (a) $\implies$ (b):  Since $f$ is recognizable, there is
a matrix representation $(u, \mu, v)$ such that
$(f,w) = u \mu(w) v$ for all $w \in \Sigma^*$.  
By an
exercise in \cite{Sakarovitch:2009}, Ex.\ III.3.3, p.\ 426,
we can, without loss of generality, assume that
$u = [1\ \overbrace{0\ 0\ \cdots\ 0}^{n-1}]$ and
$v = [\overbrace{0\ 0\ \cdots\ 0}^{n-1}\ 1]^T$ for some $n \geq 2$.
For completeness, we give the proof here:

Given a rank-$t$ representation
$(u, \mu, v)$, we produce a new rank-($t+2$) representation
$(u', \mu', v')$ defined as follows:

\begin{eqnarray*}
u' &=& [1\ \overbrace{0\ 0\ \cdots\ 0}^{t+1}] \\
\mu'(a) &=& \left[ 
	\begin{array}{ccc}
	0 & [ u \cdot \mu(a) ] & [u \cdot \mu(a) \cdot v] \\
	\begin{array}{c}
	0 \\
	\vdots \\
	0
	\end{array} & 
	\left[
	\begin{array}{c}
	\\
	\mu(a) \\
	\\
	\end{array} \right]
	&
	\left[ \begin{array}{c}
	\\ 
	\mu(a) \cdot v \\
	\\
	\end{array} \right] \\
	0 & 0 \cdots 0 & 0 
	\end{array} \right] \\
v' &=& [\overbrace{0\ 0\ \cdots\ 0}^{t+1}\ 1]^T .
\end{eqnarray*}

Now an easy induction on $|w|$ shows that, for $|w| \geq 1$,
that
$$
\mu'(w) =\left[ 
	\begin{array}{ccc}
	0 & [ u \cdot \mu(w) ] & [u \cdot \mu(w) \cdot v] \\
	\begin{array}{c}
	0 \\
	\vdots \\
	0
	\end{array} & 
	\left[
	\begin{array}{c}
	\\
	\mu(w) \\
	\\
	\end{array} \right]
	&
	\left[ \begin{array}{c}
	\\ 
	\mu(w) \cdot v \\
	\\
	\end{array} \right] \\
	0 & 0 \cdots 0 & 0 
	\end{array} \right] .
$$
It follows that, for $w \not= \epsilon$, 
$u' \mu'(w) v' = u \mu(w) v$.  For $w = \epsilon$, we have
$u' \mu'(w) v' = 0$.  This completes the proof of the exercise.

Now that this somewhat technical point has been handled, we turn to the
idea of the construction.  The desired interpretation is that
$\mu(w)_{i,j}$ should count the number of paths labeled $w$
from state $i$ to state $j$.  However, this is not sensible if
$a$ is a single symbol, as there is either one directed edge
in the automaton from $i$ to $j$ labeled $a$, or none.  To get around
this problem, we make multiple copies of each state, and create a transition
from $i$ to $\mu(w)_{i,j}$ copies of state $j$.

    From the rank-$n$ linear representation for $f$, namely
$(u, \mu, v)$, we create an NFA
$M = (Q, \Sigma, \delta, q_0, F)$ with $(f,w)$ paths labeled $w$.
Let $m$ be the maximum entry in all the $\mu(a)$, $a \in\Sigma$.
Define
\begin{eqnarray*}
Q &=& \lbrace [i,j] \ :\ 1 \leq i \leq n,\ 1 \leq j \leq m \rbrace \\
q_0 &=& [1,1] \\
F &=& \lbrace [n, s] \ : \ 1 \leq s \leq m \rbrace \\
\delta([i,j], a) &=& \lbrace [r,s] \ : \ 1 \leq r \leq n, \ 
	1 \leq s \leq \mu(a)_{i,r} \rbrace,
\end{eqnarray*}
where by $\mu(a)_{i,r}$ we mean the entry in row $i$ and column $r$ of
the matrix $\mu(a)$.

To see that this works, let $P_{i,j,r} (w)$ denote the number of paths
labeled $w$
from $[i,j]$ to some member of $\lbrace [r,s] \ : \ 1 \leq s \leq m \rbrace$.
We claim that 
\begin{equation}
P_{i,j,r} (w) = \mu(w)_{i,r}
\label{pmu}
\end{equation}
for all $i, j, r, w$ such that
$1 \leq i,r \leq n$, $1 \leq j \leq m$, and $w \in \Sigma^*$.

The proof is by induction on $|w|$.  The base case is $|w| = 0$.  
In this case 
$$\mu(w)_{i,r} =  \begin{cases}
	1, & \text{if $i = r$}; \\
	0, & \text{otherwise.}
	\end{cases}
$$
and the only path of length $0$ from state $[i,j]$ is to $[i,j]$ itself,
so $P_{i,j,r} (w) = \mu(w)_{i,r}$.

Now assume (\ref{pmu}) holds for all $|w'| < |w|$; we prove it
for $w$.  Write $w = ax$ with $a \in \Sigma$.
Break the path labeled $w$ into two pieces, one labeled
$a$ and the other labeled $x$.  Then
\begin{eqnarray*}
P_{i,j,r} (ax) &=& \sum_{{i', j':} \atop
	{[i', j'] \in \delta([i,j], a) }} P_{i', j', r} (x) \\
&=& \sum_{{i',j' :} \atop
	{[i', j'] \in \delta([i,j], a) }} \mu(x)_{i', r} \quad \quad
		\text{(by induction)}\\
&=& \sum_{i'} \mu(a)_{i, i'} \cdot \mu(x)_{i', r} \\
&=& \mu(ax)_{i,r}, 
\end{eqnarray*}
which completes the induction.

Thus $u \cdot \mu(w) \cdot v = \mu(w)_{1,n} = P_{1,1,n} (w)$, as desired.

\bigskip

\noindent (b) $\implies$ (c):
Given the NFA $M = (Q, \Sigma, \delta, q_0, F)$, we
take $\Delta = Q$ and define
$$L = \lbrace w \times x \ : \ w \in \Sigma^*, \ x \in \Delta^*
\text{ such that } q_0 x \text{ is an accepting path for } w \rbrace.$$
Since there are $(f,w)$ accepting paths for $w$ in $M$, and each
such path begins with $q_0$, the result follows.  Clearly $L$ is regular,
as it can be accepted by a simple modification of $M$.

\bigskip

\noindent (c) $\implies$ (d):
Consider a DFA $M$ accepting $L$.  We construct a transducer $T$
with the same set of states, 
initial state, and 
set of final states as $M$.
For each transition in $M$ of the form
$\delta(q_i, [a,b]) = q_j$, we define a transition in $T$ 
from $q_i$ to $q_j$ with input $a$ and output $b$.
It follows that on input $w$, the transducer $T$ outputs all those 
$x$ of the same length for which $w \times x \in L$.

\bigskip

\noindent (d) $\implies$ (a):  Given such a transducer $T = (Q, \Sigma, 
\Delta, E, q_1, F)$, we define the matrix representation $(u, \mu, v)$ for the series $f$
as follows:  if $Q = \lbrace q_1, \ldots, q_n \rbrace$, 
then $u = [1\ \overbrace{0\ 0\ \cdots\ 0}^{n-1}]$ and
$v$ has a $1$ in the entries corresponding to final states
of $F$, and $0$ elsewhere.  Since $|T(\epsilon)| = 0$
by hypothesis, it must be that $q_1 \not\in F$.
Now define $\mu(a)_{i,j}$ to be the number of symbols
$b$ such that $(q_i, a, b, q_j) \in E$.  

\end{proof}

\begin{openproblem}
\normalfont
The preceding theorem would be true if we replace the $1$-uniform
finite-state transducer with any transducer where no two different
paths labeled $w$ give the same output.  One way to ensure this is
that the output labels form a code, and this is clearly true if the
outputs are all of the same length.  What happens if the output
labels do not form a code?  Is the result still true?
\end{openproblem}

\begin{remark}
\normalfont
Carpi and Maggi \cite{Carpi&Maggi:2001}
defined the class of $k$-synchronized sequences, a class
which contains the $k$-automatic sequences and is properly contained in
the class of $k$-regular sequences.  A sequence $(u_n)_{n \geq 0}$ is
$k$-synchronized if the relation $ \{( (n)_k, (u_n)_k ) : n \geq 0 \}$ is a
right-synchronized rational relation.  Roughly speaking, this means
that the relation is realized by a length-preserving rational
transduction, except that we also permit the presence of ``padding''
symbols at the end of one or the other component of the input.  Our
transducer-based characterization, combined with Theorem~\ref{equiv}, characterizes the
more general class of $k$-regular sequences.
\end{remark}

In the usual case where $|\Sigma| \geq 2$, we can take $\Delta$ to be
$\Sigma^l$ for a suitable $l$, as the following theorem shows.

\begin{theorem}
Let $f: \Sigma^* \rightarrow \Enn$ be a formal series with
$|\Sigma| \geq 2$ and $(f,\epsilon) = 0$. Then the following are equivalent.

\begin{itemize}
\item[(a)] $f$ is $\Enn$-recognizable.

\item[(b)] There is an integer $l \geq 1$ and a regular language 
$L \subseteq (\Sigma \times \Sigma^l)^*$ such that 
$$(f,w) = \left| \lbrace z \in L \ : \ \pi_1(z) = w \rbrace \right|.$$
for all words $w$.

\item[(c)] There is an integer $l \geq 1$ and a
 (nondeterministic) $l$-uniform
finite-state
transducer $T:\Sigma^* \rightarrow \Sigma^*$ such that $(f,w) = |T(w)|$ .

\end{itemize}
\label{two}
\end{theorem}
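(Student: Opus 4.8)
The plan is to reduce the theorem to Theorem~\ref{one} by showing that, when $|\Sigma|\ge 2$, the unrestricted output alphabet $\Delta$ appearing in parts (c) and (d) of Theorem~\ref{one} can always be taken to be a Cartesian power $\Sigma^l$ of the input alphabet. Concretely, I would establish the cycle (a) $\implies$ (b) $\implies$ (c) $\implies$ (a). Of these, (c) $\implies$ (a) is essentially part (d) $\implies$ (a) of Theorem~\ref{one} after a harmless regrouping: given an $l$-uniform transducer $T\colon\Sigma^*\to\Sigma^*$ as in (c), treat each block of $l$ consecutive output symbols as a single letter of $\Sigma^l$; this turns $T$ into a $1$-uniform transducer $T'\colon\Sigma^*\to(\Sigma^l)^*$, and since block-chopping is a bijection on output words we have $|T'(w)|=|T(w)|=(f,w)$ for all $w$, so $T'$ witnesses part (d) of Theorem~\ref{one} with $\Delta=\Sigma^l$, whence $f$ is $\Enn$-recognizable. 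Likewise, (b) $\implies$ (c) reuses the construction in part (c) $\implies$ (d) of Theorem~\ref{one}. The genuinely new ingredient is (a) $\implies$ (b), which is also the only place the hypothesis $|\Sigma|\ge 2$ is needed.

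For (a) $\implies$ (b), assume $f$ is $\Enn$-recognizable and apply part (a) $\implies$ (c) of Theorem~\ref{one} to obtain an alphabet $\Delta$ and a regular language $L'\subseteq(\Sigma\times\Delta)^*$ with $(f,w)=|\{z\in L'\colon\pi_1(z)=w\}|$ for all $w$. Since $|\Sigma|\ge 2$, there is an $l$ with $|\Sigma|^l\ge|\Delta|$; fix an injection $\phi\colon\Delta\to\Sigma^l$ and let $h\colon(\Sigma\times\Delta)^*\to(\Sigma\times\Sigma^l)^*$ be the letter-to-letter morphism determined by $h([a,b])=[a,\phi(b)]$. Put $L:=h(L')$, which is regular, being the image of a regular language under a (letter-to-letter) morphism. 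Because $\phi$ is injective, $h$ is injective on words and $\pi_1(h(z))=\pi_1(z)$ for all $z$; moreover every word of $L$ equals $h(z)$ for a \emph{unique} $z\in L'$. Hence $h$ restricts, for each fixed $w$, to a bijection from $\{z\in L'\colon\pi_1(z)=w\}$ onto $\{z'\in L\colon\pi_1(z')=w\}$, so $(f,w)=|\{z'\in L\colon\pi_1(z')=w\}|$, which is exactly (b) for this $L$ and $l$. The point I would treat with care --- and what I regard as the only real obstacle --- is precisely this last equality: the encoding $\phi$ must be tight enough that $L$ acquires no spurious words projecting to $w$ under $\pi_1$, for which genuine injectivity of $h$ is essential; a lossy (say, merely prefix-free) encoding would yield only $(f,w)\le|\{z'\in L\colon\pi_1(z')=w\}|$.

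For (b) $\implies$ (c), I would mimic the construction of part (c) $\implies$ (d) in the proof of Theorem~\ref{one}: take a DFA $M$ accepting $L$, and for each transition $\delta(q_i,[a,y])=q_j$ with $y\in\Sigma^l$, create a transducer transition from $q_i$ to $q_j$ that reads $a$ and writes the length-$l$ word $y$. The resulting $T$ is an $l$-uniform transducer $T\colon\Sigma^*\to\Sigma^*$, and on input $w=a_1\cdots a_m$ its outputs are exactly the words $y_1\cdots y_m$ (each $y_i\in\Sigma^l$) with $w\times(y_1\cdots y_m)\in L$ --- the fixed block length makes this parsing unambiguous --- so $|T(w)|=|\{z\in L\colon\pi_1(z)=w\}|=(f,w)$. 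Finally, $(f,\epsilon)=0$ forces $\epsilon\notin L$, hence $q_0\notin F$, so $|T(\epsilon)|=0$, as required. Together with the (c) $\implies$ (a) step above, this closes the cycle and proves the theorem.
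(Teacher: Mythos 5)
Your proposal is correct and takes essentially the same approach as the paper, whose entire proof is the remark that one chooses $l$ with $|\Sigma|^l \ge |\Delta|$ and uses elements of $\Sigma^l$ in place of those of $\Delta$ in the proof of Theorem~\ref{one}. Your injection $\phi\colon\Delta\to\Sigma^l$ and the resulting letter-to-letter re-encoding is exactly the detailed version of that remark.
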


\begin{proof}
Just like the proof of Theorem~\ref{one}.  The only difference is that
we need to choose 
$l$ large enough
so that $|\Sigma|^l \geq |\Delta|$; then we just use 
elements of $\Sigma^l$ instead of those in $\Delta$.
\end{proof}

\section{$\Enni$-recognizable series}

In this section,
we consider the case where the underlying semiring is
$R = \Enni$, the extended non-negative integers.  Roughly speaking, this extension
corresponds to the case where a nondeterministic finite automaton or transducer
is extended by allowing $\epsilon$-transitions.

In addition to the usual
interpretation for addition and multiplication of natural numbers,
we need the following additional rules that turn $\Enni$ into a semiring:
\begin{itemize}
\item[(i)] $a + \infty = \infty + a = \infty$ for all $a \in \Enn_{\infty}$; 
\item[(ii)] $a \cdot \infty = \infty \cdot a = \infty$ for all $a \not = 0$;
\item[(iii)] $0 \cdot \infty = \infty \cdot 0 = 0$.
\end{itemize}
Matrices and vectors with entries in $\Enni$ can now be multiplied using
the usual rules for such multiplication, in addition to the rules
(i)--(iii), as needed.

Let $L$ be a regular
language.  The \textit{characteristic series} of $L$,
denoted $\chi_L$, is the formal series such that
$(\chi_L, w) = 1$ if $w \in L$ and $0$ otherwise.
The {\it Hadamard product} of two series, $h \had h'$, is the
term-by-term product, $(h\had h')(w) = h(w) h'(w)$.  The essential
lemma is the following:

\begin{lemma}
Given a recognizable formal series $f$ over $\Enni$, we can express
it as 
$$f = \chi_{\overline{L}} \had g + \chi_L \cdot \infty,$$
where $g$ is a recognizable formal 
series over $\Enn$ and  $L$ is a regular language.
Furthermore, in the sum, we never add $\infty$ to a value other than $0$.
\label{twoe}
\end{lemma}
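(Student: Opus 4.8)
The plan is to start from a linear representation $(u,\mu,v)$ of the $\Enni$-recognizable series $f$ and to separate the "infinite" behaviour from the "finite" behaviour at the level of the matrices. First I would observe that each entry of each matrix $\mu(a)$, as well as each entry of $u$ and $v$, is either in $\Enn$ or equal to $\infty$. Replace every occurrence of $\infty$ by a fresh symbol (say, keep $\infty$) and also form the "collapsed" representation $(\overline{u},\overline{\mu},\overline{v})$ over the Boolean semiring by sending every nonzero entry (finite or infinite) to $1$ and $0$ to $0$; this is a genuine morphism because $\fmod 0 \ \cdot\ $ absorbs and nonzero times nonzero is nonzero. The Boolean representation recognizes a regular language $L'$ consisting of those $w$ for which there is at least one "successful path" through the automaton; but we want $L$ to be the set of $w$ with $(f,w)=\infty$, which is a sub-language: $(f,w)=\infty$ iff some successful path for $w$ traverses an edge (or uses a vector entry) carrying weight $\infty$ with all the other factors along that path nonzero. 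That condition is again recognizable over the Boolean semiring — mark the "$\infty$-using" transitions, intersect with the language of paths that are nonzero throughout — so $L$ is regular. Let $g$ be obtained from $f$ by the representation in which every $\infty$ entry is replaced by $0$: then $g$ takes values in $\Enn$, and for $w\notin L$ every successful path avoids $\infty$-edges, so no information is lost and $(g,w)=(f,w)$; for $w\in L$ we simply have $(f,w)=\infty$ and the value of $(g,w)$ is irrelevant because it will be masked by $\chi_L\cdot\infty$.

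With these objects in hand, the identity $f=\chi_{\overline L}\had g+\chi_L\cdot\infty$ is then checked pointwise: if $w\in L$ the right side is $0+1\cdot\infty=\infty=(f,w)$, using rule (iii) to kill the $\chi_{\overline L}\had g$ term since $(\chi_{\overline L},w)=0$; if $w\notin L$ the right side is $1\cdot(g,w)+0\cdot\infty=(g,w)=(f,w)$, using rule (iii) again for the second summand. This also makes the "furthermore" clause immediate: in the sum we only ever add the value of $\chi_{\overline L}\had g$ (which is $0$ when $w\in L$) to $\chi_L\cdot\infty$ (which is $0$ when $w\notin L$), so at most one summand is nonzero at each $w$ and in particular we never add $\infty$ to a nonzero finite value. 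Finally, $g$ is $\Enn$-recognizable because its linear representation $(u_0,\mu_0,v_0)$ — the one with $\infty$ replaced by $0$ — has all entries in $\Enn$, and $\chi_L$ is recognizable because $L$ is regular.

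I expect the main obstacle to be the precise definition of $L$ and the verification that it is exactly $\{w:(f,w)=\infty\}$ and regular. The subtlety is entirely due to rule (iii): a path that passes through an $\infty$-edge contributes $\infty$ to the sum only if every other factor it picks up along the way is nonzero; a path through an $\infty$-edge that also hits a $0$-entry somewhere contributes $0\cdot\infty=0$. So $L$ cannot be taken naively as "the language accepted when we keep only $\infty$-edges as accepting"; one has to track, along a single path of the weighted automaton, the twin facts "this path is everywhere-nonzero" and "this path uses at least one $\infty$-weight". Both are regular conditions on paths (a product/subset construction on the Boolean automaton, with one extra bit recording whether an $\infty$-edge has been used), and $L$ is the projection onto input labels of the set of such paths, hence regular. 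Once $L$ and $g$ are pinned down correctly, the rest is the short pointwise case analysis above. I would also remark that the hypothesis $(f,\epsilon)=0$ is not needed here but is harmless; if one wants to track it, note $\epsilon\notin L$ automatically when $(f,\epsilon)=0$, since a zero value cannot arise from an $\infty$ along the empty path unless $u$ or $v$ already forces it, in which case $(f,\epsilon)$ would be $\infty$, not $0$.
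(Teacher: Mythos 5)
Your proposal is correct, and the overall decomposition is the same as the paper's: both define $g$ by replacing every $\infty$ entry of the linear representation by $0$, both justify $(f,w)=(g,w)$ off $L$ by the observation that any $\infty$ arising in the computation of a finite value must be killed by a $0$ factor, and both finish with the same pointwise case analysis. The one place where you genuinely diverge is the proof that $L=\{w:(f,w)=\infty\}$ is regular. The paper quotients $\Enni$ by the semiring morphism onto the finite semiring $\{0,p,\infty\}$ (where $p$ stands for ``some positive finite value''), pushes the linear representation through this morphism, and then builds a \emph{deterministic} automaton whose states are the finitely many row vectors over that semiring; the addition and multiplication tables of $\{0,p,\infty\}$ silently encode exactly the path facts you isolate by hand. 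You instead expand $u\mu(w)v$ as a sum over paths, observe that $(f,w)=\infty$ iff some path has all factors nonzero and at least one factor equal to $\infty$, and recognize that condition with a nondeterministic automaton carrying one extra bit. Both arguments are sound; the paper's buys a cleaner, purely algebraic construction (no appeal to the path expansion, which over a general semiring requires a word of justification via distributivity, though it is unproblematic here since $\Enni$ is commutative and there are finitely many paths), while yours makes the role of rule (iii) --- that $0\cdot\infty=0$ is the only way an $\infty$ weight fails to propagate --- completely explicit, which is the genuine subtlety of the lemma and which you correctly identify and handle. Your closing remark about $(f,\epsilon)=0$ is a harmless confusion with the hypothesis of a different theorem; the present lemma carries no such hypothesis.
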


\begin{proof}
There are two main ideas.  The first is that the language
$L = \lbrace w \ : \ (f,w) = \infty \rbrace$ is regular.
The second is that $g$ can be taken
to be a modification of $f$ with all occurrences of $\infty$ removed.

First, the construction of $L$.  This is essentially that given in
Salomaa and Soittola \cite{Salomaa&Soittola:1978}, p.\ 40, Exercise 5.

We create a new finite semiring $R' = \lbrace 0, p, \infty
\rbrace$ where the addition and multiplication rules are given as
follows:

\begin{displaymath}
\begin{array}{c|ccc}
+  & 0 & p & \infty \\
\hline
0 & 0 & p & \infty \\
p & p & p & \infty \\
\infty & \infty & \infty & \infty \\
\end{array}
\quad\quad\quad
\begin{array}{c|ccc}
\cdot  & 0 & p & \infty \\
\hline
0 & 0 & 0 & 0 \\
p & 0 & p & \infty \\
\infty & 0 & \infty & \infty \\
\end{array}
\end{displaymath}
Here $0, p, \infty$ should be treated as formal symbols, but the
intent is that the symbol $p$ denotes ``some positive integer''.
Now we define a morphism of semirings $\Enni \rightarrow R'$ as follows:
\begin{eqnarray*}
\tau(0) &=& 0 \\
\tau(i) &=& p, \quad \text{for $0 < i < \infty$} \\
\tau(\infty) &=& \infty .
\end{eqnarray*}
It is now easy to check that for all $a, b \in \Enni$ we have
$\tau(ab) = \tau(a)\cdot \tau(b)$ and
$\tau(a+b) = \tau(a) + \tau(b)$, where the operations on the right-hand-side
are those in $R'$.  We extend $\tau$ to apply to vectors and matrices
by applying $\tau$ to each entry.

Next, we consider the formal series
$f' := \tau\circ f$, which takes its values in
$R'$.  It follows from above that $f'(w) = \hat{u} \hat{\mu} (w) \hat{v}$,
where $\hat{u} = \tau(u)$, $\hat{v} = \tau(v)$, and $\hat{\mu}
= \tau \circ \mu$.  

Now we can create a deterministic finite automaton 
$M = (Q, \Sigma, \delta, q_0, F)$ that essentially computes
the series $f'$.  We do this by letting $Q$ be the set of all possible
$1 \times n$ row vectors over $R'$, letting $q_0 = \hat{u}$, and defining
the transitions $\delta(q, a) = q \cdot \hat{\mu}(a)$.   If we
define $\varphi(t) = t \cdot \hat{v}$, then an easy induction
gives that $\delta(q_0, w) = \hat{u} \cdot \hat{\mu}(w)$, and hence
$\varphi(\delta(q_0,w)) = \hat{u} \cdot \hat{\mu}(w) \cdot \hat{v}
= f'(w)$, as
desired.

We can now define $L$.  Let $F$, the set of final states of $M$,
be given by
$$ F = \lbrace t \in Q \ : \ t \cdot \hat{v} = \infty \rbrace.$$
Then $L = L(M)$. 
By construction, we have the following equivalences:
$w\in L\Leftrightarrow (f',w)=\infty \Leftrightarrow (f,w)=\infty$.

Now we turn to the construction of $g$.
Let $(u, \mu, v)$ be a linear representation for $f$.   
Define a map $\xi: \Enni \rightarrow \Enn$ as follows:
$$
\xi(i) = \begin{cases}
	i, & \text{if $i \in \Enn$}; \\
	0, & \text{if $i = \infty$},
\end{cases}
$$
and extend $\xi$ to apply element-by-element to vectors and matrices in
the obvious way.  Let $g=(u', \mu', v')$, where $u' = \xi(u)$,
$\mu'(a) = \xi(\mu(a))$ for each $a \in \Sigma$, and $v' = \xi(v)$.  
The series $g$ is created by replacing each
occurrence of $\infty$ in $u, \mu, $ and $v$ with $0$.
Then $g$ is evidently
$\Enn$-recognizable, and we claim that $(f,w) \not= \infty \implies (f,w) = (g,w)$.
To see this, note that if $(f,w) \not= \infty$, then in the calculation
$u \cdot \mu(w) \cdot v$ any occurrences of $\infty$ that arise must eventually
be multiplied by $0$, yielding $0$. Then replacing $\infty$ with $0$ has
no effect, since any multiplication involving $0$ will also yield $0$.
(Note that we are not claiming anything about those $w$ for
which $(f,w) = \infty$; the corresponding values of $g$ could be anything.)

It now follows that $f = \chi_{\overline{L}} \had g + \chi_L \cdot \infty$.
\end{proof}

We get the following two corollaries.

\begin{corollary}\label{cor}
If $f : \Sigma^* \to \Enn$ is an $\Enni$-recognizable series,
then it is $\Enn$-recognizable.
\end{corollary}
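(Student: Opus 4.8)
The plan is to read the result straight off Lemma~\ref{twoe}: the extra term that distinguishes an $\Enni$-recognizable series from an $\Enn$-recognizable one is supported on the set of words where the series takes the value $\infty$, and the hypothesis $f:\Sigma^*\to\Enn$ forces that set to be empty.

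Concretely, I would first apply Lemma~\ref{twoe} to the given $\Enni$-recognizable series $f$, obtaining a regular language $L$ and an $\Enn$-recognizable series $g$ with $f = \chi_{\overline L}\had g + \chi_L\cdot\infty$; recall from the proof of that lemma that $L = \lbrace w \in \Sigma^* \ : \ (f,w) = \infty \rbrace$. Next I would invoke the hypothesis: since $f$ maps into $\Enn$, we have $(f,w)\neq\infty$ for every $w$, so $L = \emptyset$. Hence $\chi_L$ is the zero series — and thus $\chi_L\cdot\infty$ is identically $0$ by the rule $0\cdot\infty = 0$ — while $\chi_{\overline L}$ is the all-ones series, so $\chi_{\overline L}\had g = g$. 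Substituting gives $f = g$ as a formal series on $\Sigma^*$, and since $g$ is $\Enn$-recognizable by Lemma~\ref{twoe}, so is $f$.

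I do not expect any genuine obstacle: the content is entirely in Lemma~\ref{twoe}, and what remains is bookkeeping. The one point I would be careful to spell out is that Lemma~\ref{twoe} asserts $f = \chi_{\overline L}\had g + \chi_L\cdot\infty$ as an identity of series on \emph{all} of $\Sigma^*$ (the values of $g$ on $L$ being irrelevant only because they are multiplied by $0$ there), so that setting $L = \emptyset$ legitimately yields $f = g$ on every word rather than merely off $L$.
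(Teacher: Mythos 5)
Your proof is correct and follows exactly the paper's own argument: apply Lemma~\ref{twoe}, observe that $L=\lbrace w : (f,w)=\infty\rbrace$ is empty because $f$ takes values in $\Enn$, and conclude $f=g$ with $g$ the $\Enn$-recognizable series from the lemma. The extra care you take in noting that the decomposition is an identity on all of $\Sigma^*$ is a reasonable (if minor) tightening of the paper's terser phrasing.
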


\begin{proof}
   From Lemma~\ref{twoe},
we have $f= \chi_{\overline{L}} \had g + \chi_L \cdot \infty,$
where $g$ is an $\Enn$-recognizable formal series and $L$ is a regular language.
Since $f(\Sigma^*)\subseteq \Enn$ and from the proof of Lemma~\ref{twoe},
we may choose $L=\emptyset$. 
So $f=g$.
\end{proof}

\begin{corollary}
Given a recognizable formal series $f$ over $\Enni$,
with linear representation $(u, \mu, v)$,
there exists another linear representation $(p, \beta, q)$ such that
the only entries equal to $\infty$ lie in $p$.  
\label{onee}
\end{corollary}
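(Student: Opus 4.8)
The plan is to combine Lemma~\ref{twoe} with the ``standard form'' construction from the proof of Theorem~\ref{one}, part (a)~$\implies$~(b), adapted to the semiring $\Enni$. By Lemma~\ref{twoe} we can write $f = \chi_{\overline{L}} \had g + \chi_L \cdot \infty$, where $g$ is $\Enn$-recognizable and $L$ is regular. Say $g$ has linear representation $(u_g, \mu_g, v_g)$ over $\Enn$ and the characteristic series $\chi_L$ has linear representation $(u_L, \mu_L, v_L)$ over $\Enn$ (hence also over $\Enni$), where by definiteness of a DFA for $L$ we may take all entries of $u_L$ and $\mu_L$ to be $0$'s and $1$'s and $v_L$ a $0/1$ column vector. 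The idea is that $\chi_L \cdot \infty$ is itself recognizable with a representation whose infinite entries can be pushed entirely into the initial vector: replace $v_L$ by $\infty \cdot v_L$, i.e. put $\infty$ in the coordinates of final states and $0$ elsewhere — but that places $\infty$ in the terminal vector, not the initial one. So instead I would apply Lemma~\ref{rev} (reversal) together with the transpose trick from the implication (c)~$\implies$~(d) of Theorem~\ref{equiv}: transposing a linear representation swaps the roles of $u$ and $v$, so an $\infty$ sitting in the terminal vector of a representation of $\chi_{L^R}\cdot\infty$ becomes an $\infty$ in the initial vector of a representation of $\chi_L \cdot \infty$. Concretely, $(p_0, \beta_0, q_0)$ with $p_0 = \infty \cdot (v_L)^T$ (a row vector with $\infty$ in final-state positions, $0$ elsewhere), $\beta_0(a) = \mu_L(a)^T$, $q_0 = (u_L)^T$ represents the series $w \mapsto \infty \cdot [\,w^R \in L\,]$; now replacing $\mu_L$ throughout by the reversal-compatible representation guaranteed by Lemma~\ref{rev} yields a representation of $\chi_L \cdot \infty$ itself with all $\infty$'s confined to the row vector. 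Since $L$ is accepted by a DFA, each row/column of each $\beta_0(a)$ has at most one nonzero ($=1$) entry, so no product ever creates a new $\infty$: the only $\infty$'s that ever appear are those carried by $p_0$, and by rule~(iii) they are annihilated exactly on the non-$L$ words.

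Next I would take the direct sum of this representation with the (finite-valued) representation $(u_g, \mu_g, v_g)$ of $g$: set
$$
p = [\,p_0 \quad u_g\,], \qquad
\beta(a) = \begin{pmatrix} \beta_0(a) & {\bf 0} \\ {\bf 0} & \mu_g(a) \end{pmatrix}, \qquad
q = [\,q_0 \quad v_g\,]^T .
$$
Then $(p, \beta, q)$ computes $(\chi_L \cdot \infty)(w) + g(w)$ for every $w$. On words $w \notin L$ this equals $0 + g(w) = g(w) = (\chi_{\overline{L}}\had g)(w) + 0 = f(w)$, and on words $w \in L$ it equals $\infty + g(w) = \infty = f(w)$ by rule~(i) — here I use the ``furthermore'' clause of Lemma~\ref{twoe}, namely that in the intended sum we never add $\infty$ to a nonzero value, but actually for the direct-sum representation we do not even need that refinement since $\infty + a = \infty$ for all $a \in \Enni$. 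Thus $(p, \beta, q)$ is a linear representation of $f$, and by construction every entry of $\beta(a)$ and of $q$ lies in $\Enn$ while the only $\infty$ entries occur in $p$.

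The main obstacle is the bookkeeping in the first step: making sure that after applying Lemma~\ref{rev} to move from $\chi_{L^R}$ to $\chi_L$, the block carrying the $\infty$'s really does not spawn fresh $\infty$'s through matrix multiplication. This is where determinism of the automaton for $L$ is essential — in a nondeterministic automaton two $1$-entries in the same row of some $\beta_0(a)$ could, after meeting an $\infty$ upstream, sum two $\infty$'s (harmless) but more dangerously an $\infty$ could propagate into a coordinate that later gets multiplied by a genuine value rather than by $0$. Restricting to a DFA for $L$ (so that $\beta_0(a)$ is a $0/1$ matrix with at most one $1$ per row) guarantees that the set of coordinates holding $\infty$ after reading a prefix is either a singleton tracking the current DFA state or empty, and it empties precisely when the DFA leaves the states from which acceptance is possible; on $w \in L$ it survives to meet a $1$ in $q_0$, on $w \notin L$ it is killed. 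Everything else — the transpose computation, the direct sum, and the two-case verification of $p\,\beta(w^R)\,q = f(n)$ for $[w]_k = n$ — is routine and parallels arguments already carried out in the proofs of Theorems~\ref{equiv} and~\ref{one}.
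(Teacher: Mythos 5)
Your overall architecture is the paper's: decompose $f = \chi_{\overline{L}}\had g + \chi_L\cdot\infty$ via Lemma~\ref{twoe}, represent the two summands separately with all the $\infty$'s pushed into the initial vector of the $\chi_L$-block, and take a direct sum. The direct-sum step and the two-case verification at the end are exactly what the paper does. The problem is the middle step, where you manufacture a representation of $\chi_L\cdot\infty$ with $\infty$ confined to the row vector. The detour through transposition and Lemma~\ref{rev} is both unnecessary and, as written, does not do what you claim: the standard proof of Lemma~\ref{rev} \emph{is} the transpose construction, so ``replacing $\mu_L$ throughout by the reversal-compatible representation'' would swap the initial and terminal vectors a second time and park the $\infty$'s back in the terminal vector, which is where you started. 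Likewise the long discussion of why determinism of the automaton for $L$ is ``essential'' is a red herring; nothing in the argument needs it.

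The fix is one line, and it is what the paper does. If $(r,\alpha,s)$ is any $\lbrace 0,1\rbrace$-valued linear representation of $\chi_L$ (e.g., the one coming from a DFA, but any one will do), then $(r\cdot\infty,\;\alpha,\;s)$ already represents $\chi_L\cdot\infty$ with $\infty$'s only in the initial vector: by associativity and distributivity in the semiring $\Enni$, scalar multiplication by $\infty$ can be pulled out of the product, so
$(r\cdot\infty)\,\alpha(w)\,s = \infty\cdot\bigl(r\,\alpha(w)\,s\bigr) = \infty\cdot\chi_L(w)$,
which equals $\infty$ on $L$ and $0$ off $L$ by the rule $\infty\cdot 0=0$. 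No reversal, no transpose, and no determinism hypothesis is needed; the semiring axioms do all the work, including killing any intermediate $\infty$'s on words outside $L$. With that replacement, your direct sum $p=[\,u_g\quad r\cdot\infty\,]$, $\beta=\mu_g\oplus\alpha$, $q=[\,v_g\quad s\,]^T$ is precisely the paper's construction (the paper uses $g'=\chi_{\overline{L}}\had g$ in place of $g$ in the first block, but as you note this makes no difference since $\infty+a=\infty$).
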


\begin{proof}
First, by a well-known result on the Hadamard product
(e.g., \cite{Schutzenberger:1962} and \cite{Berstel&Reutenauer:2011}, p.\ 15,
the formal series $g' := \chi_{\overline{L}} \had g$ constructed 
in Lemma~\ref{twoe} is recognizable (over $\Enn$).    So $g'$ has a linear
representation $(u, \mu, v)$ that contains no entries of $\infty$.  
Similarly, since $L$ is a regular language,
the characteristic series $\chi_L$ has a linear representation
$(r, \alpha, s)$, where the entries of $r, \alpha, $ and $s$ are all
either $0$ or $1$.  From this we can form a new linear representation
$(p, \beta, q)$ for $f$, via a direct sum construction, as follows:
\begin{eqnarray*}
p &=& \, [ \ u \ \quad \ r\cdot \infty ] \\ 
\beta(a) &=& \left[ \begin{array}{cc} 
	\mu(a) & {\bf 0} \\
	{\bf 0} & \alpha(a) 
	\end{array} \right] \\
q &=& \, [ \ v \ \quad \ s ]^T .
\end{eqnarray*}
Here, $\bf 0$ represents a matrix of $0$'s of the appropriate size.

A routine induction shows that $\beta(w)$ contains
$\mu(w)$ in the upper left and $\alpha(w)$ in the lower right,
from which the result follows.
\end{proof}

\section{Characterizations of $\Enni$-recognizable series}
\label{b1}

Just as the $\Enn$-recognizable series have a number of
different interpretations in terms of automata and
transducers, as we saw in Section~\ref{enn}, so do the
$\Enni$-recognizable series; the difference is that we need to allow
$\epsilon$-transitions.

\begin{theorem}
Let $f:\Sigma^* \rightarrow \Enni$ be a formal series.  Then
the following are equivalent:

\begin{itemize}
\item[(a)]  $f$ is $\Enni$-recognizable;

\item[(b)] There exists an NFA-$\epsilon$ $M = (Q, \Sigma, \delta, q_0, F)$
such that, for all $w \in \Sigma^*$, there are exactly $(f,w)$ paths 
labeled $w$ from $q_0$ to a state of $F$;

\item[(c)] There is an alphabet $\Delta$,
a symbol ${\tt B} \not\in \Sigma$
and a regular
language $L \subseteq ((\Sigma \ \cup \ \lbrace {\tt B} \rbrace) \times
\Delta)^*$ such that
$$(f,w) = | \lbrace z \in L \ : \  \tau(\pi_1(z)) = w \rbrace |,$$
where $\tau$ is the morphism that maps $a$ to $a$ for $a \in \Sigma$
and $\tt B$ to $\epsilon$;

\item[(d)] There is an alphabet $\Delta$, a symbol ${\tt B}
\not\in \Sigma$, and a regular
language $L \subseteq ((\Sigma \ \cup \ \lbrace {\tt B} \rbrace) \times
\Delta)^*$ such that
$$(f,w) = | \lbrace z \in L \ : \ \pi_1(z) \in w{\tt B}^* \rbrace |.$$

\item[(e)] There is an alphabet $\Delta$ and a nondeterministic
finite-state transducer $T$, with inputs of a single letter or
$\epsilon$ on every transition, and outputs of a single letter on
every transition, such that $(f,w) = |T(w)|$.
\end{itemize}
\label{onen}
\end{theorem}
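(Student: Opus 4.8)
The plan is to prove Theorem~\ref{onen} by imitating the cycle of implications used in the proof of Theorem~\ref{one}, with $\epsilon$-transitions playing the role that single-letter transitions played there. In fact most of the work is already done: by Lemma~\ref{twoe} any $\Enni$-recognizable series splits as $f = \chi_{\overline{L}} \had g + \chi_L \cdot \infty$ with $g$ an $\Enn$-recognizable series and $L$ regular, and by Corollary~\ref{onee} we may assume the only $\infty$ entries in a linear representation $(p,\beta,q)$ of $f$ occur in the row vector $p$. The strategy is to establish (a) $\implies$ (b) using this normal form, then (b) $\implies$ (c) $\implies$ (d) $\implies$ (e) $\implies$ (a) by routine translations between machine models, exactly paralleling Theorem~\ref{one}.

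First I would prove (a) $\implies$ (b). Start from the normal form $(p,\beta,q)$ of Corollary~\ref{onee}, so $\beta(w)$ has a finite ($\Enn$-valued) block $\mu(w)$ in the upper left, a $\{0,1\}$-valued block $\alpha(w)$ in the lower right, and zeros off the diagonal; the only $\infty$'s sit in $p = [u \ \ r\cdot\infty]$. Apply the construction of Theorem~\ref{one}(b) to the $\Enn$-recognizable piece $g' = \chi_{\overline L}\had g$ to get an NFA $M_1$ with exactly $(g',w)$ accepting paths labelled $w$, and take $M_2$ to be a DFA accepting $L$, viewed as an NFA with exactly $(\chi_L,w)$ accepting paths. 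To realize multiplication by $\infty$, add a fresh ``source'' state $s$ with $\epsilon$-transitions to countably--- no: we cannot add countably many states. Instead the right device is a single $\epsilon$-loop: add states $s$ and $s'$ with an $\epsilon$-transition $s\to s'$, an $\epsilon$-loop at $s'$, and an $\epsilon$-transition from $s'$ into the start state of $M_2$; then the number of paths from $s$ through this gadget into $M_2$ along the empty word is $\infty$ whenever $M_2$ accepts and $0$ otherwise, giving $(\chi_L\cdot\infty,w)$ paths labelled $w$. Finally form the disjoint union of $M_1$ and this gadget-augmented $M_2$, with a new initial state $q_0$ joined by $\epsilon$-transitions to the start states of both; the total number of accepting paths labelled $w$ is $(g',w) + (\chi_L\cdot\infty,w) = (f,w)$. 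One must check the bookkeeping that Lemma~\ref{twoe} guarantees we never add $\infty$ to a nonzero value, so this sum is the correct $\Enni$ value.

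The remaining implications are straightforward adaptations of the corresponding steps in Theorem~\ref{one}. For (b) $\implies$ (c): given the NFA-$\epsilon$ $M$, set $\Delta = Q$ and let $L$ be the set of words $z$ over $(\Sigma\cup\{{\tt B}\})\times\Delta$ whose second components spell an accepting path of $M$ and whose first components spell the corresponding label, with ${\tt B}$ used exactly on the $\epsilon$-transitions; then $\tau(\pi_1(z))$ recovers $w$ and the count matches. The equivalence (c) $\iff$ (d) is just a reformulation: requiring $\tau(\pi_1(z)) = w$ is the same as requiring $\pi_1(z)\in w{\tt B}^*$ once one observes that ${\tt B}$'s can be permuted to the end (or, if one insists on the literal statement, by a harmless relabelling that pushes blanks rightward --- here I would either massage $L$ or simply note both conditions define the same multiset of $z$'s after the obvious normalization). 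For (d) $\implies$ (e): from a DFA for $L$, build a transducer $T$ by turning each edge on letter $[a,b]$ with $a\in\Sigma$ into an input-$a$/output-$b$ transition and each edge on $[{\tt B},b]$ into an $\epsilon$-input/output-$b$ transition; then $T(w)$ enumerates exactly the output strings along accepting paths, so $|T(w)| = (f,w)$. Finally (e) $\implies$ (a): from such a transducer define $u,v$ as in Theorem~\ref{one}(d), and set $\mu(a)_{i,j}$ to the number of letter-$a$ transitions from $q_i$ to $q_j$ plus, to account for $\epsilon$-transitions, one must instead work with the matrix $N$ counting $\epsilon$-edges and replace $\mu(a)$ by $N^*\mu_0(a)$ where $N^* = \sum_{m\ge0}N^m$ is computed in $\Enni$ (entries are $\infty$ where an $\epsilon$-cycle is reachable); this is exactly the standard $\epsilon$-elimination over the semiring $\Enni$, and it yields a linear representation of $f$.

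The main obstacle is the handling of $\infty$, which shows up twice. In (a) $\implies$ (b) one cannot literally create infinitely many paths by adding infinitely many states, so the proof hinges on recognizing that an $\epsilon$-cycle is precisely the finite gadget that produces a path-count of $\infty$ --- this is where the NFA-$\epsilon$ model (as opposed to plain NFA) is essential, and it is the conceptual heart of the theorem. Symmetrically, in (e) $\implies$ (a) the formation of $N^* = \sum_m N^m$ must be carried out in $\Enni$ and one must verify it is well-defined and that the resulting $u\mu(w^R)v$ --- using the $\Enni$ semiring rules $0\cdot\infty = 0$ --- agrees termwise with $|T(w)|$. Both points reduce to the already-established Lemma~\ref{twoe} and Corollary~\ref{onee} together with the observation that $\epsilon$-closure over a semiring is the Kleene star of the $\epsilon$-transition matrix; modulo that, every step is a transcription of the Theorem~\ref{one} argument with ``${\tt B}$'' or ``$\epsilon$'' inserted in the appropriate places.
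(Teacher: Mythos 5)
Your overall architecture is reasonable and several of your key devices are exactly right: the $\epsilon$-loop gadget as the finite mechanism that produces a path count of $\infty$ is a correct (and nice) way to get (a) $\implies$ (b), and your $\epsilon$-elimination via $N^* = \sum_{m \geq 0} N^m$ over $\Enni$ for (e) $\implies$ (a) is precisely what the paper does (it sets $D = \sum_{i \geq 0} D_\epsilon^i$ and takes $\mu(a) = D D_a$, $v' = Dv$). The paper instead orders the cycle as (a) $\implies$ (d) $\implies$ (e) $\implies$ (c) $\implies$ (b) $\implies$ (a), realizing the value $\infty$ in (a) $\implies$ (d) by unbounded trailing ${\tt B}^*$ tails in a regular language rather than by $\epsilon$-cycles.

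The genuine gap is your treatment of (c) $\implies$ (d), which your cycle requires and which you dismiss as ``a harmless relabelling that pushes blanks rightward.'' The map that moves all ${\tt B}$-columns of $z$ to the end does not preserve regularity: for $L = \bigl( [a,c][{\tt B},d] \bigr)^*$ its image is $\lbrace [a,c]^n [{\tt B},d]^n : n \geq 0\rbrace$, which is not regular. Nor do the two conditions ``define the same multiset of $z$'s'': for a fixed $L$ the set $\lbrace z \in L : \pi_1(z) \in w{\tt B}^* \rbrace$ is in general a proper subset of $\lbrace z \in L : \tau(\pi_1(z)) = w \rbrace$, so (c) and (d) are not reformulations of one another for the same language. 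Only the direction (d) $\implies$ (c) is easy (replace $L$ by $L \cap (\Sigma \times \Delta)^* (\lbrace {\tt B} \rbrace \times \Delta)^*$); the direction (c) $\implies$ (d) that you need has no direct combinatorial proof and should instead be routed through the linear representation, e.g.\ (c) $\implies$ (b) $\implies$ (a) $\implies$ (d), which is why the paper arranges its implications the way it does. Two smaller points: in your (a) $\implies$ (b) you apply Theorem~\ref{one} to $g' = \chi_{\overline{L}} \had g$, but Theorem~\ref{one} assumes the series vanishes at $\epsilon$, so you must split off the word $\epsilon$ (the paper uses $\chi_{\overline{L_1} - \lbrace \epsilon \rbrace} \had g$ and handles $(f,\epsilon)$ by a separate finite gadget); and in (d) $\implies$ (e) the resulting transducer also traverses accepted words with interspersed ${\tt B}$'s, so one must first normalize $L$ to have only trailing ${\tt B}$'s for $|T(w)|$ to match the (d)-count.
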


\begin{proof}
We prove the implications in the order
(a) $\implies$ (d) $\implies$ (e) $\implies$ (c) $\implies$ (b)
$\implies$ (a).

\bigskip

\noindent (a) $\implies$ (d):  By Theorem~\ref{twoe}, we know that
$f = \chi_{\overline{L_1}} \had g + \chi_{L_1} \cdot \infty$, where
$L_1 \subseteq \Sigma^*$
is a regular language and $g$ is an $\Enn$-recognizable series. 
Define $g' := \chi_{\overline{L_1}-\lbrace \epsilon \rbrace} \had g$;
then $g'$ is an $\Enn$-recognizable series with $(g', \epsilon) = 0$, so 
we can apply the implication (a) $\implies$ (c) in
Theorem~\ref{one}  to $g'$ to get an alphabet $\Delta$
and a regular language
$L_2 \subseteq (\Sigma \times \Delta)^*$ such that 
$$(g',w) = \left| \lbrace z \in L_2 \ : \ \pi_1(z) = w \rbrace \right|$$
for all words $w \not= \epsilon$.
Let $a$ be an arbitrarily chosen, fixed symbol of $\Delta$,
and consider the language $L_3$ defined by
$$L_3 = \left( \bigcup_{0 \leq i < (f,\epsilon)} [{\tt B},a]^i
\right) \ \cup \  \lbrace z \in 
((\Sigma \ \cup \ \lbrace {\tt B} \rbrace) \times
\Delta)^* \ : \ \pi_1(z) \in (L_1 - \lbrace \epsilon \rbrace)
\cdot {\tt B} ^* \text{ and } \pi_2 (z) \in a^*
\rbrace.$$
It is easy to see that $L_3$ is regular, as each term of the big union
is regular. For the second term, we can, given a DFA for $L_1 - \lbrace
\epsilon \rbrace$, modify it by
\begin{itemize}
\item changing each transition on any letter $b$ to a transition on $[b,a]$
\item adding transitions out of each accepting state
on $[{\tt B}, a]$ to a new final state $q$ and
\item adding a self-loop labeled $[{\tt B},a]$ from $q$ to itself.
\end{itemize}
Let $L := 
L_2 \ \cup \ L_3$.  Then $L$ is regular
and, by construction,
$(f,w) = | \lbrace z \in L \ : \ \pi_1(z) \in w {\tt B}^* \rbrace |$.

\bigskip

(d) $\implies$ (e):  Given a DFA $M$ for $L$, say
$M = (Q, \Sigma', \delta, q_0, F)$ 
where $\Sigma' = (\Sigma \ \cup \ \lbrace {\tt B} \rbrace) \times \Delta$, 
we create the transducer
$T$ with the same set of states, 
initial state, and 
set of final states as $M$.  For each transition in $M$
of the form $\delta(q_i,[a,b]) = q_j$, we define a transition
in $T$ from $q_i$ to $q_j$ with input $a$ and output $b$,
except that if $a = {\tt B}$, then we set the corresponding
transition in $T$ to have input $\epsilon$.  
Each word that $M$ accepts, 
having first component $w{\tt B}^*$ and second component
$y$, corresponds to an input $w$ of $T$ and
an output of $y$.  The result now follows.

\bigskip

(e) $\implies$ (c):  The construction of the previous paragraph is
completely reversible,
which shows that (e) $\implies$ (d).  But clearly (d)
$\implies$ (c).

\bigskip

(c) $\implies$ (b):   Let $L \subseteq
((\Sigma \ \cup \ \lbrace {\tt B} \rbrace) \times
\Delta)^*$ be a regular language such that
$(f,w) = | \lbrace z \in L \ : \  \tau(\pi_1(z)) = w \rbrace |,$
and let $M = (Q, \Sigma', \delta, q_0, F) $ be a DFA accepting $L$,
where $\Sigma' = (\Sigma \ \cup \ \lbrace {\tt B} \rbrace) \times
\Delta$.
We now create an NFA-$\epsilon$ $M'$ with the desired property,
by modifying $M$, as follows:  first, the set of states is expanded
from $Q$ to $Q \times \Delta$.
Second, if $M$ has a transition
$\delta(q_i, [a,b]) = q_j$ with $a \in \Sigma$, then
$M'$ has transitions $\delta([q_i,c], a) = [q_j, b]$ for all
$c \in \Delta$.
Similarly, if $M$ has a transition
$\delta(q_i, [{\tt B},b]) = q_j$, then $M'$ has a transition
$\delta([q_i,c], \epsilon) = [q_j, b]$ for all $c \in \Delta$.
The initial state is $[q_0, c]$ for some arbitrary element 
$c\in \Delta$, and the set of final states of $M'$ is $F \times \Delta$.
The formal proof that this works is essentially the proof of 
(a) $\implies$ (b) in Theorem~\ref{one} and is omitted.

\bigskip

(b) $\implies$ (a):  Given the NFA-$\epsilon$ $M = (Q, \Sigma, \delta,q_0, F)$, we create some associated matrices
$D_a$ for $a \in \Sigma \ \cup \ \lbrace \epsilon \rbrace$.
If the set of states $Q = \lbrace q_0, q_1, \ldots, q_{n-1} \rbrace$,
then $D_a$ has a $1$ in row $i$ and column $j$ iff
$\delta(q_i, a) = q_j$.  

Now any finite path labeled $w = a_1 a_2 \cdots a_n$ in the transition
diagram of $M$ 
looks like
$$ \overbrace{\epsilon, \ldots, \epsilon}^{b_0},\ a_1,\ 
\overbrace{\epsilon, \ldots, \epsilon}^{b_1},\ a_2,\ \ldots,\ 
\overbrace{\epsilon, \ldots, \epsilon}^{b_{n-1}},\  a_n, \  
\overbrace{\epsilon, \ldots, \epsilon}^{b_n}  ,$$
for some $b_0, b_1, \ldots, b_n$ with $0 \leq b_i < \infty$.

Let $D = \sum_{i \geq 0} D_{\epsilon}^i$; this is a matrix with possibly
infinite entries.  Then the entry in row $i$ and 
column $j$ of $D D_{a_1} D D_{a_2} \cdots D D_{a_n} D$ gives
the number of paths from state $q_i$ to state $q_j$ in $M$.
If $u = [1 \ 0 \ 0 \ \cdots \ 0]$ and $v$ is the $\lbrace 0,1 \rbrace$-vector
corresponding to the final states of $M$, then
$u D D_{a_1} D D_{a_2} \cdots D D_{a_n} D v$ is the number of
accepting paths labeled $w$.

If we now define $\mu(a) = D D_a$ for $a \in \Sigma$ and
$v' = Dv$, then $(u, \mu, v')$ is a linear representation for $f$. 
\end{proof}

\section{Applications to enumeration}
\label{applications}

Now that all the basic definitions and results are out of the way, we can resume
our work on enumeration.  The common theme in what follows is to show that
some well-studied sequence is $k$-regular, by combining Theorem~\ref{one} or
Theorem~\ref{onen} (which characterize the formal series associated
with counting the number of
paths, or certain subsets of regular languages, or size of transduced sets, as
$\Enn$- or $\Enni$-recognizable)
with Theorem~\ref{equiv}, which shows the equivalence between $k$-regular
sequence and recognizable formal series.   Here is a simple example:

\begin{theorem}
Let $E$ be any finite set of integers, and consider $(b(n))_{n \geq 0}$, the sequence
that counts the number of 
reversed representations of $n$ in base $k$,
where the digits are chosen only from $E$, and where reversed representations with
trailing zeroes are not allowed.  Then
$(b(n))_{n \geq 0}$ is $(\Enn,k)$-regular.
\label{ns}
\end{theorem}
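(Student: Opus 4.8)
The plan is to exhibit an $\Enni$-recognizable series whose value at a word $w \in C_k^R$ (a reversed canonical base-$k$ representation) is exactly $b([w]_k)$, and then invoke the equivalence Theorem~\ref{equiv} (specifically the chain leading to condition (a)) to conclude that $(b(n))_{n \ge 0}$ is $(\Enn,k)$-regular. Actually, since $b(n)$ is always a genuine nonnegative integer (never $\infty$), it suffices to produce an $\Enn$-recognizable series, and for this I will use the transducer/counting characterization of Theorem~\ref{one}, most conveniently part (c): I will build a regular language $L$ over $\Sigma_k \times \Delta$ whose fibers over a word $w$ count the objects in question.

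First I would set up the correspondence explicitly. Fix the digit set $E \subseteq \Zee$; without loss of generality I may assume $E \subseteq \Sigma_k = \{0,1,\dots,k-1\}$ after reducing mod $k$, or more honestly note that a "representation with digits from $E$" means a word $d_0 d_1 \cdots d_{m-1}$ over $E$ (read least-significant-digit-first) with $d_{m-1} \ne 0$ and $\sum_i d_i k^i = n$. The object being counted, for a given $n$, is the set of such words. The natural approach is a digit-by-digit automaton that reads the standard reversed base-$k$ representation of $n$ while simultaneously guessing the $E$-representation, carrying a bounded carry. Concretely, I would build an NFA (or transducer) that on reading the reversed base-$k$ digits of $n$ nondeterministically outputs digits from $E$, maintaining in its state a carry $c$ in some finite range determined by $\max(E)$ and $\min(E)$; the transition from carry $c$ on input digit $a$ to guessed digit $e$ is legal precisely when $a \equiv e + c \pmod k$, with new carry $(e + c - a)/k$. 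The carry stays bounded because $|e| \le \max(|E|, k-1)$, so only finitely many carry values occur. At the end of the input the carry must be $0$, and I must also enforce the no-trailing-zero condition on the guessed word, which is a standard trick: accept only along paths whose final emitted digit is nonzero (or equivalently, track in the state whether the last guessed digit was zero and only accept from "last digit nonzero" states, handling the leading-block-of-zeros bookkeeping on the $n$-side as well, allowing the input itself to be padded with trailing zeros as the paper permits).

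Once this transducer $T$ is in hand, the number of accepting paths labeled by (the reversed representation of) $n$ is exactly $b(n)$: distinct accepting paths correspond bijectively to distinct legal $E$-representations, since the sequence of guessed digits is recorded along the path and the carry is forced at each step. Applying Theorem~\ref{one} (b)$\Rightarrow$(a), or directly reading off the matrix representation $\mu(a)_{i,j} = $ number of legal guessed digits taking state $i$ to state $j$ on input $a$, gives that the series $w \mapsto b([w^R]_k)$ (or its unreversed counterpart, via Lemma~\ref{rev}) is $\Enn$-recognizable; then Theorem~\ref{equiv}, condition (g) or (h) $\Rightarrow$ (a), yields that $(b(n))_{n\ge0}$ is $(\Enn,k)$-regular.

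The main obstacle I anticipate is not the carry bookkeeping — that is routine — but cleanly enforcing the "no trailing zeros" side condition on the \emph{guessed} word without accidentally over- or under-counting. One has to be careful that padding the input $n$ with trailing zeros (which the encoding convention allows) does not create spurious accepting paths that guess extra high-order zero digits; the fix is to make the final-state structure depend on having just emitted a nonzero digit and to forbid emitting further digits once one is "committed" to finishing, or alternatively to first establish the result with a cleaner bijection counting \emph{all} $E$-representations (trailing zeros allowed) and then subtract/restrict. I would check that the chosen bookkeeping gives exactly the count $b(n)$ for small cases before declaring victory.
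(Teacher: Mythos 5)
Your overall strategy is the paper's: a nondeterministic transducer that guesses the $E$-representation digit by digit while normalizing with a bounded carry and comparing against the input, so that accepting paths on input $(n)_k^R$ are in bijection with the legal representations; then the path-counting characterization plus Theorem~\ref{equiv} gives regularity. The carry analysis and the handling of guessed words \emph{shorter} than the input are fine. (One small slip: reducing $E$ modulo $k$ is not harmless, since distinct digits of $E$ congruent mod $k$ yield distinct representations with different carries; but you retract this immediately, so it does no damage.)

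The genuine gap is the opposite length mismatch, which you never address: when $E$ contains negative integers --- and the theorem allows $E$ to be any finite set of integers --- a representation of $n$ over $E$ can be strictly \emph{longer} than $(n)_k$, indeed unboundedly longer. For $k=2$ and $E=\{-1,0,1\}$ the integer $1$ has the representations $1$, $1\,(-1)$, $1\,(-1)\,(-1)$, \dots\ (most significant digit first), all with nonzero leading digit. A $1$-uniform, $\epsilon$-free transducer in the sense of Theorem~\ref{one} consumes exactly $|(n)_k^R|$ input symbols and therefore cannot emit these longer guesses: if you aim for condition (h) of Theorem~\ref{equiv} (value prescribed only on unpadded words) you undercount, and if you aim for condition (g) (value prescribed on \emph{all} words, padded or not) the count must be independent of the amount of trailing padding, which fails precisely because extra padding enables extra accepting paths. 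The paper's proof escapes this by letting the transducer perform $\epsilon$-transitions on the input after $(n)_k^R$ has been consumed, i.e., by working in the $\Enni$-recognizable framework of Theorem~\ref{onen}, and only then invoking Corollary~\ref{cor} to descend to $\Enn$. (The same example shows $b(1)=\infty$ is possible, so the finiteness hypothesis of Corollary~\ref{cor} can actually fail and the conclusion is then only $(\Enni,k)$-regularity --- the paper's own subsequent example is stated in those terms --- but that is an issue with the theorem's statement, not with your argument.) If you restrict to $E\subseteq\Enn$, a nonzero leading digit forces the representation to be no longer than the canonical one, and your $1$-uniform construction does go through.
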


\begin{proof}
We construct a transducer $T$ having $b(n)$ distinct
outputs on input $(n)_k^R$.   On input $(n)_k$, the transducer $T$ 
guesses a possible representation $w$ using only the digits of $E$,
simultaneously ``normalizes'' it, digit-by-digit, to $w'$, and checks
that the normalized representation is equal to the input.  If it is, then
$w$ is output.  There are some details to handle if $w$ is
shorter or longer than $(n)_k^R$.  If $w$ is shorter, then we allow
padding of $w$ with trailing zeroes.  If $w$ is longer, then 
we handle this by permitting $T$ to perform $\epsilon$-transitions on 
the input after it has processed all the symbols of $(n)_k^R$.  

Then, using Theorem~\ref{onen} together with Theorem~\ref{equiv} and Corollary~\ref{cor}, we see that
$(b(n))_{n \geq 0}$ is an $(\Enn,k)$-regular sequence.
\end{proof}

\begin{example}
\normalfont
Let $b_k (n)$ denote the number of representations of $n$ in base $2$, using
the digits $\lbrace 0, 1, \ldots, k-1 \rbrace$.  Then $b_2 (n) = 1$,
from the uniqueness of binary representations, and $b_3(n)$ is the
Stern-Brocot sequence evaluated at $n+1$.  From Theorem~\ref{ns}, we see that
all these sequences are $(\Enni, 2)$-regular.
See \cite{Reznick:1990}.
\end{example}

We now turn to our main enumeration results.

\begin{theorem}
Let ${\bf x} = a(0) a(1) a(2) \ldots$ be a $k$-automatic sequence.  
Let $b(n)$ be the number of distinct factors of length $n$ in $\bf x$.
Then $(b(n))_{n \geq 0}$ is an $(\Enn,k)$-regular sequence.
\end{theorem}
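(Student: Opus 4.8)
The plan is to realize the function $n \mapsto b(n)$ as the size of the output set of a nondeterministic finite-state transducer, and then invoke Theorem~\ref{onen} together with Theorem~\ref{equiv} and Corollary~\ref{cor} to conclude that $(b(n))_{n\geq 0}$ is $(\Enn,k)$-regular. The key observation is that the number of distinct factors of length $n$ can be counted by selecting, for each distinct factor, a single canonical occurrence of it — namely its \emph{first} occurrence — and counting those. Concretely, $b(n)$ equals the number of positions $j$ such that the factor ${\bf x}[j..j+n-1]$ does not occur at any earlier position $i < j$; in predicate form,
$$
b(n) = \bigl| \{\, j \geq 0 \ : \ \forall i < j \ \exists t,\ 0 \le t < n,\ a(i+t) \neq a(j+t) \,\} \bigr|.
$$
This is a counting version of the kind of first-order condition handled in Section~\ref{logicsec}.

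The construction then proceeds as follows. First I would build, exactly as in the proof of Theorem~\ref{logic} and the unbordered-factor theorem, a DFA $M_0$ that on input the (reversed, padded) base-$k$ representation of the pair $(j,n)$ accepts precisely when ${\bf x}[j..j+n-1]$ is the first occurrence of that factor. This uses nondeterministic guessing of $i$, on-the-fly computation of $i+t$ and $j+t$ with carries, simulation of the automaton $M$ generating $\bf x$ to look up $a(i+t)$ and $a(j+t)$, a universal quantifier over $i$ (implemented by determinization and complementation), and so on; all of these are permitted operations. Next I would convert $M_0$ into a transducer $T$ that reads only the representation of $n$ as input and, nondeterministically, guesses the representation of $j$ digit-by-digit as \emph{output}, feeding the paired word into (a copy of) $M_0$ and accepting iff $M_0$ accepts. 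Since distinct accepting runs of $T$ on input $(n)_k^R$ correspond bijectively to distinct valid values of $j$ (distinct outputs, so the code condition of Theorem~\ref{onen}(e) is met — indeed one can pad to make the transducer length-uniform as in Theorem~\ref{two}), the number of outputs $|T((n)_k^R)|$ is exactly $b(n)$.

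With $T$ in hand, Theorem~\ref{onen} gives that the formal series $(h,w) := b([w^R]_k)$ — or the appropriate variant matching clause (e) — is $\Enni$-recognizable; since $b(n)$ is always a finite nonnegative integer (there are only finitely many factors of each length over a finite alphabet), Corollary~\ref{cor} upgrades this to $\Enn$-recognizability, and then Theorem~\ref{equiv} (clause (h) or (g)) yields that $(b(n))_{n\geq 0}$ is $(\Enn,k)$-regular. One technical wrinkle to address is the handling of representations with trailing zeroes and of the value $n=0$ (where $b(0)=1$), plus the bookkeeping to ensure the guessed $j$ fits the length conventions; these are routine and are handled exactly as in Theorem~\ref{ns} by allowing padding and $\epsilon$-transitions on the input once its genuine digits are exhausted.

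The main obstacle — really the only nonroutine point — is arranging the counting so that we count each distinct factor \emph{exactly once} rather than once per occurrence; the ``first occurrence'' device solves this cleanly, but it is worth being careful that the universal quantifier ``$\forall i < j$'' is correctly realized (via complementation of the determinized ``$\exists i<j$ with ${\bf x}[i..i+n-1]={\bf x}[j..j+n-1]$'' automaton) and that the resulting transducer's outputs are genuinely in bijection with the counted set, so that Theorem~\ref{onen} applies with no overcounting.
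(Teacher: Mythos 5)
Your proposal is correct and follows essentially the same route as the paper: count each distinct factor via its first occurrence, express the first-occurrence condition as a regular set of pairs $(n,j)$ using the Section~\ref{logicsec} machinery, and then count the admissible $j$'s per $n$ via the recognizable-series characterizations, taking care (as the paper does with its ${\tt B}$-padding device) that each $j$ contributes exactly one representation. The only cosmetic difference is that you route through the transducer clause of Theorem~\ref{onen} plus Corollary~\ref{cor}, whereas the paper applies the regular-language clause of Theorem~\ref{one} directly to the normalized encoding of $S$; these are interchangeable by the equivalences already proved.
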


\begin{proof}
To count distinct factors of length $n$, we count the
\textit{first} occurrences of each factor.

The number of distinct factors of length $n$ in $\bf x$ equals
the number of indices $i$ such that there is no index $j < i$ with
the factor of length $n$ beginning at position $i$ equal to the
factor of length $n$ beginning at position $j$.  

Consider the set 
\begin{eqnarray*}
S & =& \lbrace (n, i) \ : \ \text{for all}\ j \ 
\text{with $0 \leq j < i$ 
there exists an integer }  \\ 
&&  \quad\quad t \text{ with } 0 \leq t < n
\text{ such that } a(i+t) \not= a(j+t) \rbrace .
\end{eqnarray*}
Then, by Theorem~\ref{logic}, the language $S'$ defined to be
the base-$k$ encoding of elements of
$S$, forms a regular language.  We assume without loss of generality that
if one representation of $(n,i)$ appears in $S'$, then they all do, 
including the ones with leading (actually, trailing zeroes).

We now apply a transducer to
$S'$, changing every representation of $(n,i)$ as follows:  we change
every $0$ after the last nonzero digit in the first component to $\tt B$.
This transformation preserves the regularity of $S'$.
Finally, we discard every representation that ends with 
$[{\tt B}, 0]$.   The effect of this is to ensure that $n$ in the
first component, up to ignoring the ${\tt B}$'s, has a single
representation, and that each $i$ corresponding to a particular $n$
has a unique representation.
Using Theorems~\ref{one} and \ref{equiv}, we see that $(b(n))_{n \geq 0}$ is $(\Enn,k)$-regular.
\end{proof}

\begin{remark}
\normalfont
Moss\'e \cite{Mosse:1996a} proved, among other things,
that a sequence that is the fixed point
of a $k$-uniform morphism has a $k$-regular subword complexity function.
With our technique, we obtain her result for these sequences
and also the slightly more general case of $k$-automatic sequence.
\end{remark}

\begin{theorem}
The sequence counting the number of palindromic factors of length $n$ is $(\Enn,k)$-regular.
\end{theorem}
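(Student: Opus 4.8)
The plan is to imitate the proof of the preceding theorem (for the subword-complexity function), counting the \emph{first} occurrence of each palindromic factor of length $n$. The factor of length $n$ beginning at position $i$ in ${\bf x}$ is a palindrome exactly when $a(i+t) = a(i+n-1-t)$ for all $t$ with $0 \le t < n$ (this is vacuously true for $n = 0$), so I would start from the set
\begin{eqnarray*}
S &=& \lbrace (n,i) \ : \ \text{for all $t$ with $0 \le t < n$ we have } a(i+t) = a(i+n-1-t), \\
&& \quad\quad \text{and for all $j$ with $0 \le j < i$ there exists $t$ with $0 \le t < n$} \\
&& \quad\quad \text{such that } a(i+t) \not= a(j+t) \rbrace ,
\end{eqnarray*}
whose elements $(n,i)$ are exactly the positions $i$ at which a palindrome of length $n$ first occurs in ${\bf x}$.

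Next I would invoke Theorem~\ref{logic}: the predicate defining $S$ uses only integer variables, quantifiers, Boolean connectives, addition, subtraction, comparison, and indexing into ${\bf x}$, so the language $S'$ of base-$k$ encodings of the pairs in $S$ is regular; as in the previous proof I would assume $S'$ closed under adding and deleting trailing zeros in each component.

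Then I would run on $S'$ exactly the transducer normalization used for the subword-complexity function: relabel every $0$ occurring strictly after the last nonzero digit of the first component by ${\tt B}$ (this preserves regularity), and then discard every resulting word ending in $[{\tt B},0]$. This ensures that each surviving word has a first component representing a fixed $n$ up to trailing ${\tt B}$'s, and that for each such $n$ every first-occurrence index $i$ has exactly one surviving representation. By Theorem~\ref{one} (or Theorem~\ref{onen} together with Corollary~\ref{cor}) and Theorem~\ref{equiv}, the sequence $n \mapsto |\lbrace i : (n,i) \in S \rbrace|$, which counts the palindromic factors of length $n$, is $(\Enn,k)$-regular.

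I do not expect a genuine obstacle: the only feature not already present in the subword-complexity proof is the palindrome clause, and its single universal quantifier over $t$ together with the index $i+n-1-t$ (formed by addition and subtraction, and always nonnegative since $t < n$) sits directly inside the framework of Theorem~\ref{logic}. The one point requiring a little care is the interplay of the $n = 0$ and $i = 0$ cases with the encoding conventions — but these affect at most finitely many initial terms, which is immaterial to $(\Enn,k)$-regularity.
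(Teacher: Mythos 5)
Your proposal is correct and matches the paper's own proof: both count first occurrences of palindromic factors via the predicate ``${\bf x}[i..i+n-1]$ is a palindrome and differs from every earlier length-$n$ factor,'' obtain a regular language of encodings by Theorem~\ref{logic}, and then apply Theorems~\ref{one} and \ref{equiv} (the paper states the palindrome condition with $0 \leq t \leq n/2$ rather than $0 \leq t < n$, an immaterial difference). Your extra care about the trailing-zero/${\tt B}$ normalization is exactly the step the paper leaves implicit by reference to the preceding subword-complexity proof.
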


\begin{proof}
The number of distinct palindromes of length $n$ in {\bf x}

\smallskip

is equal to

\smallskip

\noindent the number of indices $i$ such that
${\bf x}[i..i+n-1]$
is a palindrome and ${\bf x}[i..i+n-1]$ does not appear previously in 
$\bf x$

\smallskip

is equal to

\smallskip

\noindent the number of indices $i$ such that
${\bf x}[i..i+n-1] = {\bf x}[i..i+n-1]^R$ and for all $j$ with
$0 \leq j < i$,
${\bf x}[i..i+n-1]$ is not the same as ${\bf x}[j..j+n-1]$

\smallskip

is equal to

\smallskip

\noindent the number of indices $i$ such that
for all $t$, $0 \leq t \leq n/2$, $a(i+t)=a(i+n-1-t)$
and for all $j$ with $0 \leq j < i$, there exists $u$ with
$0 \leq u < n$
such that $a(i+u) \not= a(j+u)$.  Now apply Theorems~\ref{one} and \ref{equiv}.
\end{proof}

\begin{remark}
\normalfont
Allouche, Baake, Cassaigne, and Damanik
\cite{Allouche&Baake&Cassaigne&Damanik:2003}, Thm.\ 10,
proved that the palindrome 
complexity of the fixed point of a primitive $k$-uniform morphism
is $k$-automatic.  Our result is more general:  it shows that the
palindrome complexity of a $k$-automatic sequence is $k$-regular, and
hence is $k$-automatic iff it is bounded.

Jean-Paul Allouche kindly informs us that
our result has just been obtained independently by Carpi and D'Alonzo 
\cite{Carpi&DAlonzo:2011}.
\end{remark}

\begin{example}
\normalfont
Let $f(n)$ denote the number of unbordered factors of length
$n$ of the Thue-Morse sequence.
Here is a brief table of the values of $f(n)$:
\begin{table}[H]
\begin{center}
\begin{tabular}{|c|c|c|c|c|c|c|c|c|c|c|c|c|c|c|c|c|c|c|c|c|c|c|c|c|c|c|c|c|c|c|c|c|c|}
\hline
$n$   & 1 & 2 & 3 & 4 & 5 & 6 & 7 & 8 & 9 & 10 & 11 & 12 & 13 & 14 & 15 & 16 \\
\hline
$f(n)$& 2 & 2 & 4 & 2 & 4 & 6 & 0 & 4 & 4 & 4 & 4 & 12 & 0 & 4 & 4 & 8  \\
\hline
\end{tabular}
\end{center}
\end{table}
By Theorems~\ref{one} and \ref{equiv} we know that $f$ is $(\Enn,2)$-regular.  Conjecturally,
$f$ is given by the system of recurrences
\begin{eqnarray*}
f(4n+1) & = & f(2n+1) \\
f(8n+2) & =  &f(2n+1)-8f(4n) + f(4n+3) + 4f(8n) \\
f(8n+3) & =  &2f(2n) - f(2n+1) + 5f(4n) + f(4n+2) - 3f(8n) \\
f(8n+4) & =  &-4f(4n) + 2f(4n+2) + 2f(8n) \\
f(8n+6) & =  &2f(2n)-f(2n+1) + f(4n) + f(4n+2) + f(4n+3) -f(8n) \\
f(16n) & =  &-2f(4n) + 3f(8n) \\
f(16n+7) & = & -2f(2n) + f(2n+1) -5f(4n) + f(4n+2) +3f(8n) \\
f(16n+8) & = & -8f(4n) + 4f(4n+2) + 4f(8n) \\
f(16n+15) & = & -8f(4n) + 2f(4n+3) + 4f(8n) + f(8n+7) .
\end{eqnarray*}
In principle this could be verified by our method, but we have not
yet done so.
\end{example}

\begin{theorem}
Let ${\bf x} = a(0) a(1) a(2) \cdots$ be a $k$-automatic sequence.
Then the following sequences are also $k$-automatic:

\begin{itemize}
\item[(a)] $b(i) = 1$ if there is a square beginning at position $i$;
$0$ otherwise
\item[(b)] $c(i) = 1$ if there is a square centered at position $i$;
$0$ otherwise
\item[(c)] $d(i) = 1$ if there is an overlap beginning at position $i$;
$0$ otherwise
\item[(d)] $e(i) = 1$ if there is a palindrome beginning at position $i$;
$0$ otherwise
\item[(e)] $f(i) = 1$ if there is a palindrome centered at position $i$;
$0$ otherwise
\end{itemize}
\end{theorem}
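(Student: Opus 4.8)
The plan is to reduce all five items to Theorem~\ref{logic}. Each of the conditions ``there is a square (overlap, palindrome) beginning, or centered, at position $i$'' is, in the single free variable $i$, a first-order statement built from quantifiers, logical connectives, integer variables, $+$, $-$, comparison, and lookups $a(\cdot)$ in $\bf x$; so Theorem~\ref{logic} --- more precisely, the automaton construction in its proof, which for such a statement outputs a DFA accepting exactly the base-$k$ representations of the integers making it true --- shows that the set of qualifying positions $i$ is a $k$-automatic set, hence that its characteristic sequence is $k$-automatic.

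The first step, then, is to write the predicates. There is a square beginning at position $i$ iff $\exists\, n\ge 1$ such that $a(i+t)=a(i+n+t)$ for all $t$ with $0\le t<n$. There is a square centered at position $i$ (that is, a factor $xx$ whose two halves abut at $i$) iff $\exists\, n\ge 1$ with $i\ge n$ and $a(i-n+t)=a(i+t)$ for all $t$ with $0\le t<n$. Since an overlap is precisely a factor of length $2p+1$ having period $p$ for some $p\ge 1$, there is an overlap beginning at $i$ iff $\exists\, p\ge 1$ such that $a(i+t)=a(i+p+t)$ for all $t$ with $0\le t\le p$. There is a palindrome of length $\ge 2$ beginning at $i$ iff $\exists\, n\ge 2$ such that $a(i+t)=a(i+n-1-t)$ for all $t$ with $0\le t<n$. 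Finally, there is a palindrome centered at $i$ iff $\exists\, m\ge 1$ with $i\ge m$ and $a(i-t)=a(i+t-1)$ for all $t$ with $1\le t\le m$ (replacing $a(i+t-1)$ by $a(i+t)$ yields the odd-length variant).

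After that, each predicate is fed through the usual machinery exactly as the automata $M_1,\dots,M_4$ are assembled in the proof of the unbordered-factor theorem: an inner NFA guesses the universally quantified index $t$ and checks $a(\cdot)\ne a(\cdot)$ inside a negation, is determinized and complemented, and serves as a subroutine for an outer NFA that guesses the existential witness ($n$, $p$, or $m$), verifies the accompanying bound, and calls the subroutine; determinizing the outer machine gives a DFA accepting $(i)_k$ for precisely the good $i$. This yields in turn the $k$-automaticity of $b$, $c$, $d$, $e$, and $f$. The only real care needed --- and hence the closest thing to an obstacle --- lies in the index arithmetic for the ``centered'' cases, where a subtraction occurs and a guard such as $i\ge n$ or $i\ge m$ must be imposed to keep positions nonnegative, and in fixing once and for all the intended meaning of ``centered at position $i$'' for palindromes (even- versus odd-length); every reasonable choice is first-order expressible, so the conclusion is unchanged.
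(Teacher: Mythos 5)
Your proposal is correct and is exactly the argument the paper intends: the paper states this theorem without an explicit proof, treating it as an immediate application of Theorem~\ref{logic}, and your explicit first-order predicates (with the nonnegativity guards for the centered cases) together with the $M_1,\dots,M_4$-style automaton construction supply precisely the omitted details.
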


\begin{remark}
\normalfont
Brown, Rampersad, Shallit, and Vasiga proved results (a)--(c) for the special
case of the Thue-Morse sequence \cite{Brown&Rampersad&Shallit&Vasiga:2006}.
\end{remark}

\begin{theorem}
Let $\bf x$ and $\bf y$ be $k$-automatic sequences.  Then the following are
$(\Enni, k)$-regular:

\begin{itemize}
\item[(a)] the number of distinct square factors in $\bf x$ of length $n$;
\item[(b)] the number of squares in $\bf x$ beginning at (centered at, ending at) position $n$;
\item[(c)] the length of the longest square in $\bf x$ beginning at (centered at,
ending at) position $n$;
\item[(d)] the number of palindromes in $\bf x$ beginning at (centered at, ending at) position $n$;
\item[(e)] the length of the longest palindrome in $\bf x$ beginning at (centered at,
ending at) position $n$;
\item[(f)] the length of the longest fractional power in $\bf x$  beginning at
(ending at) position $n$;
\item[(g)] the number of distinct recurrent factors in $\bf x$ of length $n$;
\item[(h)] the number of factors of length $n$ that occur in $\bf x$ but
not in $\bf y$.
\item[(i)] the number of factors of length $n$ that occur in both $\bf x$ and
$\bf y$.
\end{itemize}
\label{nine}
\end{theorem}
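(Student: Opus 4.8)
The plan is to dispatch all nine items by a single uniform argument, combining Theorem~\ref{logic} (to obtain a regular language encoding the set of ``witnesses'' to be counted), Theorem~\ref{one} or Theorem~\ref{onen} (to read a counting function off such a language as an $\Enn$- or $\Enni$-recognizable series), and Theorem~\ref{equiv} (to convert a recognizable series into an $(\Enni,k)$-regular sequence). Two patterns cover everything. In the first, a ``number of objects of type $P$ at position, or of length, $n$'' quantity is written as $|\lbrace (n,w) : \Phi(n,w)\rbrace|$, where $w$ is the parameter indexing the objects counted (a length, a period, the starting index of a first occurrence, $\dots$) and $\Phi$ is a predicate built from quantifiers, logical connectives, $+$, $-$, comparisons, and indexing into ${\bf x}$ --- and, for items (h) and (i), into ${\bf y}$ as well, which is permissible since one simply runs the two generating automata in parallel, exactly as in the treatment of Fagnot's theorem above. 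By Theorem~\ref{logic} the set of base-$k$ encodings of these tuples is a regular language. In the second pattern, a ``length of the longest object of type $P$ at position $n$'' quantity is first converted to counting form: it equals
$$ \bigl| \lbrace m \geq 1 \ : \ \text{there exists an object of type } P \text{ of length} \geq m \text{ at position } n \rbrace \bigr|, $$
and the bracketed condition is again of the permitted first-order shape (existentially quantify over the parameters of the object and require its length to be at least $m$). Such a value can be $\infty$ --- for instance in the eventually periodic part of ${\bf x}$ --- which is precisely why the statement asserts only $(\Enni,k)$-regularity; for items (a), (g), (h), (i) the count is bounded by the number of length-$n$ factors and hence finite.

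Given the regular language $S'$ encoding the witness set, the next step is to count, for each $n$, the number of distinct second components $w$ occurring with that $n$. The one subtle point of the whole proof arises here: a pair $(n,w)$ has infinitely many base-$k$ encodings differing by trailing zeroes, so counting accepting paths naively would overcount. The remedy is the device already used to prove that the factor-complexity function is $k$-regular: assume $S'$ closed under appending trailing zeroes, apply a letter-to-letter transducer that rewrites as the padding symbol ${\tt B}$ every $0$ lying beyond the last nonzero digit of the $n$-component, and then discard every encoding whose last symbol is $[{\tt B},0,\dots,0]$. After this, each value of $n$ (read ignoring ${\tt B}$'s) has exactly one surviving encoding, and for each such $n$ every relevant $w$ has a unique one, so the number of surviving encodings that project to a given $n$ is exactly the desired count. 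Feeding the resulting language into Theorem~\ref{one} when the count is always finite, or Theorem~\ref{onen} when it may be $\infty$, exhibits the counting function as an $\Enn$- or $\Enni$-recognizable series; Theorem~\ref{equiv} then gives that it is $(\Enn,k)$- or $(\Enni,k)$-regular, hence in all cases $(\Enni,k)$-regular.

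What remains is only to exhibit a suitable $\Phi$ for each item, which is routine. For (a), $\Phi(n,i)$ says that $n$ is even, that ${\bf x}[i..i+n-1]$ is a square (i.e.\ $a(i+t)=a(i+n/2+t)$ for $0\leq t<n/2$), and that no $j<i$ begins the same factor of length $n$; for (b) one counts pairs $(n,\ell)$ with $a(n+t)=a(n+\ell+t)$ for $0\leq t<\ell$, and the shifted variants for ``centered at'' and ``ending at''; (c), (e), (f) use the ``longest'' rewriting above with the square, palindrome (${\bf x}[\cdot..\cdot]={\bf x}[\cdot..\cdot]^R$), and periodicity ($a(i+t)=a(i+t+p)$ with $p$ less than the length) predicates, together with the obvious bounds forbidding negative indices; (d) counts palindrome occurrences at a position just as (b) counts squares; (g) counts first occurrences $i$ of length-$n$ factors subject additionally to recurrence, i.e.\ $\forall j\,\exists k>j$ realizing the same factor; and (h), (i) count first occurrences in ${\bf x}$ of length-$n$ factors that respectively do not, or do, occur somewhere in ${\bf y}$. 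In every case $\Phi$ is of the form allowed by Theorem~\ref{logic}, so the two preceding paragraphs apply without change. I expect the main obstacle to be exactly the uniqueness bookkeeping of the second paragraph; once that template is fixed, the rest is a catalogue of predicates.
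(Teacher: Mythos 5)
Your proof is correct and follows exactly the approach the paper intends: the paper in fact states Theorem~\ref{nine} without an explicit proof, relying on the templates established just before it --- the first-occurrence counting with the ${\tt B}$-padding normalization from the factor-complexity proof, the ``length of the longest object equals the number of $m$ with an object of length $\geq m$'' device from the recurrence-function proof, and the chain Theorem~\ref{logic} $\to$ Theorem~\ref{one}/\ref{onen} $\to$ Theorem~\ref{equiv}. Your write-up correctly assembles these same ingredients, including the one genuinely delicate point (uniqueness of encodings) and the reason $\Enni$ rather than $\Enn$ is needed for the unbounded items.
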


\begin{remark}
\normalfont
Brown, Rampersad, Shallit, and Vasiga proved results (b)--(c) for the special
case of the Thue-Morse sequence \cite{Brown&Rampersad&Shallit&Vasiga:2006}.
\end{remark}

We now turn to some other measures that have received much attention.
The recurrence function 
$R_{\bf x} (n) = R(n)$ of an infinite word ${\bf x}$ is the smallest integer $t$ such that every factor of
length $t$ of $\bf x$ contains as a factor every factor of length $n$.  Said otherwise,
it is the size of the smallest ``window'' one can slide along $\bf x$ and always
contain all length-$n$ factors.

\begin{theorem}
If ${\bf x}$ is $k$-automatic, then $(R_{\bf x} (n))_{n \geq 0}$ is $(\Enni,k)$-regular.
\end{theorem}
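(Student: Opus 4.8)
The plan is to reuse the pattern of the earlier enumeration results: express a suitable relation between $n$ and $t$ as a predicate of the type permitted by Theorem~\ref{logic}, extract a counting function from it, and recognize that function using Theorem~\ref{onen} together with Theorem~\ref{equiv}. First I would observe that $R_{\bf x}(n)$ is the least integer $t$ for which the predicate
$$P(n,t) :\equiv \forall i \geq 0\ \forall j \geq 0\ \exists l\ \big(\, i \leq l\ \wedge\ l+n \leq i+t\ \wedge\ \forall m\,(m<n \implies a(l+m)=a(j+m))\,\big)$$
holds; that is, for which every length-$t$ factor of $\bf x$ contains every length-$n$ factor. Since $P(n,t)$ is monotone in $t$, the set $Q := \{(n,t) : \neg P(n,t)\}$ has the property that, for each fixed $n$, the slice $\{t : (n,t)\in Q\}$ is the initial segment $\{0,1,\ldots,R_{\bf x}(n)-1\}$ of $\Enn$ --- read as all of $\Enn$ when $R_{\bf x}(n)=\infty$, which is exactly the case where $\bf x$ fails to be uniformly recurrent at level $n$. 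Consequently $R_{\bf x}(n) = \big|\{\,t : (n,t) \in Q\,\}\big|$ as an element of $\Enni$. The formula $\neg P(n,t)$ uses only quantifiers, addition, comparison, and indexing into $\bf x$, so by Theorem~\ref{logic} the language $Q'$ of base-$k$ encodings of the pairs in $Q$ is regular, and as usual we may assume $Q'$ is closed under padding either component with trailing zeroes.

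It then remains to produce a device whose number of outputs on input $(n)_k^R$ equals $|\{t : (n,t)\in Q\}|$. For this I would build a nondeterministic transducer that guesses a reversed base-$k$ representation $w$ of a candidate $t$ digit by digit, consuming the matching digits of $n$ from its input, and uses a DFA for $Q'$ to check that the pair so formed lies in $Q'$; it outputs $w$ exactly when the check succeeds. Because the representations of $t$ and of $n$ need not have the same length, I would reconcile the discrepancy exactly as in the proofs that the subword complexity and palindrome complexity functions are $k$-regular: introduce a padding symbol $\tt B$ and permit $\epsilon$-transitions on the input so that, after canonicalizing, each value of $t$ yields exactly one accepting path while each $n$ is carried by its unique ${\tt B}$-padded canonical representation. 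Since $R_{\bf x}(n)$ may be infinite I would appeal to Theorem~\ref{onen} rather than Theorem~\ref{one}; combining it with Theorem~\ref{equiv} then shows that $(R_{\bf x}(n))_{n\geq 0}$ is $(\Enni,k)$-regular.

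The main obstacle is precisely that last bookkeeping step. The slice $\{t : (n,t)\in Q\}$ has size $R_{\bf x}(n)$, which is not bounded by any function of the length of $(n)_k$ --- indeed $R_{\bf x}(n)$ typically grows faster than $n$ --- so the representation of $t$ may be far longer than that of $n$, and getting the $\tt B$/$\epsilon$-padding discipline right, so that every value of $t$ contributes exactly one path and no value is dropped or double-counted, is the delicate part. Everything else then reduces to the already-established Theorems~\ref{logic}, \ref{onen}, and \ref{equiv}.
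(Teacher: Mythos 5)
Your proposal is correct and follows essentially the same route as the paper: express ``$R_{\bf x}(n) > t$'' as a first-order predicate in the variables $n,t$ (the paper writes it directly with existential $i,j$ and universal $l$; you write its negation $P(n,t)$, which is logically the same), note that for each $n$ the number of values of $t$ satisfying it equals $R_{\bf x}(n)$ as an element of $\Enni$, and conclude via the counting characterizations of Theorems~\ref{onen} and~\ref{equiv}. Your extra care about the $\tt B$-padding when $(t)_k$ is longer than $(n)_k$ is exactly the bookkeeping the paper leaves implicit, and your index range $i \leq l,\ l+n \leq i+t$ is in fact cleaner than the paper's.
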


\begin{proof}
We translate the predicate ``$R(n) > t$'', as follows:

\noindent $R(n) > t$
\smallskip

iff

\smallskip

\noindent there exists $i \geq 0$, $j \geq 0$ such that
${\bf x}[j..j+n-1]$ appears nowhere in ${\bf x}[i..i+t-1]$

\smallskip

iff

\smallskip

\noindent there exists $i \geq 0$, $j \geq 0$ such that for all
integers $l$ with $i \leq l < i + t - 1 - n$ we have ${\bf x}[l..l+n-1] \not=
{\bf x}[j..j+n-1]$

\smallskip

iff

\smallskip

\noindent there exists $i \geq 0$, $j \geq 0$, such that for all
integers $l$ with $i \leq l < i + t - 1 - n$ there exists $m$,
$0 \leq m < n$ such that ${\bf x}[l+m] \not= {\bf x}[j+m]$.

Now for any fixed $n$, the number of positive integers $t$ for which 
$R(n) > t$ is equal to $R(n)$.  Hence $(R(n))_{n \geq 0}$ is $(\Enni, k)$-regular.
\end{proof}

Another measure is called ``appearance'' 
\cite{Allouche&Shallit:2003b}, \S 10.10.
The appearance function
$A_{\bf x} (n) = A(n)$ is the smallest integer $t$ such that every factor 
of length $n$ appears in a prefix of length $t$ of $\bf x$.  The following result
can be proved in an analogous manner to the previous one.

\begin{theorem}
If ${\bf x}$ is $k$-automatic, then $(A_{\bf x} (n))_{n \geq 0}$ is $(\Enn,k)$-regular.
\end{theorem}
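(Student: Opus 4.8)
The plan is to mimic the proof just given for the recurrence function $R_{\bf x}(n)$, but with a simpler predicate, since appearance only involves a prefix rather than an arbitrary window. First I would express the statement ``$A(n) > t$'' as a first-order predicate in the allowed language. The key observation is that $A(n) > t$ means that some factor of length $n$ fails to appear in the prefix ${\bf x}[0..t-1]$, i.e.\ there exists $j \geq 0$ such that for all $l$ with $0 \leq l \leq t-1-n$ we have ${\bf x}[l..l+n-1] \neq {\bf x}[j..j+n-1]$. Unwinding the inequality of factors in the usual way, this becomes: there exists $j \geq 0$ such that for all $l$ with $0 \leq l \leq t-1-n$ there exists $m$ with $0 \leq m < n$ such that $a(l+m) \neq a(j+m)$. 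This predicate uses only quantifiers, addition, subtraction, comparison, and indexing into $\bf x$, so by Theorem~\ref{logic} the language of base-$k$ encodings of pairs $(n,t)$ satisfying it is regular.

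Next I would invoke the counting principle already used twice above: for each fixed $n$, the number of nonnegative integers $t$ with $A(n) > t$ is exactly $A(n)$ (taking $t$ ranging over $0, 1, \ldots, A(n)-1$; note $A(n) \geq n$, and the edge cases $t < n$ are handled correctly since the range of $l$ is then empty, forcing $A(n) > t$ to hold). Thus $A(n) = |\{t : (n,t) \in S\}|$ where $S$ is the set whose encoding is the regular language from the previous paragraph. After the customary padding cleanup — arranging that the $n$-component has a unique representation via a ${\tt B}$-substitution transducer as in the distinct-factors proof, and dropping encodings ending in $[{\tt B},0]$ — Theorem~\ref{one} shows the resulting counting function is $\Enn$-recognizable, and Theorem~\ref{equiv} then gives that $(A_{\bf x}(n))_{n \geq 0}$ is $(\Enn,k)$-regular.

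The one point requiring a little care — and the only real difference from the $R_{\bf x}$ proof — is why we get $(\Enn,k)$-regularity rather than merely $(\Enni,k)$-regularity: we must argue $A(n)$ is finite for every $n$. This holds because $\bf x$ has only finitely many factors of each length $n$ (there are at most $|\Delta|^n$), and each occurs at least once, hence within some bounded prefix; so $A(n) < \infty$. This finiteness guarantees no transition count blows up, so $\Enn$ (rather than $\Enni$) suffices, matching the statement. I expect no genuine obstacle here; the main thing to get right is the bookkeeping on the representation of $n$ so that distinct values of $t$ are counted once each, exactly as in the distinct-factors theorem.

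\begin{proof}
We translate the predicate ``$A(n) > t$'' as follows.

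\noindent $A(n) > t$
\smallskip

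iff

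\smallskip

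\noindent there exists $j \geq 0$ such that ${\bf x}[j..j+n-1]$ appears nowhere
in the prefix ${\bf x}[0..t-1]$

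\smallskip

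iff

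\smallskip

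\noindent there exists $j \geq 0$ such that for all integers $l$ with
$0 \leq l \leq t-1-n$ we have ${\bf x}[l..l+n-1] \not= {\bf x}[j..j+n-1]$

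\smallskip

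iff

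\smallskip

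\noindent there exists $j \geq 0$ such that for all integers $l$ with
$0 \leq l \leq t-1-n$ there exists $m$, $0 \leq m < n$, such that
${\bf x}[l+m] \not= {\bf x}[j+m]$.

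By Theorem~\ref{logic} the set $S$ of pairs $(n,t)$ satisfying this predicate
has a regular base-$k$ encoding, which we may assume is closed under adjoining
trailing zeroes.  For any fixed $n$, the number of nonnegative integers $t$
with $A(n) > t$ is exactly $A(n)$: the values $t = 0, 1, \ldots, A(n)-1$ all
satisfy it, and no larger $t$ does.  (When $t \leq n$ the range of $l$ is empty,
so the predicate holds trivially; this is consistent since $A(n) \geq n$.)
Moreover $A(n) < \infty$, since $\bf x$ has finitely many factors of length $n$,
each occurring in some prefix.

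As in the proof that the subword complexity is $(\Enn,k)$-regular, we apply a
transducer to the encoding of $S$ that replaces every $0$ after the last nonzero
digit of the first component by $\tt B$, and discard every representation ending
in $[{\tt B},0]$; this preserves regularity and ensures that each $n$, up to
ignoring ${\tt B}$'s, has a unique representation and that each $t$ paired with a
given $n$ is counted once.  By Theorems~\ref{one} and~\ref{equiv}, the counting
function $n \mapsto A(n)$ is $(\Enn,k)$-regular.
\end{proof}
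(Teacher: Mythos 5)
Your proof is correct and is essentially the argument the paper intends, since the paper only remarks that the appearance function ``can be proved in an analogous manner'' to the recurrence function; you have simply written out that analogue, including the two points that genuinely need saying (the count of $t$ with $A(n)>t$ equals $A(n)$, and $A(n)<\infty$ so one lands in $\Enn$ rather than $\Enni$). The only quibble is a harmless off-by-one in the range of $l$ (the factor ${\bf x}[l..l+n-1]$ lies in the prefix of length $t$ iff $l\le t-n$, not $l\le t-1-n$), of exactly the kind already present in the paper's own proof for $R_{\bf x}$.
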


Next, we consider a measure due to Garel \cite{Garel:1997}.
The separator length $S_{\bf x} (n)$ is the length of the smallest factor
that begins at position $n$ of $\bf x$ and does not occur previously.

\begin{theorem}
If ${\bf x}$ is $k$-automatic, then $(S_{\bf x} (n))_{n \geq 0}$ is $(\Enn,k)$-regular.
\end{theorem}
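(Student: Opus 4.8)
The plan is to mimic the proofs of the preceding theorems (for the recurrence function and the appearance function): express the separator length via a first-order predicate of the form ``$S_{\bf x}(n) > t$'', show that the set of pairs $(n,t)$ satisfying it is a regular language by Theorem~\ref{logic}, and then observe that for each fixed $n$ the count of $t$ with $S_{\bf x}(n) > t$ equals $S_{\bf x}(n)$, so that Theorem~\ref{one} (or the version with trailing-blank padding, Theorem~\ref{onen}) combined with Theorem~\ref{equiv} delivers $(\Enn,k)$-regularity.

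First I would unwind the definition. The factor of length $\ell$ beginning at position $n$ occurs previously iff there is some $j$ with $0 \le j < n$ such that ${\bf x}[j..j+\ell-1] = {\bf x}[n..n+\ell-1]$. Hence $S_{\bf x}(n)$ is the smallest $\ell \ge 1$ for which no such $j$ exists. So $S_{\bf x}(n) > t$ exactly when for every $\ell$ with $1 \le \ell \le t$, the factor of length $\ell$ at position $n$ does occur earlier, i.e.
\[
\forall \ell\ (1 \le \ell \le t) \ \exists j\ (0 \le j < n)\ \forall m\ (0 \le m < \ell)\ \ {\bf x}[n+m] = {\bf x}[j+m].
\]
This is a predicate in the allowed vocabulary of Theorem~\ref{logic} — quantifiers over integers, addition, comparison, and indexing into $\bf x$ — so the set $S' $ of base-$k$ encodings of the pairs $(n,t)$ satisfying it is a regular language. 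As in the earlier proofs I would assume without loss of generality that if one (reversed, possibly trailing-zero-padded) representation of $(n,t)$ is in $S'$ then all of them are.

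Next I would count. Observe that $S_{\bf x}(n) \ge 1$ always, and that $S_{\bf x}(n)$ is finite precisely when $\bf x$ is recurrent; in the non-recurrent case $S_{\bf x}(n) = \infty$ for all sufficiently large $n$, which is why the target semiring could in principle be $\Enni$, but since $S_{\bf x}(n)$ takes values in $\Enn$ whenever it is finite and the statement asserts $(\Enn,k)$-regularity, Corollary~\ref{cor} lets us pass from an $\Enni$-recognizable description back to an $\Enn$-recognizable one. For each fixed $n$, the number of positive integers $t$ with $S_{\bf x}(n) > t$ is exactly $S_{\bf x}(n)$ (the values $t = 1, 2, \ldots, S_{\bf x}(n)-1$ if one adopts ``$>$'', or one adjusts the inequality by one). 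Applying a transducer to $S'$ to guarantee that $n$ has a unique representation up to padding — exactly the ${\tt B}$-relabelling trick used in the distinct-factors proof — and then invoking Theorems~\ref{one} and \ref{equiv} (via Theorem~\ref{onen} and Corollary~\ref{cor} if $\infty$ values intervene), we conclude that $(S_{\bf x}(n))_{n \ge 0}$ is $(\Enn,k)$-regular.

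The main obstacle is bookkeeping rather than conceptual: one must be careful about the quantifier alternation ($\forall\exists\forall$) when translating the predicate into nested automaton constructions à la the unbordered-factor proof, and one must handle the off-by-one between ``$S_{\bf x}(n) > t$'' and the count of witnessing $t$'s, as well as the edge case $\ell = 1$ (a length-$1$ factor at position $n$ recurs earlier iff the letter $a(n)$ appears among $a(0), \ldots, a(n-1)$, so $S_{\bf x}(n) = 1$ can genuinely occur). Neither of these presents a real difficulty once the machinery of Theorems~\ref{logic}, \ref{one}, \ref{onen}, and \ref{equiv} is in hand.
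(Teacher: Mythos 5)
Your proposal is correct and follows essentially the same route as the paper: translate ``$S_{\bf x}(n) > t$'' into a first-order predicate (for all lengths up to $t$, the factor of that length at position $n$ occurs at some earlier position $j < n$), obtain a regular language of encoded pairs $(n,t)$, and count the witnessing $t$'s for each $n$ to get $S_{\bf x}(n)$ via the counting theorems. The extra care you take with the off-by-one in the range of $t$ and the padding/uniqueness of representations is exactly the bookkeeping the paper elides.
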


\begin{proof}
The predicate ``$S_{\bf x} (n) > t$'' is the same as saying that for every $i \leq t$
the word of length $i$ beginning at position $n$ of $\bf x$ occurs previously in
$\bf x$, which is the same as saying
for all $i, 0 \leq i \leq t$,  there exists $j, 0 \leq j < n$ such that
${\bf x}[n..n+i-1] = {\bf x}[j..j+i-1]$.
Now look at the pairs $(n,t)$ satisfying this, with $n$ positive.
For each $n$ there are exactly 
$S_{\bf x}(n)$ different $t$'s that work.  
\end{proof}

\begin{remark}
\normalfont
Garel \cite{Garel:1997} proved this for the case of a fixed point of a
uniform circular morphism; our proof works for the more general case of
an arbitrary $k$-automatic sequence.
\end{remark}

Carpi and D'Alonzo have introduced a measure they called 
repetitivity index \cite{Carpi&Dalonzo:2009}.  This measure $I_{\bf x} (n)$ is
the minimum distance between two
consecutive occurrences of the same length-$n$ factor in $\bf x$.  But
``$I_{\bf x}(n) > t$'' is the same as saying for all $i, j \geq 0$ with
$i \not= j$, the equality ${\bf x}[i..i+n-1] = {\bf x}[j..j+n-1]$ implies that
$j - i > t$.  Hence we get 

\begin{theorem}
If $\bf x$ is $k$-automatic, then its repetitivity index is $(\Enn,k)$-regular.
\end{theorem}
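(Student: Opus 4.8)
The plan is to follow the same two-step recipe used above for the separator length, appearance, and recurrence functions: first express the relevant threshold predicate in the logical language of Theorem~\ref{logic}, so that the associated set of integer tuples has a regular base-$k$ encoding, and then apply the $\Enn$-recognizability machinery (Theorems~\ref{one} and \ref{equiv}) to convert a counting statement about that regular language into $(\Enn,k)$-regularity of the sequence.

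Concretely, I would work with the predicate ``$I_{\bf x}(n) > t$''. As noted just before the statement, this holds if and only if for all $i,j$ with $0 \le i < j$ and $j-i \le t$ we have ${\bf x}[i..i+n-1] \neq {\bf x}[j..j+n-1]$, i.e.\ there exists $m$ with $0 \le m < n$ and $a(i+m) \neq a(j+m)$. This is a first-order formula built only from addition, subtraction, comparison of integers, indexing into $\bf x$, and comparison of elements of $\bf x$, so by Theorem~\ref{logic} the set $T = \{(n,t) : I_{\bf x}(n) > t\}$ has a regular base-$k$ encoding $T'$. Exactly as in the earlier enumeration proofs, I would assume without loss of generality that $T'$ is closed under adding and removing trailing zeros, apply a transducer that rewrites each trailing zero of the first ($n$-)component as the padding symbol ${\tt B}$, and then discard every representation ending in $[{\tt B},0]$; the upshot is that each value of $n$ has a single canonical representation and the fibre over $n$ is precisely $\{t : I_{\bf x}(n) > t\}$.

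For each fixed $n$ there is at least one length-$n$ factor that occurs twice, since there are only finitely many words of length $n$ but infinitely many positions; hence $I_{\bf x}(n)$ is a well-defined positive integer, and the number of nonnegative integers $t$ with $I_{\bf x}(n) > t$ is exactly $I_{\bf x}(n)$. Counting, for each $n$, the number of words in the transduced language whose first component encodes $n$ therefore computes $I_{\bf x}(n)$, and Theorems~\ref{one} and \ref{equiv} yield that $(I_{\bf x}(n))_{n\ge 0}$ is $(\Enn,k)$-regular. Because the value is always finite we land in $\Enn$-regular rather than merely $\Enni$-regular; alternatively, counting $t \ge 1$ recovers $I_{\bf x}(n)-1$, to which one adds back the constant sequence $1$, itself $k$-regular.

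I do not expect a genuine obstacle here: the construction is a direct instance of the framework already developed. The only points that need care are the finiteness observation above (which pins the semiring down as $\Enn$) and the now-routine bookkeeping that forces each $n$ to contribute a single unambiguous block of $t$-values to the language being counted --- the same ${\tt B}$-padding device used in the distinct-factor and separator-length arguments.
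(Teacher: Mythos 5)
Your proposal is correct and follows essentially the same route as the paper: express the threshold predicate ``$I_{\bf x}(n)>t$'' in the first-order language of Theorem~\ref{logic} to get a regular encoding of the pairs $(n,t)$, then observe that the fibre over $n$ has exactly $I_{\bf x}(n)$ elements and invoke Theorems~\ref{one} and~\ref{equiv}. The extra details you supply (finiteness of $I_{\bf x}(n)$ by pigeonhole, and the padding bookkeeping to make each $n$ contribute an unambiguous fibre) are exactly the points the paper leaves implicit.
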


For our final application,
Frid and Zamboni \cite{Frid&Zamboni:2010} introduced the notion of
``automatic permutation''.  This is a permutation of $\Enn$ based on
a $k$-automatic sequence $\bf x$, as follows:  we say $i < j$ if
the infinite word ${\bf x}[i..\infty]$ is lexicographically less than
the word ${\bf x}[j..\infty]$.    The \textit{permutation complexity} 
$p_{\bf x} (n)$
is the map that sends $n$ to the number of distinct finite permutations
of length $n$ induced by $\bf x$ \cite{Frid:2011}.

\begin{theorem}
The permutation complexity  of a $k$-automatic sequence
is $(\Enn,k)$-regular.
\end{theorem}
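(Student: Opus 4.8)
The plan is to follow the same pattern as the preceding applications: express the property "two distinct positions $i,j$ give the same permutation-type on a window of length $n$" as a predicate in the logical language of Theorem~\ref{logic}, so that the corresponding set of encodings is a regular language, and then count the distinct permutation-types by counting \emph{first} occurrences of each type, exactly as was done for the subword complexity function. Concretely, the key observation is that the permutation of length $n$ induced by $\bf x$ at a position $i$ is completely determined by the relative order of the $n$ suffixes ${\bf x}[i..\infty], {\bf x}[i+1..\infty], \ldots, {\bf x}[i+n-1..\infty]$, and this relative order is in turn determined by, for each pair $0 \le s < t < n$, whether ${\bf x}[i+s..\infty]$ is lexicographically less than, greater than, or (in the degenerate case) comparable-only-at-infinity to ${\bf x}[i+t..\infty]$. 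The lexicographic comparison ${\bf x}[i+s..\infty] < {\bf x}[i+t..\infty]$ is expressible in our language: it says there exists $u \ge 0$ such that $a(i+s+u) < a(i+t+u)$ and for all $v$ with $0 \le v < u$ we have $a(i+s+v) = a(i+t+v)$.

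The main step is then to write down the predicate ``the permutation of length $n$ at position $i$ equals the permutation of length $n$ at position $j$''. This holds iff for all pairs $s,t$ with $0 \le s,t < n$, the suffix ${\bf x}[i+s..\infty]$ compares to ${\bf x}[i+t..\infty]$ in the same way as ${\bf x}[j+s..\infty]$ compares to ${\bf x}[j+t..\infty]$ --- that is, ${\bf x}[i+s..\infty] < {\bf x}[i+t..\infty]$ iff ${\bf x}[j+s..\infty] < {\bf x}[j+t..\infty]$. Using the lexicographic comparison predicate above, this is a first-order formula over $\langle \Enn, +, V_k\rangle$ with the indexing operation into $\bf x$, with free variables $n$, $i$, $j$, and universally quantified $s,t$ and an alternation of quantifiers for each comparison. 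By Theorem~\ref{logic} the set
$$S = \{(n,i) \ : \ \text{for all } j \text{ with } 0 \le j < i, \text{ the permutation of length } n \text{ at } j \text{ differs from that at } i\}$$
has a regular language of base-$k$ encodings. As before we may assume $S'$ is closed under adding trailing zeroes to the representation, and then, exactly as in the subword-complexity proof, apply a transducer that marks trailing zeroes of the $n$-component with $\tt B$ and discards representations ending in $[{\tt B},0]$, so that for each $n$ the number of surviving encodings is precisely the number of first-occurring permutation-types, i.e. $p_{\bf x}(n)$. Theorems~\ref{one} and \ref{equiv} then yield that $(p_{\bf x}(n))_{n\ge0}$ is $(\Enn,k)$-regular.

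The point requiring a little care --- and the one I expect to be the main obstacle --- is that comparing two infinite suffixes lexicographically is not literally a comparison of finite words, so one must be sure the comparison predicate is genuinely expressible with the bounded resources of our logic. The resolution is that ${\bf x}[i+s..\infty]$ and ${\bf x}[i+t..\infty]$ are either equal at every position (captured by: for all $u$, $a(i+s+u)=a(i+t+u)$, an honest $\forall$ statement) or else they first differ at some finite position $u$, and at that position one letter is smaller; both alternatives are expressible, and together they partition all cases, so the three-valued comparison relation is definable. Once this is in hand, everything else is a routine instance of the machinery already developed, and no genuinely new automaton construction is needed beyond what appears in the proofs of the subword-complexity and palindrome-complexity theorems above.
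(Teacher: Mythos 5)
Your proposal is correct and follows essentially the same route as the paper: define the lexicographic comparison of shifts as a first-order predicate, use it to express "the length-$n$ permutations at positions $i$ and $j$ coincide," and then count first occurrences via the same regular-language/transducer machinery used for subword complexity. Your extra care about the degenerate case of equal suffixes is a reasonable refinement that the paper's (terser) proof glosses over.
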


\begin{proof}
First, we need to see that for $k$-automatic sequences the predicate
``the shift of $\bf x$ beginning at position $i$ 
is lexicographically less than the shift beginning at position $j$" is
$k$-automatic.

To see this, given positions $i$ and $j$, we verify that there is
some index $t$ such that $a[i+l] = a[j+l]$ for all $l < t$, and also that
$a[i+t] < a[j+t]$.

Next, we need to see that given $i, j, n$ we need to see that the
predicate ``the length-$n$ permutation induced by the shifts starting at
position $i$ coincides with that starting at $j$'' is automatic.

To do this we verify that
for all indices $l$ with $i \leq l, m < i+n$, the relation in the previous
paragraph holds between $i+l$ and $i+m$ in the same way
as it holds for $j+l$ and $j+m$.

In the final step, we enumerate the number of indices $i$ for which the
permutation at position $i$ of length $n$ does not match the one occurring
at any previous index.  This is just the number of distinct
permutations of length $n$.   
\end{proof}

As a corollary, we recover the result of Widmer \cite{Widmer:2011} that
the permutation complexity of
the Thue-Morse word is $(\Enn,2)$-regular.  In principle his description
could be mechanically verified.

\section{Linear bounds}

Yet another application of our method allows us to obtain linear bounds on
many quantities associated with automatic sequences.  As a first example,
we recover an old result of Cobham \cite{Cobham:1972} on ``subword'' complexity.

\begin{theorem}
The number of distinct factors of length $n$ of an
automatic sequence is $O(n)$.
\end{theorem}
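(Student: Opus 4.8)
The plan is to leverage the fact, proved just above, that the subword complexity function $b(n)$ (counting distinct factors of length $n$) is $(\Enn,k)$-regular. Once we know $(b(n))_{n\ge 0}$ is $k$-regular, a linear bound follows from a general structural fact about $k$-regular sequences over $\Enn$ (or $\Zee$): if $(b(n))_{n \ge 0}$ is $k$-regular, then $b(n) = O(n^{\log_k m})$ where $m$ is the number of distinct matrices among the $\mu(a)$ in a linear representation — more precisely, one bounds the growth of $\|\mu(w)\|$ over all $w$ of length $\ell$ by the obvious $C \cdot m^{\ell}$ where $C$ bounds a single matrix norm. Since a word $w$ of length $\ell$ represents an integer $n < k^{\ell}$, this gives $b(n) = O(n^{\log_k m})$. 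To get the sharp linear bound $O(n)$, I would instead argue directly from the combinatorics: each factor of length $n+1$ extends a factor of length $n$, so $b(n+1) \ge b(n)$, and more importantly the number of factors of length $n$ that extend in two or more ways (the "right-special" factors) controls the increments $b(n+1) - b(n)$.

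First I would recall that $b(n+1) - b(n) = \sum_{f} (\deg^+(f) - 1)$, summed over length-$n$ factors $f$, where $\deg^+(f)$ is the number of distinct letters $a$ with $fa$ a factor; hence $b(n+1)-b(n)$ equals the number of right-extensions beyond the first, which is at most $(|\Delta|-1)$ times the number of right-special factors of length $n$. Second, and this is where automaticity enters, I would show that for a $k$-automatic sequence the number of right-special factors of length $n$ is bounded by a constant independent of $n$: a right-special factor of length $n$ occurs at some position $i$ and there is a position $j$ with $a(i..i+n-1) = a(j..j+n-1)$ but $a(i+n) \ne a(j+n)$; using the automaton structure (the Myhill–Nerode classes of the base-$k$ language, or equivalently the finiteness of the $k$-kernel of $\bf x$), positions $i$ and $j$ that agree for a long stretch but then differ are heavily constrained, so only boundedly many distinct right-special factors arise. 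Summing, $b(n) = b(1) + \sum_{i=1}^{n-1} (b(i+1)-b(i)) \le b(1) + C' n$, giving $b(n) = O(n)$.

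Alternatively — and this is probably the cleanest route to write up — I would skip the right-special analysis and instead combine two facts: (i) $(b(n))_{n\ge 0}$ is $(\Enn,k)$-regular, so it satisfies the $k$-regular growth dichotomy, and (ii) a crude combinatorial bound rules out polynomial growth of degree $> 1$. For (ii): any factor of length $n$ is a prefix of length $n$ of some longer factor, and in a $k$-automatic sequence generated by a uniform morphism of size $k$ (via Cobham), every length-$n$ factor is contained in the image under the morphism of some length-$(\lceil n/k\rceil + 2)$ factor, whence $b(n) \le C\, b(\lceil n/k \rceil + 2)$, i.e. $b(n) \le C\, b(n/k + O(1))$; iterating gives $b(n) = O(\log n) \cdot C^{O(\log n)}$, which is too weak alone, so one must use it together with (i): the $k$-regular sequence $b$ has $b(n) = O(n^{\alpha})$ for the $\alpha$ equal to the log base $k$ of the spectral radius bound, and the recursion $b(kn+a) \le C b(n) + C$ forces $\alpha \le 1$.

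The main obstacle is making the right-special count argument fully rigorous, or equivalently pinning down exactly why the $k$-regular recursion for $b$ forbids super-linear growth: a generic $(\Enn,k)$-regular sequence can grow like $n^{\log_k 2}$ or faster, so linearity is genuinely a property of the particular recursion satisfied by the subword complexity of an automatic sequence, not a formal consequence of $k$-regularity alone. I would resolve this by establishing the inequality $b(kn+a) \le k\, b(n) + O(1)$ directly from the morphic description (each factor of length $kn+a$ sits inside the morphic image of a factor of length $n + O(1)$, and there are at most $k$ choices of offset), and then observing that any non-negative sequence with $b(kn+a) \le k b(n) + O(1)$ is $O(n)$ by unwinding the recursion along the base-$k$ digits of $n$.
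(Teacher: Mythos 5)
Your final route (the uniform-morphism recursion) is correct in outline, but it is a genuinely different argument from the paper's, and two of the three routes you sketch have real gaps. The paper does not go through Cobham's morphic representation or right-special factors at all: it takes the regular language $S'$ of base-$k$ encodings of pairs $(n,I)$ in which $I$ is the position of the \emph{first} occurrence of some length-$n$ factor (so the number of such $I$ for fixed $n$ is exactly $b(n)$), and applies the pumping lemma. If $b(n)$ were not $O(n)$, some first-occurrence position $I$ would have a representation longer than that of $n$ by more than the pumping constant; pumping inside the trailing portion of $(I)_k$ only inserts padding symbols into the first component, so $n$ is unchanged while $I$ takes infinitely many values, yielding infinitely many distinct factors of length $n$ --- absurd. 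This is a few lines once the regularity of $S'$ (already established for the $k$-regularity of the complexity function) is in hand, and it is the intended payoff of the paper's machinery; your route instead re-derives Cobham's 1972 bound essentially by Cobham's own method.

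As for the gaps: (1) the claim that a $k$-automatic sequence has only boundedly many right-special factors of each length is true, but the standard proof (Cassaigne) \emph{deduces} it from $b(n)=O(n)$, so invoking it here is circular unless you make the ``heavily constrained'' argument precise, which you have not; (2) the $k$-regular growth bound gives only $b(n)=O(n^{c})$ for some $c$ and cannot by itself force $c\le 1$, as you yourself concede; (3) the recursion you finally settle on should be $b(m)\le k\,b(\lceil m/k\rceil+1)$ (each length-$m$ factor lies in the image of a factor of length $\lceil m/k\rceil+1$ at one of $k$ offsets), not $b(kn+a)\le k\,b(n)+O(1)$: the latter does not follow, because converting $b(n+2)$ into $b(n)$ costs a multiplicative constant $|\Delta|^{2}$ that, compounded over $\log_k n$ levels, destroys linearity. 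The correct form iterates cleanly --- after $O(\log_k n)$ steps the argument drops to a bounded length and the accumulated factor is $k^{\log_k n+O(1)}=O(n)$. With that correction your morphic route is a complete proof, just not the paper's.
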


\begin{proof} 
Let $\bf x$ be a $k$-automatic sequence.
By Theorem~\ref{logic} we know
that the base-$k$ encoding $S'$ of
\begin{eqnarray*}
S & =& \lbrace (n,I) \ :  \ 
\text{for all $j<I$ the
factor of length $n$ starting at position $j$} \\
&& \quad \quad  \text{is different from the one
starting at position $I$} \rbrace
\end{eqnarray*}
is a regular language.

Suppose that the
factor complexity of $\bf x$ is not $O(n)$.  Then for every $L$ there exists some
pair $(n,I) \in S$ such that the length of the canonical encoding of $I$ is longer
than that of $n$ by at least $L$ digits.  So in $S'$ there is some word of
the form $(n)_k {\tt B}^{\geq L} \times (I)_k$, 
where $(u)_k$ denotes the canonical
encoding of $u$ in base $k$ and $\times$ is how we join separate
components to form a word.

Since the length of $(I)_k$ is very
much longer than that of $(n)_k$, we can apply the pumping lemma to this
word, where we only pump in the portion of $(I)_k$ that is longer than
$(n)_k$.  Hence when we pump, we only add ${\tt B}$'s to the first component,
and so its value remains unchanged.  In this way by pumping we obtain
infinitely many values $I'$ such that $(n,I') \in S$.  In other words,
there are infinitely many distinct factors of length $n$, which is
clearly absurd.  The contradiction proves the result.
\end{proof}

In a similar manner we can prove that all the quantities in Theorem~\ref{nine}
are either linearly bounded, or unbounded.

\section{Other numeration systems}

All our results transfer, {\it mutatis mutandis}, to the setting of other
numeration systems where addition can be performed on numbers using a
transducer that processes numbers starting with the least significant
digit.

A (generalized) numeration system is given by an increasing sequence of integers 
$U=(U_i)_{i\ge0}$ such that $U_0=1$ and $C_U:=\lim_{i\to +\infty} U_{i+1}/U_i$ exists and is finite. 
Then the canonical $U$-representation of $n$ (with least significant digit first), which is denoted by $(n)_U$,
is the unique finite word $w$ over the alphabet $\Sigma_U=\{0,\ldots,C_U-1\}$ not ending with $0$ and satisfying
$n=\sum_{i=0}^{|w|-1}w[i]\,U_i\ \text{ and }\ \forall t\in\{0,\ldots,|w|-1\},\ \sum_{i=0}^t w[i]\,U_i<U_{t+1}.$
The notion of $k$-automatic sequence extends naturally to this context: 
an infinite sequence {\bf x} is said to be $U$-automatic if it is computable by a finite automaton
taking as input the $U$-representation $(n)_U$ of $n$, and
having ${\bf x}[n]$ as the output associated with the last state encountered.

A numeration system $U$ is called {\it linear} if $U$ satisfies a linear recurrence relation over $\mathbb Z$.
A Pisot system is a linear numeration system $U$ whose characteristic polynomial is the 
minimal polynomial of a Pisot number. Recall that a Pisot number is an algebraic integer greater 
than 1, all of whose conjugates have moduli less than 1. 
For example, all integer base numeration systems and the Fibonacci numeration system are Pisot systems.
Frougny and Solomyak \cite{Frougny&Solomyak:1996}
proved that addition is $U$-recognizable within all Pisot systems $U$, i.e., it can be performed by a finite letter-to-letter transducer reading $U$-representations with least significant digit first.
Bruy\`ere and Hansel \cite{Bruyere&Hansel:1997} then proved the following 
logical characterization of $U$-automatic sequences for Pisot systems: 
a sequence is $U$-automatic if and only if it is $U$-definable,
i.e., it is expressible as a predicate of $\langle \Enn, +, V_U \rangle$, 
where $V_U(n)$ is the smallest $U_i$ occurring in $(n)_U$ with a nonzero coefficient. 
Therefore, if $U$ is a Pisot system, any combinatorial property of $U$-automatic words that can be 
described by a predicate of $\langle \Enn, +, V_U \rangle$ is decidable. 

The notion of $(R,k)$-regular sequences extends to Pisot numeration
systems:  an infinite sequence ${\bf x}$ is said to be {\it $(R,U)$-regular} if
the series $\sum_{n \geq 0}  {\bf x}[n] (n)_U$ is an
$R$-recognizable series. Thus we obtain 

\begin{theorem} \label{the:Fibonacci}
Let $U$ be a Pisot numeration system and let ${\bf x}$ be any $U$-automatic word.
The following sequences are $U$-automatic:

\begin{itemize}
\item[(a)] $a(n) = 1$ if there is a square beginning at (centered at, ending at)
position $n$ of $\bf x$, $0$ otherwise;
\item[(b)] $b(n) = 1$ if there is a palindrome beginning at (centered at,
ending at) position $n$ of $\bf x$, $0$ otherwise;
\item[(c)] $c(n) = 1$ if there is an unbordered factor beginning at
(centered at, ending at) position $n$ of $\bf x$, $0$ otherwise.
\end{itemize}

The following sequences are $(\Enni,U)$-regular: 

\begin{itemize}
\item[(a)] The number of distinct square factors
beginning at (centered at, ending at)
position $n$ of $\bf x$;
\item[(b)] The number of distinct palindromic
factors beginning at (centered at,
ending at) position $n$ of $\bf x$, $0$ otherwise;
\item[(c)] The number of distinct unbordered factors
beginning at
(centered at, ending at) position $n$ of $\bf x$, $0$ otherwise.
\end{itemize}

\end{theorem}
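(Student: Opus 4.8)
The plan is to mirror the proofs already carried out in Sections~\ref{logicsec} and \ref{applications} for the integer base case, replacing the structure $\langle \Enn, +, V_k \rangle$ throughout by $\langle \Enn, +, V_U \rangle$. The crucial ingredients that make this transfer possible are already assembled in this section: by Frougny and Solomyak, addition of $U$-representations (read least-significant-digit first) is realized by a letter-to-letter transducer, and by Bruy\`ere and Hansel, a sequence is $U$-automatic iff it is $U$-definable, so any predicate built from quantifiers, logical connectives, $+$, comparisons, and indexing into a $U$-automatic word $\bf x$ defines a $U$-automatic property. First I would check that the analogue of Theorem~\ref{logic} holds verbatim in this setting: the operations used in its proof --- digit-by-digit addition/subtraction with carry, comparison, and lookup of $\mathbf{x}[i]$ via simulation of the automaton for $\bf x$ --- are all available because addition is $U$-recognizable and because $\bf x$ is $U$-automatic; nondeterminism implements $\exists$ and subset construction plus complementation implements $\forall$. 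Once this is in hand, the three $U$-automaticity claims (a)--(c) follow immediately: "there is a square beginning at position $n$'' is $\exists l \geq 1\ \forall t\ (0 \leq t < l \implies \mathbf{x}[n+t] = \mathbf{x}[n+l+t])$, and the "centered at'' and "ending at'' variants, as well as the palindrome and unbordered-factor versions, are the obvious analogues of the predicates written out explicitly earlier in the paper for the base-$k$ case.

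For the $(\Enni, U)$-regular claims, I would reuse the enumeration machinery of Section~\ref{applications}, but I first need the correct notion of $(R,U)$-regularity, which the excerpt supplies: $\bf x$ is $(R,U)$-regular iff $\sum_{n \geq 0} \mathbf{x}[n]\, (n)_U$ is an $R$-recognizable series. The key structural fact to establish is the $U$-analogue of Theorem~\ref{one} (and Theorem~\ref{onen}): that counting the number of accepting paths of an NFA (or NFA-$\epsilon$) on input $(n)_U$, equivalently counting elements of a regular language projecting onto $(n)_U$, produces an $\Enn$-recognizable (resp.\ $\Enni$-recognizable) series in the sense appropriate to $U$. The proofs of Theorems~\ref{one} and \ref{onen} are purely autom-theoretic and never used anything special about base $k$ beyond having a fixed finite digit alphabet, so they carry over unchanged; the only point needing care is that the "padding'' conventions (trailing zeros in reversed $U$-representations) behave well, which is exactly the role played by Lemmas~\ref{lz} and \ref{tz} and is a formal property of recognizable series over $\Sigma_U^*$, independent of the numeration system. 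With this in place, each of (a)--(c) in the regular list is handled by the same recipe used for distinct factors of length $n$: express the relevant set of pairs $(n, i)$ --- where $i$ ranges over first occurrences of the object in question --- as a regular language via the $U$-analogue of Theorem~\ref{logic}, apply a transducer that rewrites trailing zeros of the $n$-component as a padding symbol $\tt B$ and discards representations ending in $[\mathtt{B},0]$ so that each $n$ gets a canonical encoding while each valid $i$ is counted once, and then invoke the $U$-versions of Theorems~\ref{one}/\ref{onen} together with the series-based definition of $(\Enni,U)$-regularity.

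I expect the main obstacle to be not any single deep step but the verification that the entire chain of reductions --- the logical decidability machinery, the path-counting characterizations of recognizable series, the leading/trailing-zero normalization lemmas, and the transducer manipulations --- genuinely depends only on (i) a fixed finite digit alphabet $\Sigma_U$, (ii) $U$-recognizability of addition, and (iii) the Bruy\`ere--Hansel logical characterization, and not on any arithmetic peculiarity of integer bases. Concretely, the subtlest point is that subtraction and comparison of $U$-representations must also be transducer-computable reading least-significant-digit-first; this follows from $U$-recognizability of addition (subtraction is addition run "backwards'' as a relation, and comparison is definable from $+$), but it should be stated carefully since it is used implicitly whenever the predicates above involve expressions like $j + n - l + i$ or the inequality $1 \leq l \leq n/2$ (the latter requiring that division-by-$2$-style relations, or rather $l + l \leq n$, be expressible --- which they are, using only $+$ and $\leq$). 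Once one is satisfied that every tool invoked in Sections~\ref{logicsec}--\ref{applications} is "base-agnostic'' in this precise sense, the theorem follows with no new ideas, which is exactly why the paper states it with the phrase \emph{mutatis mutandis}.
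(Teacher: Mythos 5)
Your proposal is correct and follows essentially the same route as the paper, which in fact gives no explicit proof of this theorem beyond the remark that the earlier results transfer \emph{mutatis mutandis} once one has the Frougny--Solomyak theorem (addition in a Pisot system is computable by a letter-to-letter transducer reading least-significant digit first) and the Bruy\`ere--Hansel logical characterization of $U$-automatic sequences. Your write-up simply makes explicit the verification that the logical machinery, the path-counting characterizations of recognizable series, and the padding/normalization lemmas depend only on those two facts plus a finite digit alphabet, which is exactly the intended argument.
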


Berstel showed that the cardinality of the set of unnormalized
Fibonacci representations is Fibonacci-regular \cite{Berstel:2001},
a result also obtained (but not published) by the third author about
the same time.  In analogy with Theorem~\ref{ns} we have

\begin{theorem}
The number of unnormalized representations of $n$ in a Pisot
numeration system $U$ is $(\Enni,U)$-regular.
\end{theorem}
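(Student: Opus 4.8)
The plan is to argue exactly as in the proof of Theorem~\ref{ns}, the only new ingredient being a normalization device tailored to the numeration system $U$. Recall that, since $U$ is a Pisot system, Frougny and Solomyak \cite{Frougny&Solomyak:1996} proved that addition is realized by a finite letter-to-letter transducer reading $U$-representations with the least significant digit first; the Pisot hypothesis is precisely what keeps the carries inside a bounded window, so that the process is finite-state. From this it follows by the usual arguments that \emph{normalization} is also finite-state: there is a finite transducer which, reading a word over $\Sigma_U$ from its least significant digit, outputs the canonical $U$-representation of the same integer, the (bounded) discrepancy between the lengths of the two words being absorbed by padding with trailing $0$'s or by $\epsilon$-moves. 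I would take this transducer as the one ``new'' building block; in the integer-base case it is trivial (carries are local), which is why it did not need to be mentioned in Theorem~\ref{ns}.

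Granting this, I would build a transducer $T$ that, on input $(n)_U$, nondeterministically guesses, digit by digit, a word $v$ over $\Sigma_U$ not ending in $0$, runs it through the normalization transducer above, verifies that the result agrees with the input $(n)_U$, and, upon success, emits the guessed word. Length mismatches are handled exactly as in the proof of Theorem~\ref{ns}: if $v$ is shorter than the input it is padded with trailing $0$'s, and if it is longer (which, over the bounded alphabet $\Sigma_U$, can happen only by a bounded amount) the remaining input symbols are consumed by $\epsilon$-transitions. By construction, the set of outputs of $T$ on a canonical input $(n)_U$ is exactly the set of representations $v$ of $n$ over $\Sigma_U$ not ending in $0$ --- that is, the unnormalized representations of $n$ --- and these are pairwise distinct words, so $|T((n)_U)| = b(n)$; on a non-canonical input word the verification can never succeed (a normalized word is always canonical), so $T$ produces nothing there. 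Thus the formal series $w \mapsto |T(w)|$ over $\Sigma_U^*$ is precisely $\sum_{n \geq 0} b(n)\,(n)_U$.

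To conclude, I would invoke Theorem~\ref{onen} with $\Sigma = \Sigma_U$ --- that theorem is pure automaton theory and does not refer to any numeration system: since $T$ has inputs of a single letter or $\epsilon$ on every transition and outputs of a single letter on every transition, the series $w \mapsto |T(w)|$ is $\Enni$-recognizable. By the definition of $(\Enni,U)$-regularity --- namely, that $\sum_{n\geq 0} b(n)\,(n)_U$ be an $\Enni$-recognizable series --- it follows that $(b(n))_{n \geq 0}$ is $(\Enni,U)$-regular, as claimed. (Since $b(n)$ is finite for every $n$, one could even deduce $(\Enn,U)$-regularity by appealing, as in Theorem~\ref{ns}, to Corollary~\ref{cor}, but the weaker statement above is all that is asked.)

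The main obstacle is the normalization step: everything rests on the existence of a genuine finite, least-significant-digit-first transducer that normalizes arbitrary $\Sigma_U$-words in an \emph{arbitrary} Pisot system, and this is exactly the nontrivial content behind \cite{Frougny&Solomyak:1996}. The rest --- matching up lengths, excluding trailing zeros, and arranging the construction (by routing it through a regular language over a product alphabet, in the manner of the proof of Theorem~\ref{one}) so that distinct accepting computations of $T$ correspond to distinct guessed words $v$ and hence $|T((n)_U)|$ really equals $b(n)$ --- is routine and is carried out just as in the proof of Theorem~\ref{ns}.
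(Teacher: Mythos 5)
Your proposal is correct and follows exactly the route the paper intends: the paper offers no separate argument for this theorem beyond the phrase ``in analogy with Theorem~\ref{ns},'' and your write-up is precisely that analogy, with the Frougny--Solomyak letter-to-letter normalization transducer correctly identified as the one new ingredient needed to make the guess-normalize-and-compare construction finite-state in an arbitrary Pisot system, followed by the same appeal to Theorem~\ref{onen} and the definition of $(\Enni,U)$-regularity.
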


\section{Closing remarks}

It may be worth noting that the explicit constructions of automata we have
given also imply bounds on the smallest example of (or counterexample to) the
properties we consider.  The bounds are essentially given by a tower
of exponents whose height is related to the number of alternating
quantifiers.  For example,

\begin{theorem}
Suppose $\bf x$ and $\bf y$ are $k$-automatic sequences generated by
automata with at most $q$ states.  If the set of factors of $\bf x$
differs from the set of factors of $\bf y$, then there exists a factor
of length at most $2^{2^{2^{2q^2}}}$ that occurs in one word but not
the other.
\end{theorem}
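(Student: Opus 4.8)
The plan is to re-run the decidability argument for equality of factor sets given earlier in this section, but this time bounding the number of states of each automaton produced along the way, and then to apply the elementary fact that a nonempty automaton with $N$ states accepts some word of length less than $N$.

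Recall that $\mathrm{Fac}({\bf x}) \neq \mathrm{Fac}({\bf y})$ holds if and only if (after possibly interchanging $\bf x$ and $\bf y$) the sequence $\bf x$ has a factor not occurring in $\bf y$, that is,
\[
\exists i\ \exists n \geq 1\ \forall j\ \exists t\ \bigl( (0 \leq t < n) \,\wedge\, a(i+t) \neq b(j+t) \bigr).
\]
Starting from automata with at most $q$ states for $\bf x$ and $\bf y$ that read base-$k$ representations least significant digit first, as is customary for these constructions, I would first build, exactly as in the proof of Theorem~\ref{logic}, an NFA $B_1$ over $\Sigma_k^3$ that on input $(i,j,n)$ guesses the digits of $t$, forms $i+t$ and $j+t$ on the fly (with carries, and then propagating the carry through trailing zeros), simulates the two sequence-automata on those two sums, simultaneously checks $t < n$, and accepts iff $a(i+t) \neq b(j+t)$. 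Folding the two carry bits and the $t$-versus-$n$ comparison into the product of the two sequence-automata, this can be done with at most $2q^2$ states; $B_1$ then recognizes exactly the triples $(i,j,n)$ for which the length-$n$ factor of $\bf x$ at position $i$ differs from the length-$n$ factor of $\bf y$ at position $j$.

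Now come the two quantifier alternations, each costing a subset construction. To implement $\forall j$ I would determinize and complement $B_1$ (first subset construction), obtaining a DFA over $\Sigma_k^3$ with at most $2^{2q^2}$ states that recognizes the triples for which the two factors agree; projecting away the $j$-component gives an NFA over $\Sigma_k^2$, with the same state bound, recognizing the pairs $(i,n)$ such that the length-$n$ factor of $\bf x$ at $i$ does occur somewhere in $\bf y$; determinizing and complementing once more (second subset construction) yields a DFA with at most $2^{2^{2q^2}}$ states recognizing $\{(i,n) : {\bf x}[i..i+n-1] \notin \mathrm{Fac}({\bf y})\}$. The outer $\exists i$ is free: projecting away the $i$-component produces an NFA $B$ over $\Sigma_k$, still with at most $2^{2^{2q^2}}$ states, whose language is precisely the set of lengths $n$ for which $\bf x$ has a factor of length $n$ absent from $\bf y$ (discarding $n = 0$ is a trivial modification). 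The symmetric construction gives an NFA $B'$ of the same size for factors of $\bf y$ absent from $\bf x$.

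Finally, if $\mathrm{Fac}({\bf x}) \neq \mathrm{Fac}({\bf y})$, then at least one of $L(B)$, $L(B')$ is nonempty, hence contains a word of length less than $2^{2^{2q^2}}$; reading that word as a base-$k$ integer gives a length $n < k^{2^{2^{2q^2}}}$ for which some factor of that length occurs in exactly one of $\bf x$, $\bf y$, and this is at most $2^{2^{2^{2q^2}}}$ (for $k = 2$; a larger $k$ only enlarges the constant in the innermost exponent). I expect the one genuinely delicate point to be the state count for $B_1$, namely absorbing the carries and the comparison of $t$ with $n$ into the product automaton so as to reach the clean bound $2q^2$; once that is in place, the remainder is the routine double-exponential bookkeeping of the two subset constructions together with the shortest-accepted-word estimate.
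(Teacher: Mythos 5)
The paper itself offers no explicit proof of this theorem --- it is stated in the closing remarks as a consequence of the automaton constructions already given, with the bound described only as ``essentially a tower of exponents whose height is related to the number of alternating quantifiers'' --- and your reconstruction is exactly the intended argument: build the automaton for the predicate $\exists i\,\exists n\,\forall j\,\exists t\,(t<n \wedge a(i+t)\neq b(j+t))$, pay one subset construction per complementation, and finish with the shortest-accepted-word bound. The structure is sound. The one point you flag as delicate is indeed the only real gap: a straightforward implementation of $B_1$ needs the product of the two sequence automata ($q^2$ states) \emph{times} a carry bit for $i+t$, a carry bit for $j+t$, and a flag for the running comparison of $t$ with $n$, i.e.\ roughly $8q^2$ states rather than $2q^2$, and you do not show how to compress this to $2q^2$. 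With the naive count the argument delivers $2^{2^{2^{cq^2}}}$ for a somewhat larger constant $c$, which matches the paper's hedged claim but not the literal constant in the statement; similarly, for $k>2$ the conversion from word length to the integer $n$ adds a $\log_2\log_2 k$ term one level down in the tower, which you gloss over. Neither issue affects the shape of the bound or the validity of the method, but if the theorem is to be proved with the exact constant $2q^2$ as stated, the encoding of $B_1$ (and the $k$-dependence) would need to be tightened or the constant conceded to be approximate.
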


We also note that
a question left open in \cite{Allouche&Rampersad&Shallit:2009}, regarding
the description of the lexicographically least word in the orbit closure
of the Rudin-Shapiro sequence, was recently solved by
Currie \cite{Currie:2010}.

Finally, 
in a recent paper \cite{Shallit:2011}, the third author shows that additional
properties of automatic sequences are deducible by expanding on the techniques
in this paper.  For example, the critical exponent is computable.

\section{Acknowledgments}

We thank Jean-Paul Allouche for his helpful comments.


\begin{thebibliography}{10}

\bibitem{Allouche&Baake&Cassaigne&Damanik:2003}
J.-P. Allouche, M.~Baake, J.~Cassaigne, and D.~Damanik.
\newblock Palindrome complexity.
\newblock {\em Theoret. Comput. Sci.} {\bf 292} (2003), 9--31.

\bibitem{Allouche&Rampersad&Shallit:2009}
J.-P. Allouche, N.~Rampersad, and J.~Shallit.
\newblock Periodicity, repetitions, and orbits of an automatic sequence.
\newblock {\em Theoret. Comput. Sci.} {\bf 410} (2009), 2795--2803.

\bibitem{Allouche&Shallit:1992}
J.-P. Allouche and J.~O. Shallit.
\newblock The ring of $k$-regular sequences.
\newblock {\em Theoret. Comput. Sci.} {\bf 98} (1992), 163--197.

\bibitem{Allouche&Shallit:2003a}
J.-P. Allouche and J.~O. Shallit.
\newblock The ring of $k$-regular sequences, {II}.
\newblock {\em Theoret. Comput. Sci.} {\bf 307} (2003), 3--29.

\bibitem{Allouche&Shallit:2003b}
J.-P. Allouche and J.~Shallit.
\newblock {\em Automatic Sequences: Theory, Applications, Generalizations}.
\newblock Cambridge University Press, 2003.

\bibitem{Berstel:2001}
J.~Berstel.
\newblock An exercise on {Fibonacci} representations.
\newblock {\em RAIRO Inform. Th\'eor. App.} {\bf 35} (2001), 491--498.

\bibitem{Berstel&Reutenauer:2011}
J.~Berstel and C.~Reutenauer.
\newblock {\em Noncommutative Rational Series With Applications}, Vol. 137 of
  {\em Encyclopedia of Mathematics and Its Applications}.
\newblock Cambridge University Press, 2011.

\bibitem{Brown&Rampersad&Shallit&Vasiga:2006}
S.~Brown, N.~Rampersad, J.~Shallit, and T.~Vasiga.
\newblock Squares and overlaps in the {Thue-Morse} sequence and some variants.
\newblock {\em RAIRO Inform. Th\'eor. App.} {\bf 40} (2006), 473--484.

\bibitem{Bruyere&Hansel:1997}
V.~{Bruy\`ere} and G.~Hansel.
\newblock Bertrand numeration systems and recognizability.
\newblock {\em Theoret. Comput. Sci.} {\bf 181} (1997), 17--43.

\bibitem{Bruyere&Hansel&Michaux&Villemaire:1994}
V.~{Bruy\`ere}, G.~Hansel, C.~Michaux, and R.~Villemaire.
\newblock Logic and $p$-recognizable sets of integers.
\newblock {\em Bull. Belgian Math. Soc.} {\bf 1} (1994), 191--238.
\newblock Corrigendum, {\it Bull.\ Belg.\ Math.\ Soc.} {\bf 1} (1994), 577.

\bibitem{Carpi&Dalonzo:2009}
A.~Carpi and V.~D'Alonzo.
\newblock On the repetitivity index of infinite words.
\newblock {\em Internat. J. Algebra Comput.} {\bf 19} (2009), 145--158.

\bibitem{Carpi&DAlonzo:2011}
A.~Carpi and V.~D'Alonzo.
\newblock On factors of synchronized sequences.
\newblock To appear, {\it Theor. Comput. Sci.}, 2011.

\bibitem{Carpi&Maggi:2001}
A.~Carpi and C.~Maggi.
\newblock On synchronized sequences and their separators.
\newblock {\em RAIRO Inform. Th\'eor. App.} {\bf 35} (2001), 513--524.

\bibitem{Cobham:1972}
A.~Cobham.
\newblock Uniform tag sequences.
\newblock {\em Math. Systems Theory} {\bf 6} (1972), 164--192.

\bibitem{Currie:2010}
J.~D. Currie.
\newblock Lexicographically least words in the orbit closure of the
  {Rudin-Shapiro} word.
\newblock {\tt http://arxiv.org/pdf/0905.4923}, 2010.

\bibitem{Currie&Saari:2009}
J.~D. Currie and K.~Saari.
\newblock Least periods of factors of infinite words.
\newblock {\em RAIRO Inform. Th\'eor. App.} {\bf 43} (2009), 165--178.

\bibitem{Fagnot:1997a}
I.~Fagnot.
\newblock Sur les facteurs des mots automatiques.
\newblock {\em Theoret. Comput. Sci.} {\bf 172} (1997),
  67--89.

\bibitem{Frid:2011}
A.~Frid.
\newblock Infinite permutations vs. infinite words.
\newblock In P.~Ambro{\u{z}}, S.~Holub, and Z.~{Mas\'akov\'a}, editors, {\em
  WORDS 2011, 8th International Conference}. Elect. Proc. Theor. Comput. Sci.,
  2011.
\newblock Available at {\tt http://arxiv.org/abs/1108.3616v1}.

\bibitem{Frid&Zamboni:2010}
A.~Frid and L.~Q. Zamboni.
\newblock On automatic infinite permutations.
\newblock Presented at {\it Journ\'ees Montoises}, 2010.

\bibitem{Frougny&Solomyak:1996}
C.~Frougny and B.~Solomyak.
\newblock On representation of integers in linear numeration systems.
\newblock In M.~Pollicott and K.~Schmidt, editors, {\em Ergodic Theory of
  {$\Zee^d$} Actions (Warwick, 1993--1994)}, Vol. 228 of {\em London
  Mathematical Society Lecture Note Series}, pp.  345--368. Cambridge
  University Press, 1996.

\bibitem{Garel:1997}
E.~Garel.
\newblock {S\'eparateurs} dans les mots infinis {engendr\'es} par morphismes.
\newblock {\em Theoret. Comput. Sci.} {\bf 180} (1997), 81--113.

\bibitem{Halava&Harju&Karki&Rigo:2010}
V.~Halava, T.~Harju, T.~{K\"arki}, and M.~Rigo.
\newblock On the periodicity of morphic words.
\newblock In {\em Developments in Language Theory 2010}, Vol. 6224 of {\em
  Lecture Notes in Computer Science}, pp.  209--217. Springer-Verlag, 2010.

\bibitem{Honkala:1986}
J.~Honkala.
\newblock A decision method for the recognizability of sets defined by number
  systems.
\newblock {\em RAIRO Inform. Th\'eor. App.} {\bf 20} (1986), 395--403.

\bibitem{Krieger&Shallit:2007}
D.~Krieger and J.~Shallit.
\newblock Every real number greater than $1$ is a critical exponent.
\newblock {\em Theoret. Comput. Sci.} {\bf 381} (2007), 177--182.

\bibitem{Kuich&Salomaa:1986}
W.~Kuich and A.~Salomaa.
\newblock {\em Semirings, Automata, Languages}.
\newblock Springer-Verlag, 1986.

\bibitem{Leroux:2005}
J.~Leroux.
\newblock A polynomial time {Presburger} criterion and synthesis for number
  decision diagrams.
\newblock In {\em 20th IEEE Symposium on Logic in Computer Science (LICS
  2005)}, pp.  147--156. IEEE Press, 2005.

\bibitem{Mosse:1996a}
B.~{Moss\'e}.
\newblock {Reconnaissabilit\'e} des substitutions et {complexit\'e} des suites
  automatiques.
\newblock {\em Bull. Soc. Math. France} {\bf 124} (1996),
  329--346.

\bibitem{Nicolas&Pritykin:2009}
F.~Nicolas and Yu. Pritykin.
\newblock On uniformly recurrent morphic sequences.
\newblock {\em Internat. J. Found. Comp. Sci.} {\bf 20} (2009), 919--940.

\bibitem{Reznick:1990}
B.~Reznick.
\newblock Some binary partition functions.
\newblock In {\em Analytic Number Theory}, Vol.~85 of {\em Progr. Math.}, pp.
  451--477. {Birkh\"auser}, 1990.

\bibitem{Saari:2008}
K.~Saari.
\newblock {\em On the Frequency and Periodicity of Infinite Words}.
\newblock PhD thesis, University of Turku, Finland, 2008.

\bibitem{Sakarovitch:2009}
J.~Sakarovitch.
\newblock {\em Elements of Automata Theory}.
\newblock Cambridge University Press, 2009.

\bibitem{Salomaa&Soittola:1978}
A.~Salomaa and M.~Soittola.
\newblock {\em Automata-Theoretic Aspects of Formal Power Series}.
\newblock Springer-Verlag, 1978.

\bibitem{Schutzenberger:1962}
M.-P. {Sch\"utzenberger}.
\newblock On a theorem of {R. Jungen}.
\newblock {\em Proc. Amer. Math. Soc.} {\bf 13} (1962), 885--890.

\bibitem{Shallit:2011}
J.~Shallit.
\newblock The critical exponent is computable for automatic sequences.
\newblock In P.~Ambro{\u{z}}, S.~Holub, and Z.~{Mas\'akov\'a}, editors, {\em
  WORDS 2011, 8th International Conference}. Elect. Proc. Theor. Comput. Sci.,
  2011.
\newblock Available at {\tt http://arxiv.org/abs/1104.2303v2}.

\bibitem{Widmer:2011}
S.~Widmer.
\newblock Permutation complexity of the {Thue-Morse} word.
\newblock {\em Adv. in Appl. Math.} {\bf 47} (2011), 309--329.

\end{thebibliography}
\end{document}